%

\documentclass[ejs]{imsart}


\doi{10.1214/154957804100000000}
\pubyear{0000}
\volume{0}
\firstpage{0}
\lastpage{0}

\usepackage{natbib}
\usepackage{graphicx}
\usepackage{amsmath, amsthm, amssymb, amsfonts}
\usepackage{url}
\usepackage{multirow}
\usepackage{rotating}
\usepackage{subcaption}
\usepackage{color} 
\usepackage[colorlinks,citecolor=blue,urlcolor=blue]{hyperref}

\startlocaldefs

\theoremstyle{plain}
\newtheorem{thm}{Theorem}[section]
\theoremstyle{plain}

\theoremstyle{plain}
\newtheorem{lem}{Lemma}[section]
\theoremstyle{plain}

\theoremstyle{plain}
\newtheorem{rem}{Remark}[section]
\theoremstyle{plain}

\setlength{\tabcolsep}{4pt}

\newcommand{\E}{\mathbb{E}}
\newcommand{\pr}{\mathbb{P}}

\newcommand{\sgn}{\mbox{sign}}
\newcommand{\var}{\mbox{var}}

\newcommand{\iid}{\mbox{\scriptsize{i.i.d}}}
\newcommand{\bbI}{\mathbb{I}}
\newcommand{\R}{\mathbb{R}}

\newcommand{\vep}{\varepsilon}
\newcommand{\bvep}{\boldsymbol{\varepsilon}}
\newcommand{\ept}{\epsilon_T}
\newcommand{\heta}{\wh{\eta}}
\newcommand{\bpsi}{\boldsymbol{\psi}}
\newcommand{\bdel}{\boldsymbol{\delta}}
\newcommand{\bchi}{\boldsymbol{\chi}}
\newcommand{\bxi}{\boldsymbol{\xi}}

\newcommand{\bb}{\mathbf{b}}

\newcommand{\bE}{\mathbf{E}}

\newcommand{\bg}{\mathbf{g}}
\newcommand{\bp}{\mathbf{p}}

\newcommand{\bu}{\mathbf{u}}

\newcommand{\bx}{\mathbf{x}}

\newcommand{\cC}{\mathcal{C}}
\newcommand{\cD}{\mathcal{D}}
\newcommand{\cE}{\mathcal{E}}

\newcommand{\cI}{\mathcal{I}}

\newcommand{\cN}{\mathcal{N}}

\newcommand{\cR}{\mathcal{R}}
\newcommand{\cT}{\mathcal{T}}
\newcommand{\cU}{\mathcal{U}}
\newcommand{\cX}{\mathcal{X}}

\newcommand{\cTj}{\mathcal{T}^{\mbox{\tiny{Jirak}}}}
\newcommand{\cTh}{\mathcal{T}^{\mbox{\tiny{HH}}}}
\newcommand{\cTlin}{\mathcal{T}^{\mbox{\tiny{linear}}}}
\newcommand{\cTscan}{\mathcal{T}^{\mbox{\tiny{scan}}}}
\newcommand{\cTeh}{\mathcal{T}^{\mbox{\tiny{EH}}}}
\newcommand{\cTs}{\mathcal{T}^{\mbox{\tiny{SBS}}}}
\newcommand{\uDel}{\underline{\Delta}}
\newcommand{\cDzh}{\wt{\cD}}
\newcommand{\cTzh}{\wt{\cT}}
\newcommand{\pizh}{\wt{\pi}}

\newcommand{\inner}[2]{\langle #1, #2 \rangle}
\def\wh{\widehat}
\def\wt{\widetilde}
\newcommand{\beq}{\begin{eqnarray}}
\newcommand{\eeq}{\end{eqnarray}}
\newcommand{\beqs}{\begin{eqnarray*}}
\newcommand{\eeqs}{\end{eqnarray*}}
\endlocaldefs

\begin{document}

\begin{frontmatter}

\title{Change-point detection in panel data via double CUSUM statistic}
\runtitle{Change-point detection in panel data}


\author{\fnms{Haeran} \snm{Cho}\corref{}\ead[label=e1]{haeran.cho@bristol.ac.uk}}
\address{School of Mathematics, University of Bristol, UK.}

\runauthor{H. Cho}

\begin{abstract}
In this paper, we consider the problem of (multiple) change-point detection in panel data. We propose the double CUSUM statistic which utilises the cross-sectional change-point structure by examining the cumulative sums of ordered CUSUMs at each point. The efficiency of the proposed change-point test is studied, which is reflected on the rate at which the cross-sectional size of a change is permitted to converge to zero while it is still detectable. Also, the consistency of the proposed change-point detection procedure based on the binary segmentation algorithm, is established in terms of both the total number and locations (in time) of the estimated change-points. Motivated by the representation properties of the Generalised Dynamic Factor Model, we propose a bootstrap procedure for test criterion selection, which accounts for both cross-sectional and within-series correlations in high-dimensional data. The empirical performance of the double CUSUM statistics, equipped with the proposed bootstrap scheme, is investigated in a comparative simulation study with the state-of-the-art.
As an application, we analyse the log returns of S\&P 100 component stock prices over a period of one year.
\end{abstract}


\begin{keyword}
\kwd{change-point analysis}
\kwd{high-dimensional data analysis}
\kwd{CUSUM statistics}
\kwd{binary segmentation}
\end{keyword}

\received{\smonth{11} \syear{2015}}


\end{frontmatter}

\section{Introduction}
\label{sec:intro}

Multivariate, possibly high-dimensional observations over time have emerged
in many fields, such as economics, finance, natural science, engineering and humanities,
thanks to the advances of computing technologies \citep{fan2011}.
Multivariate data observed in practical problems often appear nonstationary in the sense that 
it is natural to let some quantities or parameters involved in the model to be time-varying.
Arguably, the simplest departure from assuming stationarity
is to operate under the assumption of piecewise stationarity,
which allows more flexibility as well as providing interesting insights into the data 
with regards to the structural change-points.
Besides, in the case of time series analysis, it enables (short-term) prediction of the future process values,
by treating the last estimated segment as being stationary.
Throughout the paper, the term ``multiple change-point detection'' 
is used interchangeably with ``segmentation''.


Panel data models are frequently adopted to analyse high-dimensional data involving measurements over time.
In this paper, we focus on the problem of detecting (possibly) multiple change-points
in the mean of panel data, where $n$, the dimensionality of the data, may increase with the number of observations $T$.
The panel model is presented as
\begin{eqnarray*}
x_{j, t} = f_{j, t} + \vep_{j, t}, \qquad t=1, \ldots, T; \ j=1, \ldots, n,
\end{eqnarray*}
where $\{f_{j, t}\}_{t=1}^T, \, j=1, \ldots, n$ are piecewise constant signals
which share an unknown number of change-points at unknown locations.

CUSUM statistics have been widely adopted for segmenting both univariate and multivariate data.
For univariate data segmentation, CUSUM statistics are computed over time,
and this series of CUSUMs is examined to locate a change-point, 
often as where its maximum in the absolute value is attained.
Combined with a binary segmentation (BS) algorithm,
the CUSUM statistics can consistently detect multiple change-points in a recursive manner
(see e.g., \cite{vostrikova1981}, \cite{venkatraman1992} and \cite{cho2012}).

For segmenting $n$-dimensional panel data, 
we may apply the above procedure to each univariate component series separately,
and then prune down the estimated change-points by identifying those 
detected for the identical change-point across the panel.
However, such pruning may be difficult to accomplish
even in moderately large dimensions, due to the estimation bias present in each change-point estimate.
Besides, this approach does not take into account, and thus benefit from, the cross-sectional nature of change-points
(that they are shared across the panel)
which may lead to loss of power in change-point detection.
Instead, we propose to segment the $n$-dimensional data simultaneously
by searching for change-points from the {\em aggregation} of $n$ series of CUSUM statistics,
rather than from individual CUSUM series separately.

\subsection{Literature review}
\label{sec:lit}

Let $\cC_b$ denote a CUSUM operator which
takes $x_{j, t}$ over a generic interval $t \in [s, e]$ with $1 \le s < e \le T$ as an input and returns
\beq
\cX^j_{s, b, e} &=& \cC_b(\{\sigma_j^{-1} x_{j, t}\}_{t=s}^e) \nonumber
\\
&=& \sqrt{\frac{e-b}{(e-s+1)(b-s+1)}}\sum_{t=s}^b \frac{x_{j, t}}{\sigma_j} -
\sqrt{\frac{b-s+1}{(e-s+1)(e-b)}}\sum_{t=b+1}^e \frac{x_{j, t}}{\sigma_j}
\nonumber
\\
&=& \frac{1}{\sigma_j}\sqrt{\frac{(b-s+1)(e-b)}{e-s+1}}\left( \frac{1}{b-s+1}\sum_{t=s}^b x_{j, t} -
\frac{1}{e-b}\sum_{t=b+1}^e x_{j, t} \right)
\label{eq:cusum:two}
\eeq
for $b=s, \ldots, e-1$, with a suitably chosen scaling constant $\sigma_j$.

Assuming the presence of at most one change-point, 
some change-point tests for panel data have been proposed, 
based on the principle of high-dimensional CUSUM series aggregation.
Note that for single change-point detection, $s=1$ and $e=T$.
\cite{zhang2010} considered a change-point test with the test statistic
$\cT^{\mbox{\tiny{ZSJL}}}_{1, T} = \max_{b \in [1, T)}\sum_{j=1}^n(\cX^j_{1, b, T})^2$
and the test criterion derived under i.i.d. Gaussian setting. 
A similar change-point statistic was considered by \cite{horvath2012},
\beqs
\cTh_{1, T} = \max_{b \in [1, T)} \frac{1}{\sqrt{n}}\frac{b(T-b)}{T^2}\sum_{j=1}^n\{(\cX^j_{1, b, T})^2 - 1\},
\eeqs
and the limit distribution of the test statistic was derived for independent panel data. 
\cite{enikeeva2014} proposed a change-point test for panel data with i.i.d. Gaussian noise
that combines the {\em linear} statistic, which is constructed similarly as $\cTh_{1, T}$, 
and the {\em scan} statistic, which is aimed at detecting a cross-sectionally sparse change.
More specifically, the test statistics are 
\beq
\cTlin_{1, T}  &=& \max_{b \in [1, T)} \frac{1}{H\sqrt{2n}} \sum_{j=1}^n\{(\cX^j_{1, b, T})^2 - 1\},  \mbox{ and } \nonumber
\\
\cTscan_{1, T}  &=& \max_{b \in [1, T)} \max_{1 \le m \le n} \frac{1}{T_m\sqrt{2m}} \sum_{j=1}^m\{(\cX^{(j)}_{1, b, T})^2 - 1\}, \nonumber
\eeq
where $|\cX^{(1)}_{1, b, T}| \ge \ldots \ge |\cX^{(n)}_{1, b, T}|$ denote the ordered CUSUM statistics at each $b \in [1, T)$. 
With $H$ and $T_m$ acting as critical values (chosen as approximate quantiles of $\chi^2$-distributions),
their proposed change-point test is $\cTeh = (\cTlin_{1, T} > 1) \vee (\cTscan_{1, T} > 1)$ (where $a \vee b = \max(a, b)$).

Allowing for both temporal and cross-sectional dependence, \cite{jirak2014} proposed a test statistic obtained from 
taking the pointwise maximum of the multiple CUSUM series:
\beq
\cTj_{1, T} = \max_{b \in [1, T)}\max_{1 \le j \le n}\sqrt{\frac{b(T-b)}{T}}|\cX^j_{1, b, T}|, \label{eq:jirak}
\eeq
which is compared against a threshold drawn from an extreme value distribution of Gumble type or bootstrap.

Note that with the exception of $\cTscan_{1, T}$, 
the CUSUM aggregation methods above are not adaptive to the underlying structure of CUSUM statistic values at each $b$,
in the sense that they take either pointwise maximum or sum of (squared) CUSUMs.
Empirical studies conducted in \cite{cho2015} showed that such approaches may lead to inferior performance
in detecting and locating change-points in high-dimensional settings. 
Instead, they proposed the Sparsified Binary Segmentation (SBS) where the change-point test
$\cTs_{1, T}(\pi_T)  > 0$ was based on the following ``sparsified'' or ``thresholded'' test statistic
\beqs
\cTs_{1, T}(\pi_T) = \max_{b \in [1, T)} \sum_{j=1}^n |\cX^j_{1, b, T}| \cdot \bbI(|\cX^j_{1, b, T}| > \pi_T)
\eeqs
($\bbI(\cE) = 1$ if and only if the event $\cE$ is true),
with an appropriately bounded threshold $\pi_T$ chosen to guarantee that
$|\cX^j_{s, b, e}| < \pi_T$ uniformly over $j \in \{1, \ldots, n\}$ and $\{(s, b, e); \, 1 \le s \le b < e \le T\}$ 
with probability converging to one under the null hypothesis of no change-point.
The intuition behind the construction of $\cTs_{1, T}(\pi_T)$ is that, 
irrelevant contribution from those components without any change 
is reflected as small values of $|\cX^j_{1, b, T}|$ and hence is disregarded through the thresholding step, 
while large CUSUMs formed in the vicinity of the true change-points
are summed up to strengthen the meaningful contribution from the corresponding components. 
Combining the BS procedure with the thresholded test statistic, 
the consistency of the SBS algorithm for multiple change-point detection was established.
However, it is empirically observed that the higher autocorrelations in $\vep_{j, t}$ is, 
the larger $\pi_T$ is required to grant the consistency in change-point detection,
which amounts to selecting $n$ thresholds for segmenting $n$-dimensional panel data. 


\subsection{Outline of the paper}
\label{sec:rest}

In this paper, we propose the {\em double} CUSUM statistic which
accomplishes the high-dimensional CUSUM series aggregation
through data-driven partitioning of the panel data at each point,
while avoiding the difficulties involved in selecting (possibly $n$) thresholds 
that apply directly to individual CUSUMs for establishing change-point detection consistency. 

The rest of the paper is organised as follows.
We describe the double CUSUM statistic in details in Section \ref{sec:method}.
In Section \ref{sec:theor}, we establish the consistency of the change-point test
based on the double CUSUM statistics, as well as investigating its efficiency
in comparison with the tests discussed in Section \ref{sec:lit}.
Also, the double CUSUM Binary Segmentation algorithm is formulated and
its consistency in multiple change-point detection is studied. 
Section \ref{sec:choice} discusses the choice of important quantities 
including the test criterion, for which a bootstrap procedure is introduced.
We illustrate its performance on simulated datasets in Section \ref{sec:sim} and 
on log returns of S\&P 100 component stock prices in Section \ref{sec:real}.
Section \ref{sec:conc} concludes the paper and the proofs of theoretical results are provided in Section \ref{sec:pf}.
Finally, some auxiliary results and simulation results are reported in the supplementary document \citep{cho2016}.

\section{Double CUSUM statistic}
\label{sec:method}

Recall the panel data model from Introduction
\begin{eqnarray}
\label{eq:panel}
x_{j, t} = f_{j, t} + \vep_{j, t}, \quad t=1, \ldots, T; \ j=1, \ldots, n.
\end{eqnarray}
The noise $\{\vep_{j, t}\}_{t=1}^T$ satisfies 
$\E(\vep_{j, t}) = 0$ for all $j$ and $t$,
and is allowed to be correlated both within-series and cross-sectionally as specified later in Section \ref{sec:theor}. 
The piecewise constant signals $\{f_{j, t}\}_{t=1}^T, \, j=1, \ldots, n$
share $N$ change-points $1<\eta_1 < \ldots < \eta_N<T$ (possibly with unknown $N$).
That is, at each change-point $\eta_r$, there exists an index set 
$\Pi_r = \{j:\, \delta_{j, r} = f_{j, \eta_r+1} - f_{j, \eta_r} \ne 0\} \subset \{1, \ldots, n\}$
with $m_r = |\Pi_r| = \sum_{j=1}^n \bbI(|\delta_{j, r}|>0) \ge 1$
(where $|\mathcal{S}|$ denotes the cardinality of a set $\mathcal{S}$).

Recall the definition of $\cX^j_{s, b, e}$ in (\ref{eq:cusum:two}), which denotes the CUSUM statistic computed on $x_{j, t}$ over a generic interval $t \in [s, e]$ with $1 \le s < e \le T$, for $b = s, \ldots, e-1$.
For change-point detection in panel data, we propose to employ the double CUSUM (DC) statistics 
\begin{eqnarray}
&& \cD_m^\varphi(\{|\cX^{(j)}_{s, b, e}|\}_{j=1}^n) \nonumber \\
&=&
\left\{\frac{m(2n-m)}{2n}\right\}^\varphi
\left(\frac{1}{m} \sum_{j=1}^m |\cX^{(j)}_{s, b, e}| -
\frac{1}{2n-m} \sum_{j=m+1}^n |\cX^{(j)}_{s, b, e}|\right)
\label{double:cusum:one}
\\
&=&
\left\{\frac{m(2n-m)}{2n}\right\}^\varphi \cdot
\frac{1}{m} \sum_{j=1}^m \left(|\cX^{(j)}_{s, b, e}| - \frac{1}{2n-m}\sum_{j=m+1}^n |\cX^{(j)}_{s, b, e}|\right)
\label{double:cusum:two}
\end{eqnarray}
for $b \in [s, e)$ and $m \in \{1, \ldots, n\}$,
where the DC operator $\cD^\varphi_m$ takes the {\em ordered} CUSUM values 
$|\cX^{(1)}_{s, b, e}| \ge |\cX^{(2)}_{s, b, e}| \ge \ldots \ge |\cX^{(n)}_{s, b, e}|$
at each $b$, as its input for some $\varphi\in[0, 1]$.
Then, the test statistic for detecting the presence of a change-point over a given interval $[s, e]$, is derived as
\beqs
\cT^\varphi_{s, e} = \max_{b\in[s, e)}\max_{1 \le m \le n} \cD_m^\varphi(\{|\cX^{(j)}_{s, b, e}|\}_{j=1}^n),
\eeqs
which is compared against a test criterion $\pi^\varphi_{n, T}$.
Once $\cT^\varphi_{s, e} > \pi^\varphi_{n, T}$,
the location of the change-point is identified as where the pointwise maximum (over $m$) of the DC statistics is maximised (over $b$), i.e.,
$$ \heta = \arg\max_{b\in[s, e)}\max_{1 \le m \le n} \cD_m^\varphi(\{|\cX^{(j)}_{s, b, e}|\}_{j=1}^n). $$

To understand the properties of the DC statistic, first consider the case where
the noise $\vep_{j, t}$ is not present in the panel data.
Then the series of DC statistics at each fixed $m$ is always maximised at one of the true change-points within the interval $[s, e)$
and consequently, the maximum over both $b$ and $m$ is guaranteed to be attained at a true change-point.
The formal statement and its proof can be found in Appendix B of the supplementary document.

The key feature of the DC statistic is the ordering of the input series to $\cD^\varphi_m$.
In panel data segmentation, in addition to detecting and locating the change-points in time,
it is also of interest to locate the change in coordinates as well,
which is relatively unexplored in the relevant literature with the exception of \cite{jirak2014}.
Not only such information is useful in the interpretation of detected change-points, 
but also can play an important role in aggregating the high-dimensional CUSUM series efficiently as detailed below.

At a given time point $b$, one way of partitioning the components into those with changes and those without,
is to arrange the modulus of CUSUMs in the decreasing order,
and then to label the components which correspond to the first $m_b$ ($\in \{1, \ldots, n\}$) largest values of $|\cX^{j}_{s, b, e}|$ 
as being likely to have a change-point around $b$.
Note that aggregating the (squared) CUSUMs via pointwise averaging or maximising
implicitly takes $m_b = n$ or $m_b = 1$, respectively.
In constructing $\cTs(\pi_T)$, the choice of $m_b$ is associated with the choice of $\pi_T$, 
i.e., $m_b(\pi_T) = \sum_{j=1}^n\bbI(|\cX^j_{s, b, e}| > \pi_T)$.
The DC statistic provides a data-driven alternative for selecting $m_b$, namely
\beq
\wh{m}^\varphi_b = \arg\max_{1 \le m \le n} \cD^\varphi_m(\{|\cX^{(j)}_{s, b, e}|\}_{j=1}^n). \label{eq:wh:m}
\eeq
While we do not claim that the thus-obtained partitioning 
consistently identify $\Pi_r$ at each detected change-point,
we see in simulation studies reported in Section \ref{sec:sim} that 
the DC statistic performs well in change-point detection, 
by identifying those components that {\em contribute} to change-point detection according to the above $\wh{m}^\varphi_b$.

The RHS of (\ref{double:cusum:two}) indicates that the term $(2n-m)^{-1} \sum_{j=m+1}^n |\cX^{(j)}_{s, b, e}|$
acts as a threshold on $|\cX^{(j)}_{s, b, e}|, \ j=1, \ldots, m$ at each $b$,
and therefore $\cD_m^\varphi(\{|\cX^{(j)}_{s, b, e}|\}_{j=1}^n)$ is essentially
a scaled average of $m$ largest CUSUMs after soft-thresholding,
from which we draw resemblance between $\cT^\varphi_{s, e}$ and $\cTs_{s, e}(\pi_T)$. 
However, since $\cT^\varphi_{s, e}$ involves maximisation over both $b$ and $m$,
we avoid explicitly selecting $\pi_T$ that applies directly to individual CUSUMs while still enjoying the ``sparsifying'' effect of the thresholding step.

Comparing (\ref{eq:cusum:two}) and (\ref{double:cusum:one}),
we can see the connection between the two operators $\cD_m^\varphi$ and $\cC_b$ especially when $\varphi= 1/2$,
as both return the scaled difference between partial averages over two disjoint intervals. 
However, $\cD_m^\varphi$ involves $\{m(2n-m)/(2n)\}^\varphi$ instead of $\{m(n-m)/n\}^\varphi$,
although the input series is of length $n$.
This difference comes from the observation that 
the latter scaling factor may not favour a change-point that is shared by more than $[n/2]$ rows,
and even act as a penalty when $m_r$ is close to $n$,
which is against our intuition on the detectability of a change-point.
By adopting $\{m(2n-m)/(2n)\}^\varphi$,
the DC statistic can be regarded as being computed on the panel data of dimension $2n$,
where there are additional $n$ ``null'' components which are known to have no change-point.
Since $\{m(2n-m)/(2n)\}^\varphi$ is non-decreasing in $m$,
when considering the asymptotic efficiency of the DC statistic-based change-point test,
the cross-sectional size of change accounts for both $|\delta_{j, r}|$, the magnitude of jumps at the change-point,
and $m_r$, its ``density'' (as opposed to the sparsity) across the panel (see Remark \ref{rem:aston:two} for further discussion).

For illustration, we computed the DC statistics from panel data generated with a single change-point of different configurations
with $\varphi$ chosen as detailed in Section \ref{sec:varphi}.
Fixing $n = 250$ and $T = 100$, $\vep_{j, t}$ was simulated as in (N1) of Section \ref{sec:sim:single:model} with $\varrho = 0.2$
(which controls the degree of cross-correlations in $\vep_{j, t}$).
The piecewise constant signals were generated with a change-point at $\eta_1 = T/2$,
where $m_1$ out of $n$ components contained a shift of magnitude randomly drawn from a uniform distribution $\cU(0.75\delta_1, 1.25\delta_1)$
with $(m_1, \delta_1) = ([\log\,n], 0.24)$ and $([0.5n], 0.05)$.
We chose $m_1$ and $\delta_1$ in order to set $\sum_{j \in \Pi_1} \delta_{j, 1}^2$ at approximately the same level.

As shown in Figure \ref{fig:comparison}, the location of the true change-point in time
was accurately identified as where the pointwise maximum of DC statistics was maximised in both settings,
i.e., $\heta_1 = \arg\max_{b\in[1, T)}\max_{1 \le m \le n}\cD^\varphi_m(\{|\cX^{(j)}_{1, b, T}|\}_{j=1}^n)$.
Comparing the two heat maps, where $\cD^\varphi_m(\{|\cX^{(j)}_{1, \heta_1, T}|\}_{j=1}^n)$ was maximised over $m$, 
namely $\wh{m}_1 = \arg\max_{1 \le m \le n} \cD^\varphi_m(\{|\cX^{(j)}_{1, \heta_1, T}|\}_{j=1}^n)$,
was closer to $m_1$ for the larger $\delta_1$. 
This implies that not all components with the changes 
contribute to the detection of a change-point in the presence of noise,
due to small magnitude of the changes,
and using only a subset of $\Pi_1$ may serve the purpose better for change-point detection.

\begin{figure}
\centering
\includegraphics[scale=.25]{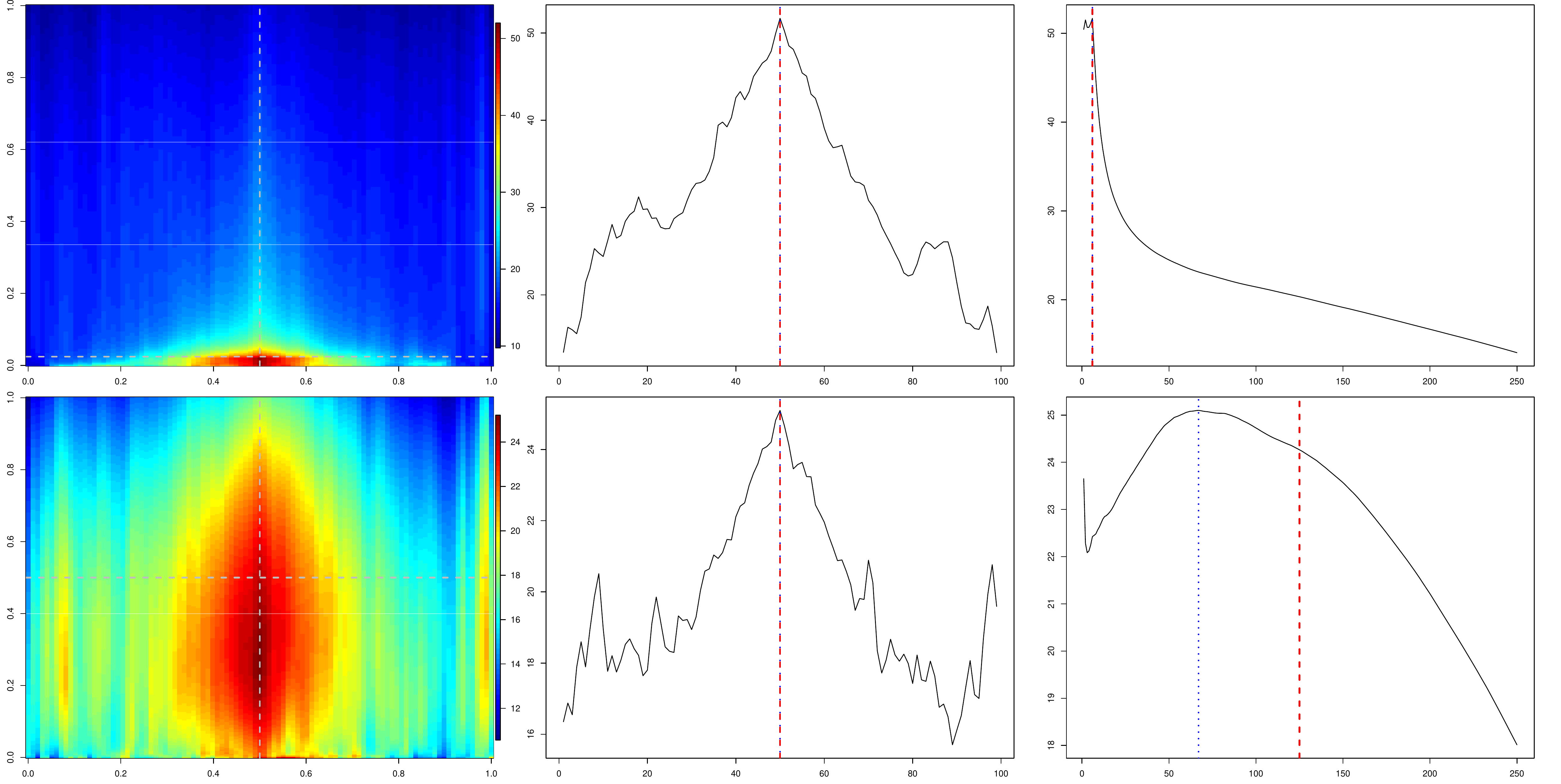}
\caption{$(m_1, \delta_1) = ([\log\,n], 0.24)$ (top) and $(m_1, \delta_1) = ([0.5n], 0.05)$ (bottom);
the heat map of $\cD^\varphi_m(\{|\cX^{(j)}_{1, b, T}|\}_{j=1}^n)$ over $b$ ($x$-axis) and $m$ ($y$-axis) (left),
the pointwise maximum of $\cD^\varphi_m(\{|\cX^{(j)}_{1, b, T}|\}_{j=1}^n)$ over $m$ for $b \in [1, T)$,
with broken lines indicating $\eta_1$ and the dotted ones $\heta_1$ (middle), and
$\cD^\varphi_m(\{|\cX^{(j)}_{1, \boldsymbol{\heta_1}, T}|\}_{j=1}^n), \, 1 \le m \le n$
with broken lines indicating $m_1$ and dotted ones $\wh{m}_1$.}
\label{fig:comparison}
\end{figure}

\vspace{5pt}
\begin{rem}
\label{rem:aston:one}
For high-dimensional change-point analysis, \cite{aston2014} defined the {\em high-dimensional efficiency},
a concept closely related to asymptotic relative efficiency.
Let $\bdel = (\delta_{1, 1}, \ldots, \delta_{n, 1})^\top$.
Then, the high-dimensional efficiency is determined by the rate at which 
the cross-sectional size of change is allowed to converge to zero ($\Vert \bdel \Vert_2 \to 0$) as $T$ and, with it, $n$ increase,
such that the power of the change-point test is strictly between the size and one.
They further investigated the problem of single change-point detection
using a class of change-point statistics
obtained from (i) first projecting the panel data with respect to a projection vector $\bp \in \R^n$,
and (ii) computing the series of CUSUM statistics from the univariate series $\{\langle \bx_t, \bp \rangle\}_{t=1}^T$,
with $\bx_t = (x_{1, t}, \ldots, x_{n, t})^\top$.
Assuming that $\bvep_t = (\vep_{1, t}, \ldots, \vep_{n, t})^\top$ is independent over time and denoting $\Sigma_{\vep} = \var(\bvep_t)$,
the oracle projection vector with the optimal high-dimensional efficiency was given by maximising
$(\bp^\top\Sigma_{\vep}\bp)^{-1}|\langle \bdel, \bp \rangle|^2$ with respect to $\bp$,
as $\mathbf{o} := \Sigma_{\vep}^{-1}\bdel$.

The DC statistic at fixed $b$ and $m$ coincides with the CUSUM of pointwise projection of the panel data.
More specifically, $\cD^\varphi_m(\{|\cX^{(j)}_{1, b, T}|\}_{j=1}^n)$ is associated with the $n$-vector $\bp^\varphi_{b, m}$
with its elements
\beqs
[\bp^\varphi_{b, m}]_j = \left\{\begin{array}{ll}
\sgn(\cX^j_{1, b, T}) \cdot \sigma_j^{-1}\left\{\frac{m(2n-m)}{2n}\right\}^\varphi\frac{1}{m} & \mbox{if } |\cX^j_{1, b, T}| \ge |\cX^{(m)}_{1, b, T}|, \\
-\sgn(\cX^j_{1, b, T}) \cdot \sigma_j^{-1}\left\{\frac{m(2n-m)}{2n}\right\}^\varphi\frac{1}{2n-m} & \mbox{if } |\cX^j_{1, b, T}| < |\cX^{(m)}_{1, b, T}|,
\end{array}\right.
\eeqs
such that $\cD^\varphi_m(\{|\cX^{(j)}_{1, b, T}|\}_{j=1}^n) = \cC_b(\{\langle \bx_t, \bp^\varphi_{b, m} \rangle\}_{t=1}^T)$.
In light of the previous discussion on pointwise ordering and partitioning of the panel data in computing the DC statistics, 
our approach may be viewed as an attempt at mimicking the performance of the oracle projection
without the prior knowledge of either $\bdel$ or $\Sigma_\vep$.
We further investigate the high-dimensional efficiency of the DC statistic in comparison with other competitors and the oracle projection
in Section \ref{sec:single}.
\end{rem}

\vspace{5pt}
\begin{rem}
\label{rem:ef}
The scan statistic \citep{enikeeva2014} shares with the DC statistic
the pointwise maximisation of {\em cumulative sums of the ordered CUSUMs} over $1 \le m \le n$.
One key difference is that the former has $(\cX^{(j)}_{1, b, T})^2$ in place of $|\cX^{(j)}_{1, b, T}|$,
with the interpretation of being the marginal log-likelihood ratio at given $b$ and $m$
under the assumption of i.i.d. Gaussian noise.
On the other hand, the latter can be seen as a projected change-point statistic as noted in Remark \ref{rem:aston:one}.
Another difference comes from the fact that $T_m$, acting as thresholds for $\cTscan_{s, e}$, are dependent on $m$,
while the test criterion compared against $\cT^\varphi_{s, e}$ does not depend on the choice of $m$.

\cite{enikeeva2014} proposed two choices for $T_m$, 
a theoretical one from a $\chi^2_m$-distribution,
and an empirical one $T_m = 2(2n)^{-1/2}\{m\log(ne/m)+\log(Tn/\alpha)\}$ 
at a given significance level $\alpha\in(0, 1)$.
In conducting the simulation studies reported in Section \ref{sec:sim},
both choices of $T_m$ were observed to be too sensitive to cross-sectional correlations in the panel data.
$T_m$ may be chosen numerically from e.g., a bootstrap scheme as indicated by the authors,
but this requires selecting $n$ such $T_m$ and moreover, 
it is unclear how the cross-correlations should be treated.
\end{rem}

\section{Consistency of the double CUSUM statistic}
\label{sec:theor}

\subsection{Single change-point detection}
\label{sec:single}

In this section, we show the consistency of the DC statistic 
in the single change-point scenario ($N = 1$). 
More specifically, we consider the null hypothesis
\beqs
\mathcal{H}_0: \, f_{j, 1} = \cdots = f_{j, T} \quad \mbox{for all} \quad j = 1, \ldots, n, 
\eeqs
which indicates structural stability in the mean over time.
As an alternative hypothesis, we specify the scenario 
where the piecewise constant signals $\{f_{j, t}\}_{t=1}^T, \, j=1, \ldots, n$
contains a single change-point at an unknown location $\eta_1 \in (1, T)$, such that
\beqs
\mathcal{H}_A: \, f_{j, 1} = \cdots = f_{j, \eta_1} \ne f_{j, \eta_1+1} = \cdots = f_{j, T} \, \mbox{for some} \,
j \in \Pi_1 \, \mbox{with} \, m_1 = |\Pi_1| \ge 1.
\eeqs

Throughout the paper, $a \sim b$ is used to denote that $a$ is of the order of $b$,
and $a \wedge b = \min(a, b)$.
Then, the consistency of the proposed test is established under the following conditions.

\begin{itemize}
\item[(A1)] For each $j$, $\vep_{j, t}$ denotes a stationary, zero-mean process with the mixing coefficients
\beqs
\alpha_j(k) = \sup_{\substack{G \in \sigma(\vep_{j, t+k}, \vep_{j,  t+k+1}, \ldots) \\ 
H \in \sigma(\vep_{j, t}, \vep_{j, t-1}, \ldots)}} |\pr(H \cap G) - \pr(H)\pr(G)|.
\eeqs
Then, there exist fixed $C_\vep, C_\alpha > 0$, $\mu\in (0, 1)$ and $\sigma^* \ge \sigma_* > 0$ with $\bar{\sigma} = \sigma^*/\sigma_* \in [1, \infty)$,
such that the followings hold.
\begin{itemize}
\item[(A1.i)] $\E(\vep_{j, t}^k) \le C_{\vep}^{k-2} k! \E(\vep_{j, t}^2)$ uniformly in $j$ and $k=3, 4, \ldots$.
\item[(A1.ii)] $\max_{1 \le j \le n}|\alpha_j(k)| \le C_\alpha \mu^k$ for all $k = 1, 2, \ldots$.
\item[(A1.iii)] $\sigma_j^2$, the long-run variance of $\vep_{j, t}$, satisfies $\sigma_*^2 \le \sigma_j^2 \le \sigma^{*2}$.
\end{itemize}
\item[(A2)] The dimensionality $n$ satisfies $n \sim T^\omega$ for some fixed $\omega\in[0, \infty)$.
\item[(A3)] There exists a fixed constant $\bar{f}>0$ such that 
$\max_{1 \le j \le n}\max_{1 \le t \le T} |f_{j, t}| \le \bar{f}$.
\item[(A4)] There exists a fixed constant $c>0$ such that $\eta_1 \wedge (T-\eta_1) > cT^\beta$
for some $\beta\in(0, 1]$.
\item[(A5)] Let $\wt{\delta}_1 = m_1^{-1}\sum_{j \in \Pi_1} |\delta_{j, 1}|$,
i.e., the average magnitude of non-zero changes at $t = \eta_1$. 
Then for any $\varphi\in[0, 1]$, we have
$(n^\varphi\log\,T)^{-1}m_1^\varphi\wt{\delta}_1 T^{\beta/2} \to \infty$ as $T \to \infty$.
\end{itemize}
In (A1), each $\vep_{j, t}$ is assumed to be $\alpha$-mixing (strong-mixing) at a geometric rate, with bounded moments and long-run variance.
The condition (A1.i) is met by many distributions such as exponential, gamma and inverse Gaussian, besides the Gaussian distribution.
Note that the cross-sectional correlations of the panel data are not explicitly controlled by any of the conditions imposed in (A1).
In (A2), the dimensionality can either be fixed or increase with $T$ at a polynomial rate.
(A4) imposes a condition on the unbalancedness of the change-point location,
permitting $T^{-1}\{\eta_1 \vee (T-\eta_1)\} \to 0$ as $T \to \infty$ when $\beta<1$.
(A5) places a lower bound on the rate of $m_1^\varphi\wt{\delta}_1$, which
dictates the minimum requirement on the cross-sectional size of the change
for the change-point to be detected as well as being located with accuracy.
(A3) rules out the trivial case where any $|\delta_{j, 1}| \to \infty$ with $T \to \infty$.

\vspace{5pt}
\begin{rem}[High-dimensional efficiency]
\label{rem:aston:two}
Recall the high-dimensional efficiency discussed in Remark \ref{rem:aston:one},
which was introduced in \cite{aston2014} as a tool that allows us
to quantify and compare the power of different change-point tests.
Table \ref{table:high} summarises the high-dimensional efficiency of $\cT^\varphi_{1, T}$ and other change-point tests discussed in Section \ref{sec:lit},
when the change-point is maximally distanced from the extreme ends of $[1, T]$ (i.e., $\beta = 1$).
The double vertical line divides the tests into 
(i) those proposed under the assumption of cross-sectional independence (left) and dependence (right),
(ii) those detecting the presence of a change-point only (left) and those identifying its location as well (right),
and (iii) those with the interpretation as projected change-point tests (right) and those without (left).

The oracle projection-based change-point test (see Remark \ref{rem:aston:one}) achieves high-dimensional efficiency of
$T^{1/2}\Vert \Sigma_{\bvep}^{-1/2}\bdel \Vert_2 \to \infty$ 
and thus attaining better efficiency than $\cTh_{1, T}$ and $\cTlin_{1, T}$ by $n^{1/4}$,
and than $\cTscan_{1, T}$ by $m_1^{1/2}$,
for diagonal $\Sigma_{\bvep}$ (independent panel).
When the change-point is sparse ($m_1 \sim 1$), the high-dimensional efficiency of $\cT^0_{1, T}$ 
is comparable to that of the oracle up to a logarithmic factor.
When the change-point is dense ($m_1 \sim n$),
the high-dimensional efficiency of $\cT^\varphi_{1, T}, \, \varphi>0$ is comparable to that of $\cT^0_{1, T}$.
Comparing $\cTj_{1, T}$, $\cTs_{1, T}(\pi_T)$ and $\cT^0_{1, T}$, 
the latter attains better high-dimensional efficiency when $m_1^{-1}\sum_{j\in\Pi_1}|\delta_{j, 1}| \gg \min_{j\in\Pi_1}|\delta_{j, 1}|$. 
However, the former two achieves partitioning consistency (i.e., consistent estimation $\Pi_1$),
which is not granted by the latter.
\begin{table}[htbp]
\caption{High-dimensional efficiency of change-point tests when $\beta = 1$.}
\label{table:high}
\centering
\begin{tabular}{c|l||c|l}
\hline
$\cTh_{1, T}$ & $\displaystyle{\frac{(\sum_{j=1}^n\delta_{j, 1}^2)^{1/2}T^{1/2}}{n^{1/4}} \to \infty}$ 
& $\cTj_{1, T}$ & $\displaystyle{\frac{\min_{j\in\Pi_1}|\delta_{j, 1}| T^{1/2}}{\sqrt{\log\,T}} \to \infty}$ \\
\hline
$\cTlin_{1, T}$ & $\displaystyle{\frac{(\sum_{j=1}^n\delta_{j, 1}^2)^{1/2}T^{1/2}}{n^{1/4}} \to \infty}$ 
& $\cTs_{1, T}(\pi_T)$ & $\displaystyle{\frac{\min_{j\in\Pi_1}|\delta_{j, 1}| T^{1/2}}{\log\,T} \to \infty}$ \\
\hline
$\cTscan_{1, T}$ & $\displaystyle{\frac{(\sum_{j=1}^n\delta_{j, 1}^2)^{1/2}T^{1/2}}{\sqrt{m_1\log(n/m_1)}} > \sqrt{6.6}}$ 
& $\cT^\varphi_{1, T}$ & $\displaystyle{\frac{m_1^\varphi\wt\delta_1 T^{1/2}}{n^\varphi\log\,T} \to \infty}$ \\
\hline
\end{tabular}
\end{table}
\end{rem}

\begin{rem}
\label{rem:scaling}
As briefly noted below (\ref{eq:cusum:two}), some studentisation is required for panel data analysis in practice. 
We estimate $\sigma_j$ using the flat-top kernel estimator with the automatically chosen bandwidth as discussed in \cite{politis2011}.
Let $\heta_{j, 1} = \arg\max_{b\in[1, T)}|\cC_b(\{x_{j, t}\}_{t=1}^T)|$,
\beq
\bar\vep_{j, t} &=& x_{j, t} - \frac{1}{\heta_{j, 1}}\sum_{t=1}^{\heta_{j, 1}}x_{j, t} \cdot \bbI(t \le \heta_{j, 1}) - 
\frac{1}{T-\heta_{j, 1}}\sum_{t=\heta_{j, 1}+1}^T x_{j, t} \cdot \bbI(t > \heta_{j, 1}), 
\label{eq:vep:est}
\\
w(t) &=& \left\{\begin{array}{ll}
1 & \mbox{for } |t| \le 1/2, \\
2(1-|t|) & \mbox{for } 1/2 < |t| < 1, \\
0 & \mbox{for } |t| \ge 1.
\end{array}\right. \nonumber
\eeq
and $c_j(k) = \frac{1}{T} \sum_{t=1}^{T-k} \bar\vep_{j, t} \bar\vep_{j, t+k}$. 
Note that $\bar\vep_{j, t}$ estimates the unobservable $\vep_{j, t}$
by identifying a change-point candidate $\wh{\eta}_{j, 1}$ associated with each $\{x_{j, t}\}_{t=1}^T$.
Let $\tau_j$ be the smallest positive integer such that
$|c_j(\tau_j+k)/c_j(0)| < 1.4\sqrt{T^{-1}\log_{10}\,T}$ for $k=1, 2, 3$ (where the constants are chosen as per \cite{huvskova2010}).
Then, the estimator is given by
\begin{eqnarray}
\wh\sigma_j^2 = \max\left\{c_j(0) + 2\sum_{k=1}^{2\tau_j}w\left(\frac{k}{2\tau_j}\right)c_j(k), \, \frac{c_j(0)}{2}\right\},
\label{eq:sig:est}
\end{eqnarray}
where the second term in the RHS of (\ref{eq:sig:est}) is present to 
prevent spuriously large CUSUMs resulting from too small estimates of $\sigma_j$
(flat-top kernel estimator is known to produce negative estimators).
\cite{huvskova2010} showed the consistency of the above estimator in the {\em single} change-point detection problem,
under similar conditions as those imposed on $\vep_{j, t}$ in (A1).
Noting that (i) the theoretical results derived in this paper hold without the consistency of the flat-top kernel estimator,
provided that $\wh\sigma_j^2$ is uniformly bounded away from zero, 
and also that (ii) we ultimately consider the problem of {\em multiple} change-point detection 
(with adjustment to the estimator of $\sigma_j^2$ in order to accommodate the presence of multiple change-points, see Remark \ref{rem:scaling:multi}),
we assume (A6) below on $\wh\sigma_j^2$ without extending the consistency result of \cite{huvskova2010}.
\end{rem}
\begin{itemize}
\item[(A6)] Define $\cE_\sigma = \{\max_{1 \le j \le n} |\wh\sigma_j^2 - \sigma_j^2| \le \sigma_*^2/2\}$ where $\sigma_*$ is the same as that in (A1).
Then we assume $\pr(\cE_\sigma) \to 1$ with $T \to \infty$. 
\end{itemize}

Note that under (A4), short intervals near the extreme points $\{1, T\}$ do not contain the change-point.
Hence we search for a change-point within $[1, T]\setminus\cI_{1, T}$ only, 
where $\cI_{1, T} = [1, d_T] \cup [T-d_T, T]$ with $d_T := [C\log^2\,T]$ for some $C > 0$.
That is,
$\cT^\varphi_{1, T} = \max_{b\in[1, T]\setminus\cI_{1, T}}\max_{1 \le m \le n} \cD^{\varphi}_m(\{|\cX^{(j)}_{1, b, T}|\}_{j=1}^n)$
and
$\heta_1 = \arg\max_{b\in[1, T]\setminus\cI_{1, T}}\max_{1 \le m \le n} \cD^{\varphi}_m(\{|\cX^{(j)}_{1, b, T}|\}_{j=1}^n)$.

Under these conditions, we present the following theorems on the consistency of the DC statistic-based test
equipped with a test criterion $\pi^\varphi_{n, T}$, which satisfies
$C'n^\varphi\log\,T < \pi^\varphi_{n, T} < C''m_1^\varphi\wt{\delta}_1 T^{\beta/2}$
for some $C', C'' > 0$.
\vspace{5pt}
\begin{thm}
\label{thm:zero}
Assume that (A1)--(A3) and (A6) hold and that there exists no change-point ($N=0$) in the panel data in (\ref{eq:panel}).
Then
$\pr\left\{\cT^\varphi_{1, T} > \pi^\varphi_{n, T}\right\} \to 0$
as $T \to \infty$.
\end{thm}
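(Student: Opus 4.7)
Under $\mathcal{H}_0$ the signal $f_{j,t}$ is constant in $t$ for every $j$, so $\cX^j_{1,b,T} = \sigma_j^{-1}\sqrt{(b(T-b))/T}\,(b^{-1}\sum_{t=1}^b \vep_{j,t}-(T-b)^{-1}\sum_{t=b+1}^T \vep_{j,t})$ is a pure noise CUSUM. Therefore the entire task reduces to producing a uniform-in-$(j,b)$ upper bound for the studentised noise CUSUM and converting it into a bound on $\cT^\varphi_{1,T}$. The first step is to replace $\wh\sigma_j$ by $\sigma_j$: on the event $\cE_\sigma$ of (A6), $\wh\sigma_j^2 \ge \sigma_*^2/2$ uniformly in $j$, so we can work on $\cE_\sigma$ and absorb the ratio $\sigma_j/\wh\sigma_j$ into a fixed multiplicative constant depending only on $\bar\sigma$. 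By (A6), $\pr(\cE_\sigma^c)\to 0$, so it suffices to bound $\cT^\varphi_{1,T}$ on $\cE_\sigma$.

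Next, I would establish a maximal concentration inequality of the form
\begin{eqnarray*}
\pr\!\left(\max_{1\le j\le n}\,\max_{b\in[1,T]\setminus\cI_{1,T}}\bigl|\cX^j_{1,b,T}\bigr| > C_0\sqrt{\log T}\right) \to 0
\end{eqnarray*}
for a sufficiently large constant $C_0>0$. This follows by combining a Bernstein-type inequality for strongly mixing processes (available under (A1.i)--(A1.iii), which give subexponential tails, geometric $\alpha$-mixing, and bounded long-run variance) applied to $b^{-1}\sum_{t=1}^b\vep_{j,t}$ and $(T-b)^{-1}\sum_{t=b+1}^T\vep_{j,t}$, together with a union bound over at most $nT$ pairs $(j,b)$. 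Because (A2) gives $\log n \lesssim \log T$, the deviation is of order $\sqrt{\log T}$, and the restriction $b\in[1,T]\setminus\cI_{1,T}$ ensures the scaling factor $\sqrt{b(T-b)/T}$ keeps the denominators $b,T-b$ at least of order $\log^2 T$ so that no spurious blow-up occurs at the boundary.

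The third step converts the maximal CUSUM bound into a bound on $\cT^\varphi_{1,T}$. From expression \eqref{double:cusum:two},
\begin{eqnarray*}
\cD^\varphi_m(\{|\cX^{(j)}_{1,b,T}|\}_{j=1}^n) \le \left\{\frac{m(2n-m)}{2n}\right\}^\varphi\frac{1}{m}\sum_{j=1}^m |\cX^{(j)}_{1,b,T}| \le \left(\frac{n}{2}\right)^\varphi \max_{1\le j\le n} |\cX^j_{1,b,T}|,
\end{eqnarray*}
since $m(2n-m)/(2n)\le n/2$ for $m\in\{1,\ldots,n\}$ and the average of the top $m$ values is dominated by the maximum. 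Maximising over $b$ and $m$ and combining with the previous step yields $\cT^\varphi_{1,T} \le 2^{-\varphi}C_0\, n^\varphi\sqrt{\log T}$ with probability tending to one. Since the test criterion satisfies $\pi^\varphi_{n,T} > C' n^\varphi \log T$, and $n^\varphi\log T$ dominates $n^\varphi\sqrt{\log T}$ for large $T$, the event $\{\cT^\varphi_{1,T}>\pi^\varphi_{n,T}\}$ has vanishing probability.

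\textbf{Main obstacle.} The only non-routine ingredient is the Bernstein-type maximal inequality for the mixing, heavy-tailed noise in (A1), uniform over both the coordinate index $j$ and the split point $b$. Off-the-shelf results (e.g.\ Merlev\`ede--Peligrad--Rio) give the required subexponential concentration, but care must be taken so that the resulting union-bound correction is merely logarithmic in $n$ and $T$ (using (A2)) and that the restriction to $b\in[1,T]\setminus \cI_{1,T}$ eliminates the degeneracy of $\{b(T-b)/T\}^{1/2}$ near the endpoints. Everything else---the bound $m(2n-m)/(2n)\le n/2$ and the handover from $\wh\sigma_j$ to $\sigma_j$ via (A6)---is bookkeeping.
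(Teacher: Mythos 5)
Your overall architecture coincides with the paper's: on $\cE_\sigma$ pass from $\wh\sigma_j$ to $\sigma_j$, bound the DC statistic deterministically by $\{m(2n-m)/(2n)\}^\varphi \le (n/2)^\varphi$ times the largest individual noise CUSUM, and finish with a uniform-in-$(j,b)$ concentration bound plus a union bound. This is in substance what Lemmas \ref{lem:one}--\ref{lem:two} and Section \ref{sec:pf:thm:one} of the paper do (the paper phrases the reduction through the aggregated series $\xi_{0,t}$ rather than through your inequality $\cD^\varphi_m \le 2^{-\varphi} n^\varphi \max_j |\cX^j_{1,b,T}|$, but these are equivalent). Your deterministic step and your handling of (A6) are correct.

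The genuine gap is in your Step 2: the claimed maximal inequality at level $C_0\sqrt{\log T}$ is not attainable under (A1), and it does not follow from Merlev\`ede--Peligrad--Rio. For geometrically $\alpha$-mixing data with only sub-exponential (Bernstein-moment) tails, their bound carries a term of the form $d\exp(-x^{1/2}/C_1)$ (the $\gamma = 1/2$ regime); your restriction $b \in [1,T]\setminus\cI_{1,T}$ still allows segments of length $d \asymp \log^2 T$, and there the deviation $x = C_0\sqrt{\log T}\,\sqrt{d} \asymp (\log T)^{3/2}$ makes this term of order $\exp\{-c(\log T)^{3/4}\}$, which is \emph{not} $o(T^{-(1+\omega)})$ and so cannot survive a union bound over $nT$ pairs. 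Nor is this a mere artifact of the available inequalities: fixed-rate geometric mixing is compatible with ``bursts'' of $O(\log T)$ consecutive values each of size $\epsilon \log T$ occurring somewhere in an $n \times T$ array, and one such burst alone pushes $b^{-1/2}\vert\sum_{t \le b}\wt\vep_{j,t}\vert$ up to order $\log T$ on the shortest admissible segments. This is exactly why the paper's Lemmas \ref{lem:one}--\ref{lem:two} are stated at level $\log T$ (via Bosq's blocking inequality with block length $\asymp d/\log T$), not $\sqrt{\log T}$. The consequence hits your final step: you use $\sqrt{\log T} = o(\log T)$ to conclude for \emph{any} $C' > 0$, but once the noise level is corrected to $C_0 n^\varphi \log T$, the event $\{\cT^\varphi_{1,T} \le \pi^\varphi_{n,T}\}$ is only guaranteed when the constant $C'$ in $\pi^\varphi_{n,T} > C' n^\varphi \log T$ dominates the constant of the noise bound --- which is how the paper reads its hypothesis (its proof requires $C' > 2\sqrt{2}$). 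With that repair (noise bound at level $\log T$, and $C'$ taken sufficiently large), your argument goes through and is essentially the paper's proof.
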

\vspace{5pt}
Theorem \ref{thm:zero} guarantees that when all signals remain constant,
the test does not detect any change-point with probability converging to one.
\vspace{5pt}
\begin{thm}
\label{thm:one}
Assume that (A1)--(A6) hold. Then there exists $c_0>0$ such that 
$$ \pr\left\{\cT^\varphi_{1, T} > \pi^\varphi_{n, T} \mbox{ and } |\heta_1-\eta_1| < c_0\ept\right\} \to 1 $$
as $T \to \infty$, where $\ept = n^{2\varphi}(m_1^\varphi\wt\delta_1)^{-2}\log^2\,T$.
\end{thm}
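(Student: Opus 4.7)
The plan is to condition on a high-probability good event that controls the noise uniformly, then on this event (i) evaluate $\cD^\varphi_{m_1}(\eta_1)$ directly to conclude $\cT^\varphi_{1,T}>\pi^\varphi_{n,T}$, and (ii) argue by contradiction that the argmax $\heta_1$ must lie within $c_0\ept$ of $\eta_1$. Throughout, I write $\cX^j_{1,b,T}=\mu^j_b+Z^j_b$ with $\mu^j_b=\E\cX^j_{1,b,T}$; crucially, $|\mu^j_b|=(|\delta_{j,1}|/\sigma_j)\,h(b)$ for a common shape function $h$ peaked at $b=\eta_1$ with $h(\eta_1)=\sqrt{\eta_1(T-\eta_1)/T}\ge cT^{\beta/2}$ by (A4).

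For noise control, under (A1) a Bernstein-type inequality for geometrically $\alpha$-mixing sequences combined with a union bound over $n\sim T^\omega$ coordinates and $T$ time points yields $\pr\{\max_{j,b}|Z^j_b|>C_1\sqrt{\log T}\}\to 0$. Applied instead to partial sums of $\vep_{j,t}$ over intervals of length $k$ and coupled with a dyadic chaining over $k$, the same inequality gives the sharper increment bound
\[
\pr\bigl\{\max_{j}\max_{|b-\eta_1|\le k}|Z^j_b-Z^j_{\eta_1}|>C_2\log T\sqrt{k/T}\bigr\}\to 0
\]
uniformly in $k$. The event $\cE_\sigma$ from (A6) lets me replace $\sigma_j$ by $\wh\sigma_j$ at constant-factor cost. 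For detection, (A5) forces $|\mu^j_{\eta_1}|$ to dominate $\sqrt{\log T}$ on average over $j\in\Pi_1$, so the $m_1$ largest entries of $(|\cX^{(j)}_{1,\eta_1,T}|)_{j=1}^n$ are, up to a negligible subset, the $\Pi_1$-indices. The first inner average in \eqref{double:cusum:one} is then at least $c\wt\delta_1 T^{\beta/2}-C\sqrt{\log T}$ and the second at most $C\sqrt{\log T}$, so $\cD^\varphi_{m_1}(\eta_1)\ge\{m_1(2n-m_1)/(2n)\}^\varphi(c\wt\delta_1 T^{\beta/2}-C\sqrt{\log T})\ge c'm_1^\varphi\wt\delta_1 T^{\beta/2}$ under (A5), which exceeds $\pi^\varphi_{n,T}<C''m_1^\varphi\wt\delta_1 T^{\beta/2}$.

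For localization, suppose $k:=|\heta_1-\eta_1|>c_0\ept$. Let $G(m)$ be the signal coefficient so that the signal-only value of $\cD^\varphi_m(b)$ equals $G(m)\,h(b)$, and $m^*=\arg\max_m G(m)$; then $G(m^*)\ge G(m_1)\ge cm_1^\varphi\wt\delta_1$. A direct expansion of $h$ at $b=\eta_1$ gives $h(\eta_1)-h(\heta_1)\ge ck/T^{1/2}$ (using $h(\eta_1)\le T^{1/2}/2$). Since $(\heta_1,\wh m)$ is the argmax, $\cD^\varphi_{\wh m}(\heta_1)\ge\cD^\varphi_{m^*}(\eta_1)$; applying the signal upper bound $G(\wh m)\le G(m^*)$ at $\heta_1$ and the signal lower bound at $(\eta_1,m^*)$, together with the increment bound above, yields
\[
G(m^*)(h(\eta_1)-h(\heta_1))\le Cn^\varphi\log T\sqrt{k/T},
\]
hence $cm_1^\varphi\wt\delta_1\,k/T^{1/2}\le Cn^\varphi\log T\sqrt{k/T}$, i.e.\ $\sqrt{k}\le(C/c)n^\varphi\log T/(m_1^\varphi\wt\delta_1)$, contradicting $k>c_0\ept$ for $c_0$ large enough.

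The main obstacle is to justify the clean signal upper bound $G(\wh m)\le G(m^*)$ uniformly in the data-driven $\wh m$, and to promote the pointwise noise control to the sharp $\sqrt{k/T}$ increment bound on the noise part of the DC statistic itself. This requires splitting into the regimes $\wh m\le m_1$ vs.\ $\wh m\ge m_1$, exploiting monotonicity of $m\mapsto\{m(2n-m)/(2n)\}^\varphi/m$, and carefully tracking how the data-driven reordering of $|\cX^{(j)}_{1,b,T}|$ changes as $b$ moves away from $\eta_1$—a step that is more delicate than in the univariate CUSUM setting.
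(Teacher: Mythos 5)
Your detection step is essentially sound and matches what the paper does via its inequality (\ref{ineq:delta}): the top-$m_1$ average of the ordered CUSUMs at $\eta_1$ dominates the average over $\Pi_1$, the tail average is bounded by the largest pure-noise CUSUM, and (A5) does the rest. Two small repairs are needed there: under (A1) the Bernstein/Bosq route gives a uniform noise bound of order $\log T$, not $\sqrt{\log T}$ (the mixing remainder forces this; it is immaterial since (A5) is stated with $\log T$), and your claim that the top $m_1$ order statistics essentially coincide with $\Pi_1$ is not justified by (A5), which controls only the average $\wt\delta_1$ and not $\min_{j\in\Pi_1}|\delta_{j,1}|$ --- but your stated bounds survive without that claim, since the top-$m_1$ average exceeds the $\Pi_1$-average for any values, and by pigeonhole $|\cX^{(m_1+1)}_{1,\eta_1,T}|\le\max_{j\notin\Pi_1}|\cX^j_{1,\eta_1,T}|$.

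The localization step, however, has a genuine gap, and it is exactly the step you flag as ``the main obstacle'' and leave unresolved. (Note that $G(\wh m)\le G(m^*)$ is trivially true once $m^*$ is defined as the maximizer of $G$; the real problem is the noise increment.) Because the ordering of $\{|\cX^{(j)}_{1,b,T}|\}_{j=1}^n$ is data-driven and changes with $b$, the decomposition $\cD^\varphi_m(b)=G(m)h(b)+\mbox{noise}$ holds only with a \emph{pointwise} error: a sup-norm Lipschitz argument for the top-$m$ and tail averages gives $|\cD^\varphi_m(b)-G(m)h(b)|\le 2n^\varphi\max_j|Z^j_b| = O(n^\varphi\log T)$ at each $b$, and the errors at $\heta_1$ and at $\eta_1$ involve two different reorderings, so they do not telescope into an increment of order $n^\varphi\log T\sqrt{k/T}$. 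With only the pointwise bound, your contradiction argument yields $m_1^\varphi\wt\delta_1\,k/\sqrt T\le Cn^\varphi\log T$, i.e.\ $k\le C n^\varphi\sqrt T\log T/(m_1^\varphi\wt\delta_1)$, which exceeds $\ept=n^{2\varphi}(m_1^\varphi\wt\delta_1)^{-2}\log^2 T$ by the factor $\sqrt{T}\,m_1^\varphi\wt\delta_1/(n^\varphi\log T)\to\infty$ under (A5); the theorem's rate is not reached. The paper's resolution is a device absent from your proposal: freeze the ordering, signs and $\wh m_1$ at the argmax and form the univariate surrogate $y_{0,t}=g_{0,t}+\xi_{0,t}$ in (\ref{def:y:r})--(\ref{def:xi:r}). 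Since the decreasing ordering with matching signs maximizes $\cD^\varphi_m$ over all fixed orderings and signs, one has $\cC_{\heta_1}(\{y_{0,t}\}_{t=1}^T)=\max_m\cD^\varphi_m(\{|\cX^{(j)}_{1,\heta_1,T}|\}_{j=1}^n)\ge\max_m\cD^\varphi_m(\{|\cX^{(j)}_{1,b,T}|\}_{j=1}^n)\ge\cC_b(\{y_{0,t}\}_{t=1}^T)$ for every $b$, so $\heta_1$ is also the argmax of the univariate CUSUM of $y_{0,t}$, whose noise $\xi_{0,t}$ inherits the uniform $O(n^\varphi\log T)$ bounds (\ref{xi:bound:one})--(\ref{xi:bound:two}) because it is a fixed average of the $\wt\vep_{j,t}$ and those bounds hold uniformly over all orderings. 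Localization is then a univariate problem, settled by Lemma \ref{lem:four} (a least-squares/step-function argument in the spirit of the WBS proof, with Venkatraman's lemmas), whose terms $III$--$VII$ supply precisely the sharp increment bounds you were missing. If you wish to keep your direct CUSUM-difference route, apply it to $y_{0,t}$ rather than to the DC statistic itself: the inequality $\cC_{\eta_1}(\{g_{0,t}\})-\cC_{\heta_1}(\{g_{0,t}\})\le\cC_{\heta_1}(\{\xi_{0,t}\})-\cC_{\eta_1}(\{\xi_{0,t}\})$ requires only coordinate-wise CUSUM increment bounds, where your $\sqrt{k/T}$ heuristic can be made rigorous under (A1); for the DC statistic with its $b$-dependent reordering it cannot.
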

\vspace{5pt}
Theorem \ref{thm:one} states that in the presence of a single change-point,
the proposed test detects its presence as well as accurately identifying its location. 
From (A5), it is easily seen that $\ept/T^\beta \to 0$ as $T \to \infty$,
i.e., in the rescaled time, the estimated change-point location is consistent since 
$T^{-1}|\heta_1-\eta_1| \le T^{-\beta}|\heta_1-\eta_1| \to 0$.
We may define the optimality in change-point detection as
when the true change-point and its estimate are
within the distance of $O_p(1)$, see \cite{korostelev1987}.
Then, near-optimality in change-point estimation is achieved up to a logarithmic factor ($\log^2\,T$)
with the choice (i) $\varphi = 0$ when $\wt\delta_1 \sim 1$, or
(ii) $\varphi>0$ when $m_1 \sim n$ and $\wt\delta_1 \sim 1$. 

\subsection{Binary segmentation for multiple change-point detection}
\label{sec:bs}

In this section, we show the consistency of the DC statistic for multiple change-point detection
when applied jointly with a BS algorithm.
We first formulate the double CUSUM Binary Segmentation (DCBS) algorithm 
for panel data segmentation, which is equipped with a test criterion $\pi^\varphi_{n, T}$.
As in Section \ref{sec:single}, let $\cI_{s, e} = [s, s+d_T] \cup [e-d_T, e]$
denote a fraction of the interval $[s, e]$ on which we do not search for change-points,
in order to account for possible bias in the previously detected change-points.
This does not affect the asymptotic consistency of the estimated change-point locations 
under the assumption (B1) below, on the dispersion of change-points.
The index $p$ is used to denote the level (indicating the progression of the segmentation procedure)
and $q$ is used to denote the location of the node at each level.

\begin{description}
\item[The double CUSUM Binary Segmentation algorithm]
\item[Step 0]
Set $(p, q)=(1, 1)$, $s_{p, q}=1$ and $e_{p, q}=T$.

\item[Step 1] At the current level $p$, repeat the following for all $q$.
\begin{description}
\item[Step 1.1] Letting $s=s_{p, q}$ and $e=e_{p, q}$,
obtain the series of CUSUMs $\{\cX^{j}_{s, b, e}\}$ for $b\in[s, e)$ and $j=1, \ldots, n$,
on which $\cD^{\varphi}_m(\{|\cX^{(j)}_{s, b, e}|\}_{j=1}^n)$ is computed over all $b$ and $m$.
\item[Step 1.2] Obtain the test statistic $\cT^\varphi_{s, e} = \max_{b\in[s, e]\setminus\cI_{s, e}}\max_{1 \le m \le n} \cD^{\varphi}_m(\{|\cX^{(j)}_{s, b, e}|\}_{j=1}^n)$.
\item[Step 1.3] If $\cT^\varphi_{s, e} \le \pi^\varphi_{n, T}$,
quit searching for change-points on the interval $[s, e]$.
On the other hand, if $\cT^\varphi_{s, e} > \pi^\varphi_{n, T}$,
locate 
$$\heta = \arg\max_{b\in[s, e]\setminus\cI_{s, e}}\max_{1 \le m \le n} \cD^{\varphi}_m(\{|\cX^{(j)}_{s, b, e}|\}_{j=1}^n)$$
and proceed to Step 1.4.
\item[Step 1.4] Add $\heta$ to the set of estimated change-points
and divide the interval $[s_{p, q}, e_{p, q}]$ into two sub-intervals
$[s_{p+1, 2q-1}, e_{p+1, 2q-1}]$ and $[s_{p+1, 2q}, e_{p+1, 2q}]$,
where $s_{p+1, 2q-1} = s_{p, q}$, $e_{p+1, 2q-1} = \heta$, 
$s_{p+1, 2q} = \heta+1$ and $e_{p+1, 2q} = e_{p, q}$.
\end{description}

\item[Step 2]
Once $[s_{p, q}, e_{p, q}]$ for all $q$ are examined at level $p$,
set $p \leftarrow p+1$ and go to Step 1.
\end{description}

Step 1.3 furnishes a stopping rule to the DCBS algorithm:
the search for further change-points is terminated 
once $\cT^\varphi_{s, e} \le \pi^\varphi_{n, T}$ on every $[s, e]$ 
defined by two adjacent estimated change-points.

\vspace{5pt}
\begin{rem}
\label{rem:scaling:multi}
An adjustment is required to extend the scaling estimation procedure in Remark \ref{rem:scaling},
originally designed for single change-point detection,
to be applicable to the multiple change-point scenario.
More specifically, $\bar\vep_{j, t}$ in (\ref{eq:vep:est}) can no longer be regarded as well-estimating $\vep_{j, t}$
when $[1, \heta_{j, 1}]$ or $(\heta_{j, 1}, T]$ may contain further change-points. 
Seeing that $\heta_{j, 1}$ is the top node of a binary tree that results from 
applying a BS algorithm to $\{x_{j, t}\}_{t=1}^T$,
we remedy this by growing a binary tree of {\em depth} $L_T$ ($= O(\log\,T)$) 
from each $\{x_{j, t}\}_{t=1}^T$, $j=1, \ldots, n$. 
Such a tree represents a maximal segmentation of $x_{j, t}$ and therefore, regarding each segment as being stationary,
$\bar\vep_{j, t}$ is derived by subtracting the sample means within those intervals.
Then, the scaling estimation procedure is applied to the thus-obtained $\bar\vep_{j, t}$. 

Note that this approach requires an arbitrary choice of $L_T$
when there is no prior information on the upper bound on the total number of change-points, $N$.
For multiple change-point detection,
such a requirement (or its equivalent) is commonly found; see e.g., \cite{soh2014} and \cite{kirch2015}
where the minimum distance between two consecutive change-points plays an essential role
in guaranteeing the consistency of the proposed procedures.
\end{rem}
\vspace{5pt}

The consistency of the DCBS algorithm is established under (A1)--(A3), (A6)
and the following conditions extending (A4)--(A5) in order to allow for the presence of multiple change-points. 
\begin{itemize}
\item[(B1)] There exists a fixed constant $c>0$ such that
$\min_{0 \le r \le N} (\eta_{r+1} - \eta_r) \ge cT^\beta$ for some $\beta\in(6/7, 1]$,
using the notational convention that $\eta_0 = 0$ and $\eta_{N+1}=T$.
\item[(B2)] At each $\eta_r$,
let $\wt{\delta}_r = m_r^{-1}\sum_{j \in \Pi_r} |\delta_{j, r}|$.
Then for any $\varphi\in[0, 1]$, we have $\uDel_\varphi = \min_{1 \le r \le N} m_r^\varphi\wt{\delta}_r$ satisfy 
$(n^\varphi\log\,T)^{-1} \uDel_\varphi T^{7\beta/4-3/2} \to \infty$ as $T \to \infty$.
\end{itemize}

Since the unbalancedness of a change-point location is closely related to
the distance between two adjacent change-points, (B1) is formulated in terms of the latter.
Note that we do not impose any upper bound on the number of total change-points $N$
except for the implication that can be derived from (B1), namely that $N$ may grow with $T$ provided 
that any two adjacent change-points is sufficiently distanced.
Comparing (B2) to (A5), the high-dimensional efficiency is worsen by $T^{3/2-5\beta/4}$ when detecting multiple change-points,
which is also the case when applying the BS algorithm to univariate data.
This rate can be improved when the DC statistic is combined with e.g., the Wild Binary Segmentation (WBS)
\citep{piotr2014} which applies the CUSUM principle to randomly drawn intervals;
we leave the exploration in this direction for the future research. 

\vspace{5pt}
\begin{thm}
\label{thm:two}
Let $\heta_r, \ r=1, \ldots, \wh{N}$ ($1 < \heta_1 < \ldots < \heta_{\wh{N}} < T$) 
denote the change-points detected by the DCBS algorithm with a test criterion $\pi^\varphi_{n, T}$ satisfying
$C'n^{2\varphi}\uDel_\varphi^{-1}T^{5(1-\beta)/2}\log\,T < \pi^\varphi_{n, T} < C''\uDel_\varphi T^{\beta-1/2}$.
Assuming (A1)--(A3), (A6) and (B1)--(B2), there exists $c_0>0$ such that 
$$\pr\left\{\wh{N}=N; \,|\heta_r-\eta_r| < c_0\ept \mbox{ for } r=1, \ldots, N\right\} \to 1$$
as $T \to \infty$, where
$\ept = n^{2\varphi}\uDel_\varphi^{-2} T^{5(1-\beta)}\log^2\,T$.
\end{thm}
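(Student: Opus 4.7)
The plan is to adapt the standard binary-segmentation induction, as used in \cite{venkatraman1992} and \cite{cho2012}, to the DC setting and essentially apply the single change-point analysis behind Theorems \ref{thm:zero}--\ref{thm:one} on every node of the BS tree. The argument has two main ingredients: a single uniform noise-control event, and an induction over the depth of the tree.

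First, I would establish a good event $\cE^*$ with $\pr(\cE^*) \to 1$ on which (A6) holds and
\beqs
\max_{1 \le j \le n}\max_{1 \le s < b < e \le T}\left|\cC_b(\{\sigma_j^{-1}\vep_{j,t}\}_{t=s}^e)\right| \le C_0\sqrt{\log\,T}.
\eeqs
Under (A1) this follows from a Bernstein-type inequality for $\alpha$-mixing sequences together with a union bound over $O(nT^3) = O(T^{3+\omega})$ quadruples $(j, s, b, e)$, exactly as in the proof of Theorem \ref{thm:zero}. The multiplier $\{m(2n-m)/(2n)\}^\varphi/m \le n^{\varphi-1}$, combined with (A6) and a rearrangement argument controlling the effect of reordering noisy versus signal CUSUMs, then yields a uniform $O(n^\varphi \log\,T)$ bound on the stochastic contribution to $\cD^\varphi_m(\{|\cX^{(j)}_{s,b,e}|\}_{j=1}^n)$ for every $(s, b, e, m)$.

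Working on $\cE^*$, I would induct on the BS level $p$, maintaining the invariant that every previously declared $\heta$ is within $c_0\ept$ of some true $\eta_r$ and that the unprocessed intervals together cover all still-undetected change-points. For a current node $[s, e]$ with $e-s+1 \ge cT^\beta$ (shorter intervals contain no undetected change-point by (B1)), there are two subcases. In the first, $[s, e]$ contains some undetected $\eta_r$; by (B1), $\sqrt{(\eta_r - s)(e - \eta_r)/(e-s+1)} \gtrsim T^{\beta-1/2}$, so the signal part of $\cD^\varphi_{m_r}(\{|\cX^{(j)}_{s, \eta_r, e}|\}_{j=1}^n)$ is at least $c\, m_r^\varphi \wt\delta_r T^{\beta-1/2} \ge c\,\uDel_\varphi T^{\beta-1/2}$, which combined with the upper bound on $\pi^\varphi_{n,T}$ and the noise bound gives $\cT^\varphi_{s, e} > \pi^\varphi_{n,T}$; a quadratic-decay argument around the true $\eta_r \in [s, e]$, mirroring that behind Theorem \ref{thm:one} but with the possible interval imbalance inflating the rate by $T^{5(1-\beta)/2}$, yields $|\heta - \eta_{r^*}| \lesssim \ept$ for some $\eta_{r^*} \in [s, e]$. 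In the second subcase, every true $\eta_r \in [s, e]$ lies within $c_0\ept$ of $\{s, e\}$, so the restricted signal is piecewise constant outside two boundary strips of length $\le c_0\ept$; since $\ept/T^\beta \to 0$ by (B2), the signal contribution to every $\cX^j_{s, b, e}$ is $o(\sqrt{\log T})$, whence $\cT^\varphi_{s, e} = O(n^\varphi \log\,T) < \pi^\varphi_{n,T}$ by the lower bound on the criterion and the algorithm terminates. Combining the two subcases gives $\wh N = N$ with each $\heta_r$ within $c_0\ept$ of the corresponding $\eta_r$.

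The main obstacle will be the first subcase when $[s, e]$ contains two or more undetected change-points: because the DC statistic is jointly maximised over $(b, m)$, the optimiser $\wh m_b$ can vary with $b$, so the univariate quadratic-decay lemma used for Theorem \ref{thm:one} does not apply verbatim. I would handle this by fixing, for each candidate $\eta_{r^*} \in [s, e]$, the slice $m = m_{r^*}$ and showing that for any $b$ outside a $c_0 \ept$-neighbourhood of $\{\eta_r\}_{r=1}^N$, $\max_m \cD^\varphi_m(\{|\cX^{(j)}_{s, b, e}|\}_{j=1}^n)$ is strictly dominated by $\cD^\varphi_{m_{r^*}}(\{|\cX^{(j)}_{s, \eta_{r^*}, e}|\}_{j=1}^n)$, using the piecewise-linearity and concavity on each constancy interval of the signal CUSUM in $b$ combined with the uniform noise bound. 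A final check is that excluding $\cI_{s, e}$ from the search loses no true change-point, which holds because $d_T = O(\log^2 T) \ll T^\beta$ under (B1).
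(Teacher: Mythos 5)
Your overall architecture --- a uniform noise-control event, induction over the BS tree with a per-node single-change-point analysis, and a termination check once all change-points are detected --- is the same as the paper's, but two of your steps have genuine gaps. The more consequential one is the termination step (your second subcase). From ``every $\eta_r\in[s,e]$ lies within $c_0\ept$ of $\{s,e\}$'' you infer that the signal contribution to each $\cX^j_{s,b,e}$ is $o(\sqrt{\log T})$ and hence $\cT^\varphi_{s,e}=O(n^\varphi\log T)$. This is false: a component with an $O(1)$ jump at distance of order $\ept$ from the boundary contributes a signal CUSUM of order $\sqrt{\ept}$, and $\ept\gtrsim\log^2 T$ under (A3) (the fact that $\ept/T^\beta\to 0$ is irrelevant here), so the statistic on such a residual segment can be of order $n^\varphi\sqrt{\ept}=n^{2\varphi}\uDel_\varphi^{-1}T^{5(1-\beta)/2}\log T\gg n^\varphi\log T$. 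This is precisely why the theorem's lower bound on the criterion has the form $C'n^{2\varphi}\uDel_\varphi^{-1}T^{5(1-\beta)/2}\log T$: the paper's proof bounds the statistic on residual segments by $4\bar{f}n^\varphi(c_6\ept)^{1/2}+c_4n^\varphi\log T<\pi^\varphi_{n, T}$. With your claimed $O(n^\varphi\log T)$ bound one could legitimately take a criterion of that smaller order, and the no-false-positive half of the theorem would then fail. (A smaller issue of the same kind: under the mixing condition (A1), the uniform bound $C_0\sqrt{\log T}$ over \emph{all} triples $(s,b,e)$ is not attainable; the paper's Lemmas \ref{lem:one}--\ref{lem:two} obtain $O(\log T)$, and only over segments longer than $d_T=[C\log^2 T]$, which is the reason the trimming sets $\cI_{s,e}$ appear in the algorithm at all.)

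The second gap is the step you yourself flag as the main obstacle: localisation on an interval containing several undetected change-points. Your proposed fix --- fixing the slice $m=m_{r^*}$ and dominating $\max_m\cD^\varphi_m$ at bad $b$ via piecewise-linearity/concavity of the signal CUSUM plus the noise bound --- does not address the actual difficulty, namely that the ordering and signs entering $\cD^\varphi_m(\{|\cX^{(j)}_{s,b,e}|\}_{j=1}^n)$ at each $b$ are determined by the \emph{noisy} CUSUMs, so $\max_m\cD^\varphi_m$ at $b$ does not decompose into a fixed-structure signal CUSUM plus noise. The paper's resolution is different: it fixes the ordering, signs and $\wh{m}$ attained at the maximiser $(\heta,\wh{m})$ and forms the single data-driven combination $y_{0,t}=g_{0,t}+\xi_{0,t}$; since that ordering is optimal at $b=\heta$ and suboptimal at every other $b$, $\heta$ is automatically the argmax of the univariate series $\cC_b(\{y_{0,t}\}_{t=s}^e)$, $g_{0,t}$ is piecewise constant with change-points exactly at the $\eta_r$, and its CUSUM at $\heta$ is bounded below by comparison with the oracle combinations $y_{r,t}$, see (\ref{ineq:delta:bs}). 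Localisation then follows from the step-function-approximation argument of Lemma \ref{lem:four}, the separation bound of Lemma \ref{lem:two:two}, and the Taylor-expansion control of the cross terms in Lemma \ref{lem:two:three}; this is where the exponent $T^{5(1-\beta)}$ in $\ept$ actually arises, rather than from a generic ``quadratic decay'' heuristic. Without this reduction, or an equivalent device handling the data-dependent ordering, your induction step for intervals containing two or more undetected change-points does not go through.
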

\vspace{5pt}
Unlike in the single change-point scenario, $\ept$ now depends on the dispersion of the change-points through $\beta$.
From (B1)--(B2), it is easily seen that $\ept/T^\beta \to 0$ as $T \to \infty$,
and hence the multiple change-points are consistently located for all $r=1, \ldots, N$.

\vspace{5pt}
\begin{rem}[Post-processing of estimated change-points]
\label{rem:post:proc}
In this paper, we focus on establishing the good performance of the DC statistic in high-dimensional CUSUM series aggregation
when combined with a simple segmentation method,
rather than considering more sophisticated algorithms such as the WBS \citep{piotr2014} or MOSUM \citep{kirch2015} procedures,
where the DC principle is straightforwardly applicable.
The BS algorithm is known to perform sub-optimally in certain unfavourable settings since,
at each iteration, it fits a step function with a single break to the data over a segment that
contains possibly multiple change-points.
Hence, we equip the DCBS algorithm with an extra step aiming at removing spuriously detected change-points
which is in line with the post-processing steps proposed in \cite{cho2012} and \cite{cho2015}.
It checks whether $\cT^{\varphi}_{\heta_r-d_r, \heta_r+d_r} > \pi^\varphi_{n, T}$
for $r=1, \ldots, \wh{N}$ with $d_r = \min(\heta_r-\heta_{r-1}, \heta_{r+1}-\heta_r)/2$.
In other words, we compute the DC statistics within each segment containing a single estimated change-point,
and retain only those $\heta_r$ that survive the thresholding.
More details on the post-processing step can be found in Section 3.2.1 of \cite{cho2012}.
\end{rem}

\section{Choice of $\varphi$ and test criterion}
\label{sec:choice}

\subsection{Choice of $\varphi$}
\label{sec:varphi}

Remark \ref{rem:aston:two} indicates that 
$\varphi  = 0$ is preferable in terms of the high-dimensional efficiency for 
detecting a change-point that is known to be sparse across the panel, 
while $\cT^\varphi_{1, T}$ with $\varphi > 0$ achieves the same high-dimensional efficiency as 
$\cT^0_{1, T}$ when the change is dense across the panel.
In practice, such information is often unavailable a priori and therefore it is of interest to
find a way of combining the information embedded in an {\em array} of DC statistics indexed by $\varphi \in  [0, 1]$,
which works universally well over different ranges of change-point configurations
determined by $\eta_r$, $m_r$ and $\wt{\delta}_r$.

Recalling the data-driven partitioning of the panel achieved by the DC statistic,
the pointwise partitioning of the ordered CUSUM values 
may be regarded as analogous to fitting a step function with a single break to $|\cX^{(j)}_{s, b, e}|, \ j=1, \ldots, n$ at each $b$. 
Suppose that $\{|\cX^{(j)}_{s, b, e}|\}_{j=1}^n$ arises from an additive model
with a piecewise constant signal that contains a single break at $j=m_r$ 
for some $r \in \{1, \ldots, N\}$ and otherwise constant.
This setting is unlikely to be satisfied by the ordered CUSUM values in general, 
but provides a framework for linking the optimality in panel partitioning to that in locating a single change-point.
Then, \cite{brodsky1993} showed that the choice of $\varphi = 1/2$ leads to 
the best rate of convergence in the bias $|\wh{m}^\varphi_b - m_r|$ (recall (\ref{eq:wh:m}))
among $\varphi \in [0, 1]$, i.e., asymptotically, the optimal partitioning is attained by the choice of $\varphi = 1/2$. 

Taking into account these observations, we propose a new DC statistic
\begin{eqnarray*}
\cDzh_m(\{|\cX^{(j)}_{s, b, e}|\}_{j=1}^n) = 
\gamma_n\cD^0_m(\{|\cX^{(j)}_{s, b, e}|\}_{j=1}^n) + \cD^{1/2}_m(\{|\cX^{(j)}_{s, b, e}|\}_{j=1}^n),
\end{eqnarray*}
where $\gamma_n$ acts as a scaling factor that enables
treating $\cD^0_m$ and $\cD^{1/2}_m$ on an equal footing.
Heuristically, the discrepancy between $\cD^0_m$ and $\cD^{1/2}_m$ increases at the rate $m^{1/2}$,
but is not so pronounced for small values of $m$.
Therefore, $\cDzh_m$ can be viewed as an attempt at combining the different ranges of change-point configurations 
over which either $\cD^0_m$ or $\cD^{1/2}_m$ achieves consistency,
and the following conditions modifying (A5) and (B2) reflect this point. 

\begin{itemize}
\item[(A5')] $(n^{1/2}\log\,T)^{-1}(\gamma_n \vee m_1^{1/2})\wt{\delta}_1 T^{\beta/2} \to \infty$ as $T \to \infty$.
\item[(B2')] We have $\uDel = \min_{1 \le r \le N} (\gamma_n \vee m_r^{1/2})\wt{\delta}_r$ satisfy 
$(n^{1/2}\log\,T)^{-1} \uDel T^{7\beta/4-3/2} \to \infty$ as $T \to \infty$.
\end{itemize}

Then, the consistency of DC statistics carry over to that of the newly defined DC statistic $\cDzh_m$ as below.
\begin{thm}
\label{thm:four} 
Assume $\omega > 0$ such that $n \to \infty$ as $T \to \infty$,
and $n^{-1/2}\gamma_n \to 0$ as $n \to \infty$. 
Let $\cDzh_m$ replace $\cD^\varphi_m$ in deriving the test statistics where applicable.
\begin{itemize}
\item[(a)] Assume that (A1)--(A4), (A5') and (A6) hold.
Equipped with the test criterion $\pizh_{n, T}$ satisfying 
$C'n^{1/2}\log\,T <$ $\pizh_{n, T}<$ $C''(\gamma_n \vee m_1^{1/2})\wt{\delta}_1 T^{\beta/2}$ for some $C', C'' > 0$,
Theorems \ref{thm:zero}--\ref{thm:one} hold with 
$\ept = n\{(\gamma_n \vee m_1^{1/2})\wt\delta_1\}^{-2}\log^2\,T$ in the latter.

\item[(b)] Assume that (A1)--(A3), (A6), (B1) and (B2') hold.
Equipped with the test criterion satisfying
$C'n\uDel^{-1}T^{5(1-\beta)/2}\log\,T <$ $\pizh_{n, T} <$ $C''\uDel T^{\beta-1/2}$ for some $C', C'' > 0$,
Theorem \ref{thm:two} holds with 
$\ept = n\uDel^{-2} T^{5(1-\beta)}\log^2\,T$.
\end{itemize}
\end{thm}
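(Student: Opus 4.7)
The plan is to leverage the linearity of $\cDzh_m = \gamma_n \cD^0_m + \cD^{1/2}_m$. Since the input $|\cX^{(j)}_{s,b,e}|$ is arranged in decreasing order, both $\cD^0_m$ and $\cD^{1/2}_m$ are pointwise non-negative, so for every $(s,b,e,m)$ one has the sandwich
\[
\max\bigl(\gamma_n \cD^0_m,\, \cD^{1/2}_m\bigr) \;\le\; \cDzh_m \;\le\; \gamma_n \cD^0_m + \cD^{1/2}_m,
\]
and the same two-sided inequality is preserved after maximising over $b$ and $m$. All of the probabilistic ingredients are already available in the proofs of Theorems \ref{thm:zero}--\ref{thm:two} and only need to be re-bookkept.

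For the upper bound---either under the null of part (a), or on any sub-interval not containing a change-point in part (b)---I would import the existing uniform inequality $\cD^\varphi_m \le C n^\varphi \log T$ that holds on an event of probability tending to one. Substituting $\varphi = 0$ and $\varphi = 1/2$ into the right half of the sandwich and invoking $\gamma_n = o(n^{1/2})$,
\[
\cDzh_m \;\le\; C\gamma_n \log T + C n^{1/2}\log T \;\le\; 2C\, n^{1/2}\log T,
\]
which matches the lower bound $C' n^{1/2}\log T$ (respectively $C' n\,\uDel^{-1}T^{5(1-\beta)/2}\log T$ after the union bound over the BS sub-intervals used in the proof of Theorem \ref{thm:two}) that $\pizh_{n,T}$ is required to exceed. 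This delivers the no-spurious-detection half of the argument.

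For the lower bound, specialise $b = \eta_r$ at a true change-point and import, from the proofs of Theorems \ref{thm:one}--\ref{thm:two}, the separate lower bounds $\cD^0_{m_r} \gtrsim \wt\delta_r T^{\beta/2}$ and $\cD^{1/2}_{m_r} \gtrsim m_r^{1/2}\wt\delta_r T^{\beta/2}$. Applying the left half of the sandwich at $m = m_r$ yields
\[
\max_{b,m}\cDzh_m \;\gtrsim\; (\gamma_n \vee m_r^{1/2})\, \wt\delta_r T^{\beta/2},
\]
which matches the upper bound $C''(\gamma_n \vee m_1^{1/2})\wt\delta_1 T^{\beta/2}$ on $\pizh_{n,T}$ in part (a), and its BS analogue in (b). Conditions (A5') and (B2') are precisely what is needed for this lower bound to outgrow the upper bound. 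The localisation rate $\ept$ is then obtained by replaying the standard CUSUM argument that, for $b$ with $|b-\eta_r| > c_0 \ept$, the signal part of $\cD^\varphi_m$ at $b$ is smaller than at $\eta_r$ by an amount exceeding the stochastic fluctuation; applying this separately at $\varphi = 0$ and $\varphi = 1/2$ and combining through the sandwich transfers it to $\cDzh_m$ with the rate governed by $\gamma_n \vee m_r^{1/2}$ in place of $m_r^\varphi$.

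The main obstacle is uniformity in $m$: the maximiser $\wh{m}$ of $\cDzh_m$ over $\{1,\ldots,n\}$ need not coincide with the maximiser of either constituent, so the upper bound must hold uniformly in $m$ at the same noise realisation used for the lower bound. This uniformity is already supplied by the union-bound-over-$m$ step in the existing proofs, so no new probabilistic machinery is required; the remaining work is careful bookkeeping to track the $(\gamma_n \vee m_r^{1/2})\wt\delta_r$ rate through the inductive multiple change-point argument underlying part (b), in particular verifying that the stopping rule terminates exactly after $N$ splits on the good event.
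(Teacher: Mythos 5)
Your size and detection arguments are essentially workable, but note one needed repair: under (A5$'$) alone you cannot import \emph{both} constituent lower bounds. For example, when $\gamma_n \ge m_1^{1/2}$, (A5$'$) guarantees $\gamma_n\wt\delta_1 T^{\beta/2} \gg n^{1/2}\log\,T$, hence $\wt\delta_1 T^{\beta/2}\gg\log\,T$ and the bound $\cD^0_{m_1}\gtrsim \wt\delta_1 T^{\beta/2}$ at $b=\eta_1$ is valid; but it does \emph{not} give $m_1^{1/2}\wt\delta_1 T^{\beta/2}\gg n^{1/2}\log\,T$, so the signal part of $\cD^{1/2}_{m_1}$ may be swamped by its own noise level $n^{1/2}\log\,T$ and your second imported bound fails. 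This is harmless for detection, since the max in your sandwich only needs the dominant constituent, but it must be run as a case analysis rather than by asserting both bounds.

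The genuine gap is in the localisation step. The estimator is $\heta_1=\arg\max_{b, m}\cDzh_m$, the argmax of the \emph{sum}, and argmax properties do not transfer through your sandwich: the inequality $\max(\gamma_n\cD^0_m, \cD^{1/2}_m)\le\cDzh_m$ says nothing about where $\cDzh_m$ is maximised. ``Applying the standard CUSUM argument separately at $\varphi=0$ and $\varphi=1/2$'' requires each constituent's signal to dominate its own noise, i.e.\ (A5) at both exponents, which (A5$'$) does not imply (same example as above). In the regime where $\cD^{1/2}_m$ is noise-dominated, its fluctuations, of order $n^{1/2}\log\,T$, enter the objective being maximised, and nothing in your argument prevents them from displacing the argmax away from $\eta_1$; this is essentially the difficulty the paper itself raises in Section \ref{sec:varphi} about locating a change-point from two separate statistics. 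The paper's proof avoids any decomposition: since $\cD^0_m$ and $\cD^{1/2}_m$ are the \emph{same} difference of partial averages up to the scaling factor, one has $\cDzh_m = \bigl\{\gamma_n+\sqrt{m(2n-m)/(2n)}\bigr\}\bigl\{\frac{1}{m}\sum_{j\le m}|\cX^{(j)}_{s, b, e}| - \frac{1}{2n-m}\sum_{j > m}|\cX^{(j)}_{s, b, e}|\bigr\}$, so at the maximising ordering and signs $\cDzh_{\wh{m}_1}$ equals the CUSUM $\cC_b$ of a \emph{single} additive series $y_{0, t}=g_{0, t}+\xi_{0, t}$ built with the combined weight, whose signal jump is $\asymp(\gamma_n\vee m_1^{1/2})\wt\delta_1$ and whose noise satisfies the bounds (\ref{xi:bound:one})--(\ref{xi:bound:two}) at order $n^{1/2}\log\,T$ (using $\gamma_n=o(n^{1/2})$). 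Lemma \ref{lem:four} and the proofs of Theorems \ref{thm:zero}--\ref{thm:two} then apply verbatim and yield $\ept=n\{(\gamma_n\vee m_1^{1/2})\wt\delta_1\}^{-2}\log^2\,T$, and likewise for part (b). To complete your proof you need this single-series representation, or an equivalent argument showing that the dominant constituent's signal gap beats the \emph{combined} noise level $n^{1/2}\log\,T$ uniformly over $b$ and $m$ --- which amounts to the same computation; the sandwich alone cannot deliver the localisation rate.
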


For the purpose of testing only, it is reasonable
to combine the two tests based on $\cD^0_m$ and $\cD^{1/2}_m$ as in \cite{enikeeva2014},
namely, $(\cT^0_{1, T} > \pi^0_{n, T}) \vee (\cT^{1/2}_{1, T} > \pi^{1/2}_{n, T})$.
Adopting such an approach, however, it is not clear how to identify the location of a change-point once its presence is detected,
since we cannot exclude the possibility that it is detected by both $\cT^0_{1, T}$ and $\cT^{1/2}_{1, T}$,
each of which estimates its location with bias.
This becomes further complicated in the presence of multiple change-points of different change-point configurations.
On the other hand, while $\gamma_n$ needs to be selected additionally,
we can readily identify the change-point locations using $\cDzh_m$.

Theorem \ref{thm:four} assumes that $n^{-1/2}\gamma_n \to 0$ only.
However, as the formulation of (A5') and (B2') suggests, 
guaranteed improvement of the high-dimensional efficiency of $\cDzh_m$ over that of both $\cD^0_m$ and $\cD^{1/2}_m$,
still requires the knowledge of unknown sparsity or density of the change-point for the choice of $\gamma_n$.
In Section \ref{sec:sim}, $\gamma_n = \log\,n$ is considered
in investigating the finite sample performance of $\cDzh_m$ along with that of $\cD^0_m$ and $\cD^{1/2}_m$. 

\subsection{Bootstrap for test criterion selection}
\label{sec:threshold}

Theorems \ref{thm:zero}--\ref{thm:four} provide ranges for the rate of $\pi^\varphi_{n, T}$ 
which grant size and power consistency for both single and multiple change-point detection problems.
However, the theoretical rates involve typically unattainable knowledge
on the quantities such as $\beta$ and $\uDel_\varphi$ and, even with such knowledge available,
finite sample performance may be heavily influenced by the choice of the multiplicative constant applied to the given rate.
Therefore, we propose a resampling procedure that enables us to 
approximate the quantiles of $\cT^\varphi_{s, e}$ under the null hypothesis of no change-points.

Originally proposed as a data-based simulation method for statistical inference with i.i.d. random samples,
the bootstrap principle has been extended to a variety of non-i.i.d. situations; see \cite{hardle2003} for a comprehensive survey. 
Although some heuristic attempts have been made \citep{fiecas2014}, 
applying bootstrap methods developed for time series of fixed dimensions to high-dimensional settings is challenging.
Some recent efforts in this direction include \cite{jentsch2015}, 
who proposed the Linear Process Bootstrap for multivariate time series
and established its asymptotic validity for the sample mean when $n$ is allowed to increase with $T$.
The procedure involves the estimation of $(nT)\times(nT)$-dimensional covariance matrix 
of the vectorised version of $[\bvep_t, \, t=1, \ldots, T]$ and the computation of its Cholesky decomposition,
where the latter task alone is of computational complexity $O(n^3T^3)$,
which makes it difficult to apply the method even to panel data of moderately large dimensions.

Instead, we propose a bootstrap procedure motivated by the representation theory developed for the Generalised Dynamic Factor Model (GDFM).
Factor analysis is a popular dimension reduction technique used in many disciplines, such as 
econometrics, statistics, signal processing, psychometrics, and chemometrics. 
The key idea is that pervasive cross-correlations in $\bvep_t$ are modelled by the {\em common} component $\bchi_t$,
and $\bxi_t = \bx_t - \bchi_t$ with moderate degree of cross-correlations denotes the {\em idiosyncratic} component. 
The GDFM introduced in \cite{forni2000} steps further and admits the following representation theorem \citep{forni2001}:
any $\bvep_t$ with a finite number ($q < \infty$) of diverging dynamic eigenvalues 
is decomposed into $\bchi_t$ driven by a $q$-tuple of white noises $\bu_t$ ({\em common shocks}) as $\bchi_t = \bb(L)\bu_t$
($L$ denotes the lag operator and $\bb(L)$ is an $n \times q$-matrix of one-sided and square-summable filters $b_{ik}(L)$), 
and $\bxi_t$ with bounded dynamic eigenvalues.
Referred to as the GDFM-boot, the proposed bootstrap method utilises this representation property of the GDFM
in order to effectively handle the cross-correlations as well as within-series correlations present in $\bvep_t$ of large dimensions.

\begin{description}
\item[The GDFM-boot algorithm]
\item[Step 1] Obtain 
$\bE = [\wh\bvep_t, \, t=1, \ldots, T]$ where $\wh\vep_{j, t} = \wh{\sigma}_j^{-1}\bar\vep_{j, t}$ (refer to (\ref{eq:vep:est}) and Remark \ref{rem:scaling:multi} for the definition of $\bar\vep_{j, t}$).

\item[Step 2] Let $\mathbf{e}_l$ denote the eigenvector of the covariance matrix $T^{-1}\bE\bE^\top$, which corresponds to its $l$-th largest eigenvalue. 
Then, estimate $q$, the number of common shocks driving the cross-correlations of $\bE$ 
using the information criterion proposed by \cite{bai2002}:
$IC(k) = \log(T^{-1}\sum_{t=1}^T \Vert\wh{\bvep}_t - \sum_{l=1}^k\mathbf{e}_l\mathbf{e}_l^\top\wh{\bvep}_t\Vert_2^2) + 
kC_{n, T}^{-1}\log(C_{n, T})$,
as $q = \arg\min_{0 \le k \le Q} IC(k)$ with $Q = \lfloor C_{n, T}/\log(C_{n, T}) \rfloor$ and $C_{n, T} = n \wedge T$.

\item[Step 3] Estimate the $q$-dimensional common shocks ($\wh{\bu}_t$) and the associated filter ($\wh\bb(L)$),
such that $\wh\bvep_t$ is decomposed into 
$\wh\bchi_t = \wh\bb(L)\wh\bu_t$ and $\wh\bxi_t = \wh\bvep_t - \wh\bchi_t$.

\item[Step 4] For $l=1, \ldots, B$, repeat the following steps. 
\begin{description}
\item[Step 4.1] For each $k=1, \ldots, q$, draw $\{u^l_{k, t}\}_{t=1}^T$
independently from the empirical distribution of $\{\wh{u}_{k, t}\}_{t=1}^T$,
from which $\bchi^l_t = \wh\bb(L)\bu^l_t$ is obtained.
\item[Step 4.2] Generate a bootstrap sequence of the Fourier coefficients of $\wh\bxi_t$
using the Time Frequency Toggle (TFT)-Bootstrap \citep{kirch2011}, to which the inverse fast Fourier transform algorithm 
is applied to produce $\bxi_t^l$.
\item[Step 4.3] On the bootstrap series $\bvep^l_t = \bchi^l_t + \bxi^l_t$, compute
$\cE^{j, l}_{1, b, T} = \cC_b(\{\vep^l_{j, t}\}_{t=1}^T)$,
from which
$\cT^{\varphi, l}_{1, T} = \max_{b\in[1, T), \, 1 \le m \le n}\cD^\varphi_m(\{|\cE^{(j), l}_{1, b, T}|\}_{j=1}^n)$
is obtained where $|\cE^{(1), l}_{1, b, T}| \ge \ldots \ge |\cE^{(n), l}_{1, b, T}|$. 
\end{description}

\item[Step 5] Select the $(1-\alpha)$-quantile of $\cT^{\varphi, l}_{1, T}, \, l=1, \ldots, B$ as $\pi^\varphi_{n, T}$.
\end{description}

In Step 1, we adopt the same approach taken in estimating $\sigma_j$,
namely first to estimate the locations of the change-points coordinate-wise and then 
to take away the estimated piecewise constant signal from each $x_{j, t}$,
in order to estimate the unobservable $\vep_{j, t}$. 
Note that the practice of input data studentisation is often adopted prior to factor modelling.
Step 2 is justified by the observations that 
(i) the number of factors in the static representation of factor models
is typically greater than the number of common shocks in the GDFM, and
(ii) over-estimated $q$ still returns mean-square consistent common components \citep[Corollary 2]{forni2000}.
Section 4 of \cite{forni2000} provides an estimator of $\bb(L)$ (and hence $\bu_t$) 
based on the windowed estimator of the spectral density matrix of $\bE$, which is adopted for Step 3. 

Step 4.1 produces a bootstrap sample of $\wh\bchi_t$ by treating the white noise estimates $\wh{u}_{k, t}$ as being i.i.d. over time.
In Step 4.2, the Local Bootstrap, originally proposed in \cite{kirch2011} as a part of TFT-Bootstrap for resampling univariate time series,
is applied to the $n$-dimensional $\wh\bxi_t$ of bounded cross-sectional correlations.
It does not require an initial estimate of the spectral density matrix of $\wh\bxi_t$.
Instead, the Fourier coefficients of $\wh\bxi_t$ are resampled according to a kernel-based probability vector,
under the observation that in a neighbourhood of each frequency, 
the distribution of different coefficients is almost identical (if the spectral density is smooth).
A detailed description of the Local Bootstrap is provided in the Appendix A of the supplement document.
In simulation studies, we used the Daniell kernel with the window size $[0.05T]$. 
Since the common and idiosyncratic components are handled independently, 
generating a smaller size bootstrap sample (e.g., $\lceil \sqrt{B}\rceil$) for each component may be sufficient
to generate the bootstrap sample of size $B$ for $\bvep_t$.

Once a bootstrap sample is generated at level $p=1$ of the DCBS algorithm,
the same sample can be used repeatedly for critical value selection at levels $p \ge 2$.
That is, in Step 4.3 above, the test statistics are computed as
$\cT^{\varphi, l}_{s', e'} = \max_{b\in[s', e'), \, 1 \le m \le n}\cD^\varphi_m(\{|\cE^{(j), l}_{s', b, e'}|\}_{j=1}^n)$
over a moving window of size $e-s+1$ 
for $1 \le s' \le T-e+s$, $e' = s'-s+e$ and $l=1, \ldots, B$,
from which the test criterion is drawn.

Establishing the validity of the GDFM-boot algorithm for change-point analysis 
is beyond the scope of the current paper and hence is left for the future investigation.
However, simulation studies in Section \ref{sec:sim} confirm its good performance 
in various settings together with the proposed change-point statistic.
Computationally, the GDFM-boot algorithm benefits from the dimension reduction via factor modelling.
When applied to generate a bootstrap sample of size $B = 100$ with $n = 25$ and $T = 100$
(executed on a $3.10$GHz quad-core with $8$GB of RAM running Windows $7$),
the \texttt{R} code implementing the algorithm took $0.38$ seconds,
compared to $62$ seconds taken by the Linear Process Bootstrap.
Further, applying the latter was not computationally feasible for any data of the size and dimensionality considered in our simulation study.

\section{Simulation study}
\label{sec:sim}

\subsection{Single change-point detection}
\label{sec:sim:single}

In this section, we evaluate the empirical performance 
of the DC statistic on simulated panel data with (at most) a single change-point.
For comparison, change-point tests with two different choices of $\varphi=0, 1/2$ (referred to as $\cT^0$ and $\cT^{1/2}$)
as well as the combined DC statistic ($\cTzh$) from Section \ref{sec:varphi} are considered,
with $\pi^\varphi_{n, T}$ computed from the GDFM-boot algorithm.
Also included are 
\begin{itemize}
\item $\cTeh_{1, T} = (\cTlin_{1, T} > 1) \vee (\cTscan_{1, T} > 1)$ equipped with the thresholds 
$H$ chosen as $(1-\alpha/2)$-quantile of the $\chi^2_n$-distribution and
$T_m = \frac{2}{\sqrt{2m}}\{m\log(ne/m)+\log(nT/\alpha)\}$,
as per the recommendation made in Section 5 of \cite{enikeeva2014}
(referred to as $\cTeh$);
\item $\cTj_{1, T}$ with the test criterion chosen according to 
the bootstrap algorithm (Algorithm 4.6) proposed in \cite{jirak2014} with the block size $K = 4$
(referred to as $\cTj$).
\end{itemize}
Since the scaling of component series is not discussed in \cite{enikeeva2014}, 
we choose to apply $\wh{\sigma}_j$ estimated as in Remark \ref{rem:scaling} in deriving $\cTeh_{1, T}$.
\cite{jirak2014} showed the consistency of the proposed test when applied with a set of long-run variance estimators for $\sigma_j$. 
As discussed in Section \ref{sec:sim:single:res}, $\cTj$ is highly sensitive to the scaling terms estimated spuriously small,
and therefore we use the most conservative estimator among those included in Proposition 3.5 of \cite{jirak2014}.
$\cTh_{1, T}$ is similarly constructed as $\cTlin_{1, T}$ with the same high-dimensional efficiency,
and $\cTeh$ is expected to perform better than either of the two component tests. Hence $\cTh_{1, T}$ is omitted from our study.

As described in Section \ref{sec:sim:single:model}, each coordinate of $\bvep_t$ is generated from the same model,
which enables us to select a single threshold $\pi_T$ applicable to all $n$ component series in deriving $\cTs_{1, T}(\pi_T)$.
Hence we include $\cTs_{1, T}(\pi_T)$ equipped with the ``oracle'' threshold in the comparative study (referred to as $\cTs$),
where $\pi_T$ is chosen from the GDFM-boot algorithm with (unobservable) $\bvep_t$ replacing the estimated $\bar{\bvep}_t$.
Then, $\pi_T$ is the $(1-\alpha)$-quantile of $\max_{b\in[1, T), \, 1 \le j \le n}|\cE^{j, l}_{1, b, T}|, \, l=1, \ldots, B$.

When a bootstrap procedure is involved for test criterion selection, the bootstrap sample size is fixed at $B = 100$.
We report the Type I error and the size-corrected power at the significance level $\alpha = 0.05$, 
as well as the location accuracy ($|\heta_1 - \eta_1| < \log\,T$, in $\%$) for $\cT^0$, $\cT^{1/2}$, $\cTzh$, $\cTj$ and $\cTs$
over $100$ simulated sample paths for each setting; for $\cTeh$, it is not suggested how the change-point location is to be estimated. 
For all tests, $d_T = 5$ is used to trim off the extreme ends of the interval $[1, T]$.
Also, we investigate the partitioning accuracy of $\cT^0$, $\cT^{1/2}$, $\cTzh$, $\cTj$ and $\cTs$ by reporting the Rand index,
the sum of true positives and true negatives divided by the total ($n$),
where Rand index close to $1$ indicates more accurate partitioning (binary classification).

\subsubsection{Data generating models}
\label{sec:sim:single:model}

We consider piecewise constant signals $\{f_{j, t}\}_{t=1}^{T}$
with varying sparsity/density ($m_1$), size of jumps ($|\delta_{j, 1}| \sim \cU(0.75\delta_1, 1.25\delta_1)$ 
for $j\in\Pi_1$ with randomly assigned signs) and locations $t = \eta_1$.

Motivated by \cite{jirak2014}, $\bvep_t$ is generated from the following two models:
\begin{description}
\item[(N1) ARMA($2, 2$) model.] With $\varrho_i = \varrho(i+1)^{-1}$ and $\sigma_v = 0.1\varrho^{-1}$, 
\beq
u_{j, t} &=& \sum_{i=0}^{99} \varrho_i v_{j-i, t}, \qquad v_{j, t} \sim_{\iid} \cN(0, \sigma_v^2), \label{eq:sim:u}
\\
\vep_{j, t} &=& 0.2\vep_{j, t-1}-0.3\vep_{j, t-2}+u_{j, t}+0.2u_{j, t-1}. \nonumber
\eeq
\item[(N2) Factor model.] $u_{j, t}$ is generated as in (\ref{eq:sim:u}) with $\varrho = 0.2$ and $\sigma_v = 0.5\sqrt{1-\varrho_h^2}$, and
\beqs
\vep_{j, t} &=& \varrho_h h_t + 0.2\vep_{j, t-1}-0.3\vep_{j, t-2}+u_{j, t}+0.2u_{j, t-1}, \, h_t \sim_{\iid} \cN(0, 0.1^2). \nonumber
\eeqs
\end{description}
It is easily seen that the degree of cross-sectional correlations is controlled by the choice of $\varrho \in \{0.2, 0.5\}$ in (N1)
and $\varrho_h \in \{0.5, 0.9\}$ in (N2).

\subsubsection{Results}
\label{sec:sim:single:res}

Table \ref{table:single:null} compares the Type I errors of different tests in consideration. 
Combined with the DGFM-boot procedure, the DC statistic-based tests generally manages to control the Type I errors below the nominal level ($\alpha = 0.05$) or slightly above,
with the exception of the case when $n=T=250$ and the ARMA model (N1) is used to generate the noise with $\varrho = 0.5$.
The oracle threshold for $\cTs$ leads to a very conservative test, 
which is also evident in the power performance of $\cTs$.

$\cTj$ appears to be highly sensitive and therefore vulnerable to the choice of $\wh{\sigma}_j$
particularly when the critical value is selected by the parametric bootstrap (not reported here),
since the test statistic directly depends on the largest CUSUM value attained by a single component series. 
When the noise is generated as in (N2) with $\varrho_h = 0.9$, 
the size of $\cTj$ is the closest to the nominal level,
which is attributed to the presence of a strong factor
as it leads to the scaling terms being estimated homogeneously across the panel.
This presence of a strong factor has an adverse effect on the size of $\cTeh$,
which is originally proposed for independent panel data (see also Remark \ref{rem:ef}).

To observe the power behaviour, we present the results when $n=250$ and $T=100$ as a representative example
in Tables \ref{table:single:one}--\ref{table:single:two} and report the rest in Appendix C. 
Increasing sample size generally leads to improved power and accuracy in estimated change-point location.
As for the dimensionality, its increase has different effects depending on the error generating model:
while there is no strong discernible trend with regards to increasing $n$ in the case of (N1),
it brings in dramatic improvement in the performance of $\cT^0$ for the noise generated from (N2), 
especially with increasing influence of the factor (when $\varrho_h = 0.9$).

The tests tend to lose power when the change-point is sparse, the jumps are of smaller magnitude
and its location is distanced away from the centre,
and the similar arguments apply to the location and partitioning accuracy.
Note that the Rand index occasionally decreases with increasing $m_1$,
as it measures both true positives and true negatives where the former may decrease with growing density of the change-point.
This indicates that over the range of $\delta_1$ considered here, 
some $x_{j, t}, \, j\in\Pi_1$ may not contribute to change-point detection due to small jump size $|\delta_{j, 1}|$.

When $\bvep_t$ is generated from (N1) (where the cross-sectional correlations are relatively small),
$\cTeh$ and $\cTzh$ consistently achieve high (size-corrected) power over the entire range of change-point configurations,
and the latter also achieves high location accuracy and Rand index.
In the case of (N2), $\cT^0$ outperforms all the other tests considered in our study in all change-point configurations and criteria for evaluation.
Although not reported here, the performance of $\cTzh$ in this particular setting can be improved by using a greater value for $\gamma_n$, 
which prompts the development of a data-driven way of exploiting 
the information embedded in an array of DC statistics over $\varphi \in [0, 1]$.
When the size of the data increases ($T=250$), the performance of $\cTzh$ catches up with that of $\cT^0$ in this setting. 
Comparing $\cT^0$ and $\cT^{1/2}$, the former performs superior to the latter 
except for when the change-point is highly dense and the noise is generated from (N1).

$\cTj$ performs as well as $\cT^0$ and $\cTzh$ or, occasionally, even better when the change-point is centrally located,
but its performance deteriorates greatly when $\eta_1 = 0.1T$.
This can be explained by the presence of a multiplicative factor involving $b$ in (\ref{eq:jirak}).
$\sqrt{b(T-b)/T}|\cX^j_{1, b, T}|$ amounts to a CUSUM-based change-point test 
that attains the slowest rate at which its Type II error converges to zero
among the set of CUSUM statistics studied in Section 3 of \cite{brodsky1993},
with the rate depending on $T^{\beta-1}$ instead of 
$T^{\beta/2-1/2}$ as is the case with $|\cX^j_{1, b, T}|$; see Theorem 3.5.2 of \cite{brodsky1993} for further details.
Due to the conservative behaviour of the oracle threshold as observed in Table \ref{table:single:null},
$\cTs$ does not outperform the other tests. 

Interestingly, when $\cTj$, $\cT^0$, $\cT^{1/2}$ and $\cTzh$ attain the similar level of power,
the latter two achieve higher accuracy in locating the change-point.
It is attributed to the fact that the latter usually select greater $\wh{m}^\varphi_b$ in (\ref{eq:wh:m}), 
which is evidenced by the higher Rand index values.

\begin{table}[htbp]
\caption{Type I error when $\alpha=0.05$;
$n=100$ (top) and $n=250$ (bottom).}
\label{table:single:null}
\centering
\scriptsize{
\begin{tabular}{@{}c@{}@{}c@{}|cccccc|cccccc}
\hline	\hline														
& 	& 	\multicolumn{6}{c}{$T=100$} & 						\multicolumn{6}{c}{$T=250$} 						\\	
& 	$\varrho/\varrho_h$ & 	 $\cT^0$ & 	$\cT^{1/2}$ & 	$\cTzh$ &	 $\cTj$ & 	$\cTeh$ & 	$\cTs$ &	 $\cT^0$ & 	$\cT^{1/2}$ & 	$\cTzh$ &	 $\cTj$ & 	$\cTeh$ & 	$\cTs$ 	\\	\hline
\multirow{2}{*}{(N1)} &	$0.2$ &	0.06 &	0.05 &	0.06 &	0.15 &	0.08 &	0 &	0.01 &	0.07 &	0.01 &	0.11 &	0.1 &	0	\\	
&	$0.5$ &	0.04 &	0.03 &	0.04 &	0.19 &	0.08 &	0 &	0.02 &	0.06 &	0.02 &	0.14 &	0.11 &	0	\\	
\multirow{2}{*}{(N2)} &	$0.5$ &	0.08 &	0.04 &	0.07 &	0.13 &	0.16 &	0 &	0.01 &	0.08 &	0.01 &	0.18 &	0.23 &	0	\\	
&	$0.9$ &	0.04 &	0.04 &	0.04 &	0.1 &	0.57 &	0 &	0.04 &	0.05 &	0.05 &	0.05 &	0.64 &	0	\\	\hline
\multirow{2}{*}{(N1)} &	$0.2$ &	0.06 &	0 &	0.06 &	0.15 &	0.07 &	0 &	0.07 &	0.03 &	0.07 &	0.18 &	0.09 &	0	\\	
&	$0.5$ &	0.06 &	0.01 &	0.04 &	0.21 &	0.09 &	0 &	0.12 &	0.01 &	0.1 &	0.22 &	0.05 &	0	\\	
\multirow{2}{*}{(N2)} &	$0.5$ &	0.07 &	0.04 &	0.05 &	0.04 &	0.35 &	0 &	0.02 &	0.05 &	0.05 &	0.08 &	0.34 &	0	\\	
&	$0.9$ &	0.04 &	0.07 &	0.07 &	0.1 &	0.61 &	0 &	0.05 &	0.05 &	0.05 &	0.06 &	0.75 &	0	\\	\hline
\end{tabular}}
\end{table}

\begin{sidewaystable}
\caption{$n = 250$, $T = 100$ and $\alpha=0.05$: (N1) with $\varrho = 0.2$ (top) and $\varrho = 0.5$ (bottom).}
\label{table:single:one}
\centering
\scriptsize{
\begin{tabular}{ccc|cccccc|ccccc|ccccc}
\hline\hline
& 	& 	&	\multicolumn{6}{c}{size-corrected power} & 						\multicolumn{5}{c}{location accuracy ($\%$)} & 					\multicolumn{5}{c}{Rand index} \\							
$\delta_1$ & 	$m_1$ & 	$\eta_1$ &	 $\cT^0$ & 	$\cT^{1/2}$ & 	$\cTzh$ &	 $\cTj$ & 	$\cTeh$ & 	$\cTs$ &	 $\cT^0$ & 	$\cT^{1/2}$ & 	$\cTzh$ &	 $\cTj$ & 	$\cTs$ &	 $\cT^0$ & 	$\cT^{1/2}$ & 	$\cTzh$ &	 $\cTj$ & 	$\cTs$ 	\\	\hline	
 \multirow{6}{*}{0.05} & 	\multirow{2}{*}{$\log\,n$} & 	$0.1T$ &	0.08 &	0.02 &	0.08 &	0.03 &	0.06 &	0.08 &	2 &	0 &	2 &	0 &	2 &	0.08 &	0.02 &	0.08 &	0.03 &	0.08	\\		
&	&	$0.5T$ &	0.1 &	0.03 &	0.09 &	0.11 &	0.06 &	0.08 &	0 &	0 &	0 &	6 &	0 &	0.10 &	0.03 &	0.09 &	0.11 &	0.08	\\		
&	\multirow{2}{*}{$\sqrt{n}$} & 	$0.1T$ &	0.11 &	0.03 &	0.09 &	0.07 &	0.02 &	0.07 &	1 &	0 &	1 &	0 &	1 &	0.10 &	0.03 &	0.08 &	0.06 &	0.06	\\		
&	&	$0.5T$ &	0.24 &	0.1 &	0.25 &	0.3 &	0.19 &	0.21 &	10 &	2 &	10 &	18 &	9 &	0.21 &	0.10 &	0.22 &	0.27 &	0.19	\\		
&	\multirow{2}{*}{$0.4n$} & 	$0.1T$ &	0.14 &	0.17 &	0.14 &	0.06 &	0.15 &	0.11 &	2 &	13 &	2 &	0 &	1 &	0.03 &	0.11 &	0.03 &	0.01 &	0.02	\\		
&	&	$0.5T$ &	0.56 &	1 &	0.65 &	0.77 &	0.97 &	0.6 &	23 &	96 &	31 &	53 &	26 &	0.12 &	0.77 &	0.16 &	0.18 &	0.13	\\	\hline	
 \multirow{6}{*}{0.075} & 	\multirow{2}{*}{$\log\,n$} & 	$0.1T$ &	0.09 &	0.01 &	0.09 &	0.05 &	0.04 &	0.07 &	0 &	0 &	0 &	0 &	1 &	0.09 &	0.01 &	0.09 &	0.05 &	0.07	\\		
&	&	$0.5T$ &	0.31 &	0.05 &	0.31 &	0.41 &	0.16 &	0.25 &	16 &	0 &	17 &	31 &	14 &	0.30 &	0.05 &	0.30 &	0.40 &	0.24	\\		
&	\multirow{2}{*}{$\sqrt{n}$} & 	$0.1T$ &	0.1 &	0.03 &	0.1 &	0.06 &	0.03 &	0.1 &	1 &	0 &	1 &	0 &	3 &	0.09 &	0.03 &	0.09 &	0.05 &	0.09	\\		
&	&	$0.5T$ &	0.51 &	0.15 &	0.53 &	0.69 &	0.53 &	0.48 &	22 &	5 &	23 &	47 &	22 &	0.45 &	0.15 &	0.47 &	0.62 &	0.43	\\		
&	\multirow{2}{*}{$0.4n$} & 	$0.1T$ &	0.23 &	0.87 &	0.24 &	0.11 &	0.75 &	0.24 &	12 &	86 &	12 &	0 &	16 &	0.05 &	0.66 &	0.05 &	0.02 &	0.05	\\		
&	&	$0.5T$ &	0.89 &	1 &	1 &	1 &	0.97 &	0.98 &	55 &	100 &	90 &	75 &	75 &	0.19 &	0.87 &	0.53 &	0.30 &	0.23	\\	\hline	
 \multirow{6}{*}{0.1} & 	\multirow{2}{*}{$\log\,n$} & 	$0.1T$ &	0.12 &	0.02 &	0.14 &	0.04 &	0.04 &	0.11 &	5 &	0 &	5 &	0 &	5 &	0.12 &	0.02 &	0.14 &	0.04 &	0.11	\\		
&	&	$0.5T$ &	0.55 &	0.08 &	0.55 &	0.74 &	0.34 &	0.49 &	31 &	0 &	33 &	58 &	31 &	0.53 &	0.08 &	0.53 &	0.72 &	0.48	\\		
&	\multirow{2}{*}{$\sqrt{n}$} & 	$0.1T$ &	0.15 &	0.05 &	0.15 &	0.09 &	0.09 &	0.1 &	7 &	1 &	7 &	0 &	4 &	0.13 &	0.05 &	0.13 &	0.08 &	0.09	\\		
&	&	$0.5T$ &	0.76 &	0.47 &	0.82 &	0.94 &	0.96 &	0.74 &	52 &	33 &	58 &	74 &	57 &	0.68 &	0.47 &	0.74 &	0.86 &	0.66	\\		
&	\multirow{2}{*}{$0.4n$} & 	$0.1T$ &	0.42 &	1 &	0.68 &	0.11 &	0.97 &	0.51 &	28 &	100 &	55 &	0 &	34 &	0.09 &	0.84 &	0.26 &	0.02 &	0.11	\\		
&	&	$0.5T$ &	1 &	1 &	1 &	1 &	0.97 &	1 &	76 &	100 &	99 &	85 &	97 &	0.21 &	0.93 &	0.69 &	0.47 &	0.29	\\	\hline	\hline
 \multirow{6}{*}{0.05} & 	\multirow{2}{*}{$\log\,n$} & 	$0.1T$ &	0.08 &	0.02 &	0.08 &	0.03 &	0.06 &	0.08 &	2 &	0 &	2 &	0 &	2 &	0.08 &	0.02 &	0.08 &	0.03 &	0.08	\\		
&	&	$0.5T$ &	0.1 &	0.03 &	0.09 &	0.11 &	0.06 &	0.08 &	0 &	0 &	0 &	6 &	0 &	0.10 &	0.03 &	0.09 &	0.11 &	0.08	\\		
&	\multirow{2}{*}{$\sqrt{n}$} & 	$0.1T$ &	0.11 &	0.03 &	0.09 &	0.07 &	0.02 &	0.07 &	1 &	0 &	1 &	0 &	1 &	0.10 &	0.03 &	0.08 &	0.06 &	0.06	\\		
&	&	$0.5T$ &	0.24 &	0.1 &	0.25 &	0.3 &	0.19 &	0.21 &	10 &	2 &	10 &	18 &	9 &	0.21 &	0.10 &	0.22 &	0.27 &	0.19	\\		
&	\multirow{2}{*}{$0.4n$} & 	$0.1T$ &	0.14 &	0.17 &	0.14 &	0.06 &	0.15 &	0.11 &	2 &	13 &	2 &	0 &	1 &	0.03 &	0.11 &	0.03 &	0.01 &	0.02	\\		
&	&	$0.5T$ &	0.56 &	1 &	0.65 &	0.77 &	0.97 &	0.6 &	23 &	96 &	31 &	53 &	26 &	0.12 &	0.77 &	0.16 &	0.18 &	0.13	\\	\hline	
 \multirow{6}{*}{0.075} & 	\multirow{2}{*}{$\log\,n$} & 	$0.1T$ &	0.09 &	0.01 &	0.09 &	0.05 &	0.04 &	0.07 &	0 &	0 &	0 &	0 &	1 &	0.09 &	0.01 &	0.09 &	0.05 &	0.07	\\		
&	&	$0.5T$ &	0.31 &	0.05 &	0.31 &	0.41 &	0.16 &	0.25 &	16 &	0 &	17 &	31 &	14 &	0.30 &	0.05 &	0.30 &	0.40 &	0.24	\\		
&	\multirow{2}{*}{$\sqrt{n}$} & 	$0.1T$ &	0.1 &	0.03 &	0.1 &	0.06 &	0.03 &	0.1 &	1 &	0 &	1 &	0 &	3 &	0.09 &	0.03 &	0.09 &	0.05 &	0.09	\\		
&	&	$0.5T$ &	0.51 &	0.15 &	0.53 &	0.69 &	0.53 &	0.48 &	22 &	5 &	23 &	47 &	22 &	0.45 &	0.15 &	0.47 &	0.62 &	0.43	\\		
&	\multirow{2}{*}{$0.4n$} & 	$0.1T$ &	0.23 &	0.87 &	0.24 &	0.11 &	0.75 &	0.24 &	12 &	86 &	12 &	0 &	16 &	0.05 &	0.66 &	0.05 &	0.02 &	0.05	\\		
&	&	$0.5T$ &	0.89 &	1 &	1 &	1 &	0.97 &	0.98 &	55 &	100 &	90 &	75 &	75 &	0.19 &	0.87 &	0.53 &	0.30 &	0.23	\\	\hline	
 \multirow{6}{*}{0.1} & 	\multirow{2}{*}{$\log\,n$} & 	$0.1T$ &	0.12 &	0.02 &	0.14 &	0.04 &	0.04 &	0.11 &	5 &	0 &	5 &	0 &	5 &	0.12 &	0.02 &	0.14 &	0.04 &	0.11	\\		
&	&	$0.5T$ &	0.55 &	0.08 &	0.55 &	0.74 &	0.34 &	0.49 &	31 &	0 &	33 &	58 &	31 &	0.53 &	0.08 &	0.53 &	0.72 &	0.48	\\		
&	\multirow{2}{*}{$\sqrt{n}$} & 	$0.1T$ &	0.15 &	0.05 &	0.15 &	0.09 &	0.09 &	0.1 &	7 &	1 &	7 &	0 &	4 &	0.13 &	0.05 &	0.13 &	0.08 &	0.09	\\		
&	&	$0.5T$ &	0.76 &	0.47 &	0.82 &	0.94 &	0.96 &	0.74 &	52 &	33 &	58 &	74 &	57 &	0.68 &	0.47 &	0.74 &	0.86 &	0.66	\\		
&	\multirow{2}{*}{$0.4n$} & 	$0.1T$ &	0.42 &	1 &	0.68 &	0.11 &	0.97 &	0.51 &	28 &	100 &	55 &	0 &	34 &	0.09 &	0.84 &	0.26 &	0.02 &	0.11	\\		
&	&	$0.5T$ &	1 &	1 &	1 &	1 &	0.97 &	1 &	76 &	100 &	99 &	85 &	97 &	0.21 &	0.93 &	0.69 &	0.47 &	0.29	\\	\hline	
&	&	average &	0.35 &	0.34 &	0.38 &	0.37 &	0.41 &	0.35 &	19.06 &	29.78 &	24.78 &	24.83 &	22.11 &	0.20 &	0.29 &	0.26 &	0.24 &	0.19	\\	\hline	
\end{tabular}}
\end{sidewaystable}

\begin{sidewaystable}
\caption{$n = 250$, $T = 100$ and $\alpha=0.05$: (N2) with $\varrho_h = 0.5$ (top) and $\varrho_h = 0.9$ (bottom).}
\label{table:single:two}
\centering
\scriptsize{
\begin{tabular}{ccc|cccccc|ccccc|ccccc}
\hline\hline
& 	& 	&	\multicolumn{6}{c}{size-corrected power} & 						\multicolumn{5}{c}{location accuracy ($\%$)} & 					\multicolumn{5}{c}{Rand index} \\							
$\delta_1$ & 	$m_1$ & 	$\eta_1$ &	 $\cT^0$ & 	$\cT^{1/2}$ & 	$\cTzh$ &	 $\cTj$ & 	$\cTeh$ & 	$\cTs$ &	 $\cT^0$ & 	$\cT^{1/2}$ & 	$\cTzh$ &	 $\cTj$ & 	$\cTs$ &	 $\cT^0$ & 	$\cT^{1/2}$ & 	$\cTzh$ &	 $\cTj$ & 	$\cTs$ 	\\	\hline	
  \multirow{6}{*}{0.05} & 	\multirow{2}{*}{$\log\,n$} & 	$0.1T$ &	0.22 &	0.06 &	0.14 &	0.04 &	0 &	0.02 &	4 &	1 &	1 &	0 &	0 &	0.21 &	0.06 &	0.14 &	0.04 &	0.02	\\	
&	&	$0.5T$ &	0.3 &	0.07 &	0.19 &	0.08 &	0 &	0.05 &	12 &	0 &	6 &	6 &	1 &	0.29 &	0.07 &	0.18 &	0.08 &	0.05	\\	
&	\multirow{2}{*}{$\sqrt{n}$} & 	$0.1T$ &	0.2 &	0.1 &	0.16 &	0.02 &	0 &	0.02 &	5 &	1 &	4 &	0 &	0 &	0.18 &	0.10 &	0.15 &	0.02 &	0.02	\\	
&	&	$0.5T$ &	0.45 &	0.1 &	0.21 &	0.09 &	0 &	0.03 &	18 &	1 &	6 &	4 &	0 &	0.40 &	0.10 &	0.19 &	0.08 &	0.03	\\	
&	\multirow{2}{*}{$0.4n$} & 	$0.1T$ &	0.4 &	0.08 &	0.25 &	0.07 &	0 &	0.02 &	20 &	2 &	11 &	0 &	0 &	0.08 &	0.05 &	0.10 &	0.01 &	0.00	\\	
&	&	$0.5T$ &	0.84 &	0.49 &	0.88 &	0.55 &	0.62 &	0.1 &	40 &	39 &	67 &	39 &	6 &	0.18 &	0.38 &	0.45 &	0.14 &	0.02	\\	\hline
 \multirow{6}{*}{0.075} & 	\multirow{2}{*}{$\log\,n$} & 	$0.1T$ &	0.24 &	0.04 &	0.16 &	0.02 &	0 &	0.03 &	7 &	0 &	6 &	0 &	0 &	0.23 &	0.04 &	0.16 &	0.02 &	0.03	\\	
&	&	$0.5T$ &	0.68 &	0.08 &	0.48 &	0.26 &	0 &	0.11 &	40 &	1 &	27 &	16 &	8 &	0.66 &	0.08 &	0.47 &	0.25 &	0.11	\\	
&	\multirow{2}{*}{$\sqrt{n}$} & 	$0.1T$ &	0.32 &	0.09 &	0.21 &	0.07 &	0 &	0.03 &	20 &	1 &	8 &	0 &	1 &	0.28 &	0.09 &	0.19 &	0.06 &	0.03	\\	
&	&	$0.5T$ &	0.91 &	0.13 &	0.76 &	0.55 &	0.05 &	0.17 &	53 &	2 &	54 &	41 &	10 &	0.81 &	0.13 &	0.69 &	0.50 &	0.15	\\	
&	\multirow{2}{*}{$0.4n$} & 	$0.1T$ &	0.7 &	0.26 &	0.61 &	0.07 &	0.33 &	0.03 &	44 &	19 &	48 &	0 &	2 &	0.15 &	0.19 &	0.28 &	0.02 &	0.01	\\	
&	&	$0.5T$ &	1 &	1 &	1 &	0.9 &	0.7 &	0.7 &	54 &	100 &	100 &	65 &	50 &	0.21 &	0.90 &	0.72 &	0.31 &	0.15	\\	\hline
 \multirow{6}{*}{0.1} & 	\multirow{2}{*}{$\log\,n$} & 	$0.1T$ &	0.37 &	0.05 &	0.21 &	0.06 &	0 &	0.03 &	18 &	1 &	11 &	0 &	0 &	0.36 &	0.05 &	0.20 &	0.06 &	0.03	\\	
&	&	$0.5T$ &	0.95 &	0.08 &	0.85 &	0.68 &	0 &	0.49 &	71 &	0 &	64 &	55 &	41 &	0.92 &	0.08 &	0.83 &	0.67 &	0.48	\\	
&	\multirow{2}{*}{$\sqrt{n}$} & 	$0.1T$ &	0.66 &	0.07 &	0.35 &	0.03 &	0 &	0.04 &	46 &	2 &	23 &	0 &	2 &	0.59 &	0.07 &	0.32 &	0.03 &	0.04	\\	
&	&	$0.5T$ &	0.99 &	0.15 &	0.98 &	0.88 &	0.69 &	0.7 &	80 &	3 &	87 &	74 &	60 &	0.88 &	0.15 &	0.90 &	0.82 &	0.63	\\	
&	\multirow{2}{*}{$0.4n$} & 	$0.1T$ &	0.92 &	0.86 &	0.99 &	0.04 &	0.7 &	0.2 &	77 &	84 &	96 &	1 &	17 &	0.19 &	0.73 &	0.63 &	0.01 &	0.04	\\	
&	&	$0.5T$ &	1 &	1 &	1 &	0.97 &	0.7 &	0.99 &	77 &	100 &	100 &	86 &	96 &	0.21 &	0.94 &	0.84 &	0.52 &	0.27	\\	\hline
 \multirow{6}{*}{0.05} & 	\multirow{2}{*}{$\log\,n$} & 	$0.1T$ &	0.18 &	0.05 &	0.05 &	0.05 &	0 &	0.02 &	12 &	2 &	2 &	0 &	0 &	0.17 &	0.05 &	0.05 &	0.05 &	0.02	\\	
&	&	$0.5T$ &	0.75 &	0.05 &	0.05 &	0.24 &	0 &	0.03 &	43 &	0 &	0 &	12 &	2 &	0.73 &	0.05 &	0.05 &	0.23 &	0.03	\\	
&	\multirow{2}{*}{$\sqrt{n}$} & 	$0.1T$ &	0.33 &	0.06 &	0.06 &	0.05 &	0 &	0.03 &	21 &	1 &	1 &	0 &	1 &	0.29 &	0.06 &	0.06 &	0.04 &	0.03	\\	
&	&	$0.5T$ &	0.96 &	0.07 &	0.07 &	0.39 &	0 &	0.1 &	69 &	0 &	0 &	28 &	7 &	0.85 &	0.07 &	0.07 &	0.35 &	0.09	\\	
&	\multirow{2}{*}{$0.4n$} & 	$0.1T$ &	0.69 &	0.07 &	0.07 &	0.05 &	0 &	0.03 &	47 &	2 &	2 &	0 &	1 &	0.14 &	0.06 &	0.06 &	0.01 &	0.01	\\	
&	&	$0.5T$ &	1 &	0.12 &	0.17 &	0.74 &	0.39 &	0.21 &	60 &	2 &	7 &	47 &	13 &	0.21 &	0.08 &	0.11 &	0.24 &	0.05	\\	\hline
 \multirow{6}{*}{0.075} & 	\multirow{2}{*}{$\log\,n$} & 	$0.1T$ &	0.6 &	0.05 &	0.05 &	0.04 &	0 &	0.04 &	43 &	2 &	2 &	0 &	2 &	0.58 &	0.05 &	0.05 &	0.04 &	0.04	\\	
&	&	$0.5T$ &	1 &	0.07 &	0.09 &	0.76 &	0 &	0.27 &	83 &	0 &	2 &	64 &	22 &	0.97 &	0.07 &	0.09 &	0.74 &	0.26	\\	
&	\multirow{2}{*}{$\sqrt{n}$} & 	$0.1T$ &	0.89 &	0.09 &	0.09 &	0.06 &	0 &	0.05 &	76 &	2 &	2 &	0 &	2 &	0.79 &	0.09 &	0.09 &	0.05 &	0.04	\\	
&	&	$0.5T$ &	1 &	0.09 &	0.11 &	0.85 &	0.28 &	0.43 &	88 &	0 &	5 &	70 &	37 &	0.89 &	0.09 &	0.10 &	0.79 &	0.39	\\	
&	\multirow{2}{*}{$0.4n$} & 	$0.1T$ &	0.94 &	0.09 &	0.09 &	0.12 &	0.25 &	0.12 &	70 &	2 &	2 &	1 &	6 &	0.20 &	0.07 &	0.07 &	0.03 &	0.03	\\	
&	&	$0.5T$ &	1 &	0.73 &	0.92 &	0.99 &	0.44 &	0.77 &	81 &	64 &	87 &	79 &	59 &	0.21 &	0.65 &	0.75 &	0.51 &	0.22	\\	\hline
 \multirow{6}{*}{0.1} & 	\multirow{2}{*}{$\log\,n$} & 	$0.1T$ &	0.9 &	0.08 &	0.08 &	0.05 &	0 &	0.06 &	86 &	2 &	2 &	1 &	4 &	0.87 &	0.08 &	0.08 &	0.05 &	0.06	\\	
&	&	$0.5T$ &	1 &	0.06 &	0.25 &	0.97 &	0.14 &	0.8 &	92 &	0 &	21 &	85 &	73 &	0.97 &	0.06 &	0.24 &	0.96 &	0.78	\\	
&	\multirow{2}{*}{$\sqrt{n}$} & 	$0.1T$ &	1 &	0.06 &	0.06 &	0.06 &	0 &	0.14 &	95 &	2 &	2 &	2 &	9 &	0.89 &	0.06 &	0.06 &	0.05 &	0.13	\\	
&	&	$0.5T$ &	1 &	0.07 &	0.59 &	1 &	0.44 &	0.96 &	90 &	0 &	57 &	90 &	89 &	0.89 &	0.07 &	0.55 &	0.97 &	0.88	\\	
&	\multirow{2}{*}{$0.4n$} & 	$0.1T$ &	0.99 &	0.1 &	0.29 &	0.16 &	0.44 &	0.23 &	85 &	6 &	25 &	3 &	17 &	0.21 &	0.08 &	0.21 &	0.04 &	0.06	\\	
&	&	$0.5T$ &	1 &	1 &	1 &	1 &	0.44 &	0.99 &	86 &	99 &	99 &	93 &	92 &	0.21 &	0.96 &	0.90 &	0.80 &	0.45	\\	\hline
&	&	average & 	0.73 &	0.21 &	0.38 &	0.36 &	0.18 &	0.25 &	53.14 &	15.08 &	28.81 &	26.72 &	20.28 &	0.47 &	0.19 &	0.31 &	0.27 &	0.16	\\	\hline
\end{tabular}}
\end{sidewaystable}

\subsection{Multiple change-point detection}
\label{sec:sim:multiple}

In this section, we evaluate the empirical behaviour 
of the DCBS algorithm with $\cT^0$ and $\cTzh$,
based on their good performance observed in Section \ref{sec:sim:single:res}. 
For comparison, we also investigate the performance of the SBS algorithm 
furnished with the oracle threshold.

Fixing the dimensionality and the size of data at $n = 250$ and $T = 250$,
we consider piecewise constant signal $\{f_{j, t}\}_{t=1}^{T}$ containing three change-points as follows:
at $t = \eta_r$, an index set $\Pi_r$ of cardinality $m_r$ is randomly drawn from $\{1, \ldots, n\}$,
where $|\delta_{j, r}| \sim_{\iid} \cU(0.75\delta_r, 1.25\delta_r)$ for $j \in \Pi_r$.
We set $(\eta_1, m_1, \delta_1) = ([0.3T], [0.75n], 0.050)$, $(\eta_2, m_2, \delta_2) = ([0.6T], [0.25n], 0.087)$, 
and $(\eta_3, m_3, \delta_3) = ([0.8T], [0.1n], 0.140)$ such that $m_r\delta_r^2$ remains identical over all $r=1, 2, 3$.
The noise $\bvep_t$ is generated as in (N1) and (N2) of Section \ref{sec:sim:single:model} with varying $\varrho$ and $\varrho_h$.
We set $B = 100$, $d_T = 5$ and $L_T = [\log_2(\log\,T+1)]$ where the latter is chosen to permit
a growing number ($\log\,T$) of change-points in the data.
To account for multiple testing, the Bonferroni's correction is adopted by setting $\alpha = \alpha^*/(2^{L_T}-1)$ with $\alpha^* = 0.05$.
We report the total number of estimated change-points ($\wh{N}$, in $\%$) and their location accuracy ($|\heta_r - \eta_r| < \log\,T$, in $\%$),
over $100$ simulated sample paths for each setting in Table \ref{table:multi} and Figures \ref{fig:multi:one}--\ref{fig:multi:two}.

Overall, the BS algorithm applied with $\cTzh$ performs the best 
in detecting all three change-points as well as identifying their locations over $80\%$ of the simulated data.
In comparison, $\cT^0$ or $\cTs$ tend to miss $\eta_1$ which is associated with the smallest jump size,
while all three methods tend to detect $\eta_3$ the best, which is associated with the largest jump size. 
The behaviour of $\cT^0$ and $\cTzh$ change dramatically when $\bvep_t$ is generated
from (N2) with a strong factor ($\varrho_h = 0.9$),
where the performance of $\cT^0$ improves (identifying all three change-points over $70\%$ of the simulated data) 
whereas that of the latter deteriorates greatly.
As observed in Section \ref{sec:sim:single:res}, using a larger scaling term $\gamma_n$ may improve the performance of $\cTzh$ in this setting.
While the performance of $\cTs$ is better when $\bvep_t$ is generated from (N2) rather than (N1),
the choice of threshold for $\cTs$ turns out to be too conservative overall.
Although each $\vep_{j, t}$ is generated from the identical model, 
employing $n$ thresholds for the $n$-dimensional panel data may lead to better small sample performance of $\cTs$.

\begin{table}[htbp]
\caption{Summary of the total number of estimated change-points
and their location accuracy: $n = 250$, $T = 250$ and $\alpha^*=0.05$.}
\label{table:multi}
\centering
\scriptsize{
\begin{tabular}{c|c|c|cccccc|ccc}
\hline\hline
&	&	&	\multicolumn{6}{c}{$\wh{N}$ (\%)} &						\multicolumn{3}{c}{accuracy (\%)} 			\\	
&	$\varrho/\varrho_h$ &	&	0 &	1 &	2 &	\textbf{3} &	4 &	$\ge$5 &	$\eta_1$ &	$\eta_2$ &	$\eta_3$ 	\\	\hline
\multirow{6}{*}{(N1)} &	\multirow{3}{*}{0.2} &	$\cT^0$ &	0 &	9 &	37 &	53 &	0 &	1 &	35 &	71 &	89	\\	
&	&	$\cTzh$ &	0 &	1 &	12 &	85 &	1 &	1 &	90 &	87 &	93	\\	
&	&	$\cTs$ &	8 &	66 &	26 &	0 &	0 &	0 &	0 &	37 &	77	\\	\cline{2-12}
&	\multirow{3}{*}{0.5} &	$\cT^0$ &	0 &	4 &	37 &	56 &	3 &	0 &	46 &	71 &	92	\\	
&	&	$\cTzh$ &	0 &	1 &	8 &	88 &	3 &	0 &	95 &	89 &	96	\\	
&	&	$\cTs$ &	4 &	67 &	29 &	0 &	0 &	0 &	0 &	46 &	75	\\	\hline
\multirow{6}{*}{(N2)} &	\multirow{3}{*}{0.5} &	$\cT^0$ &	1 &	3 &	31 &	65 &	0 &	0 &	34 &	81 &	97	\\	
&	&	$\cTzh$ &	0 &	2 &	7 &	83 &	8 &	0 &	96 &	95 &	97	\\	
&	&	$\cTs$ &	2 &	21 &	76 &	1 &	0 &	0 &	2 &	83 &	90	\\	\cline{2-12}
&	\multirow{3}{*}{0.9} &	$\cT^0$ &	0 &	2 &	10 &	87 &	1 &	0 &	74 &	97 &	98	\\	
&	&	$\cTzh$ &	3 &	35 &	37 &	24 &	1 &	0 &	44 &	38 &	92	\\	
&	&	$\cTs$ &	0 &	22 &	69 &	9 &	0 &	0 &	9 &	76 &	100	\\	\hline
\end{tabular}}
\end{table}

\begin{figure}[htbp]
\centering
\begin{subfigure}{.475\textwidth}
\centering
\includegraphics[width=1\linewidth]{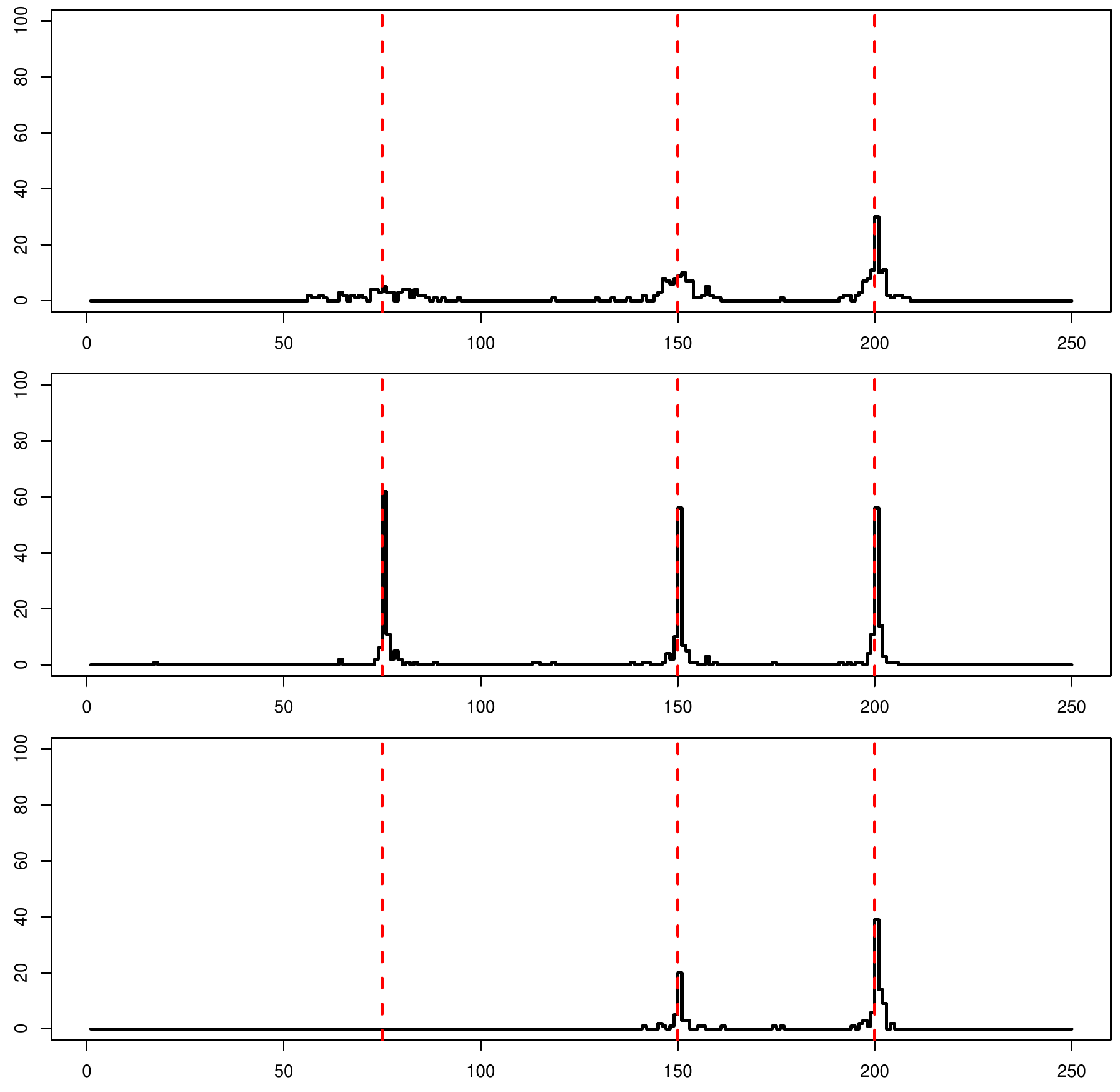}  
\end{subfigure}
\begin{subfigure}{.475\textwidth}
\centering
\includegraphics[width=1\linewidth]{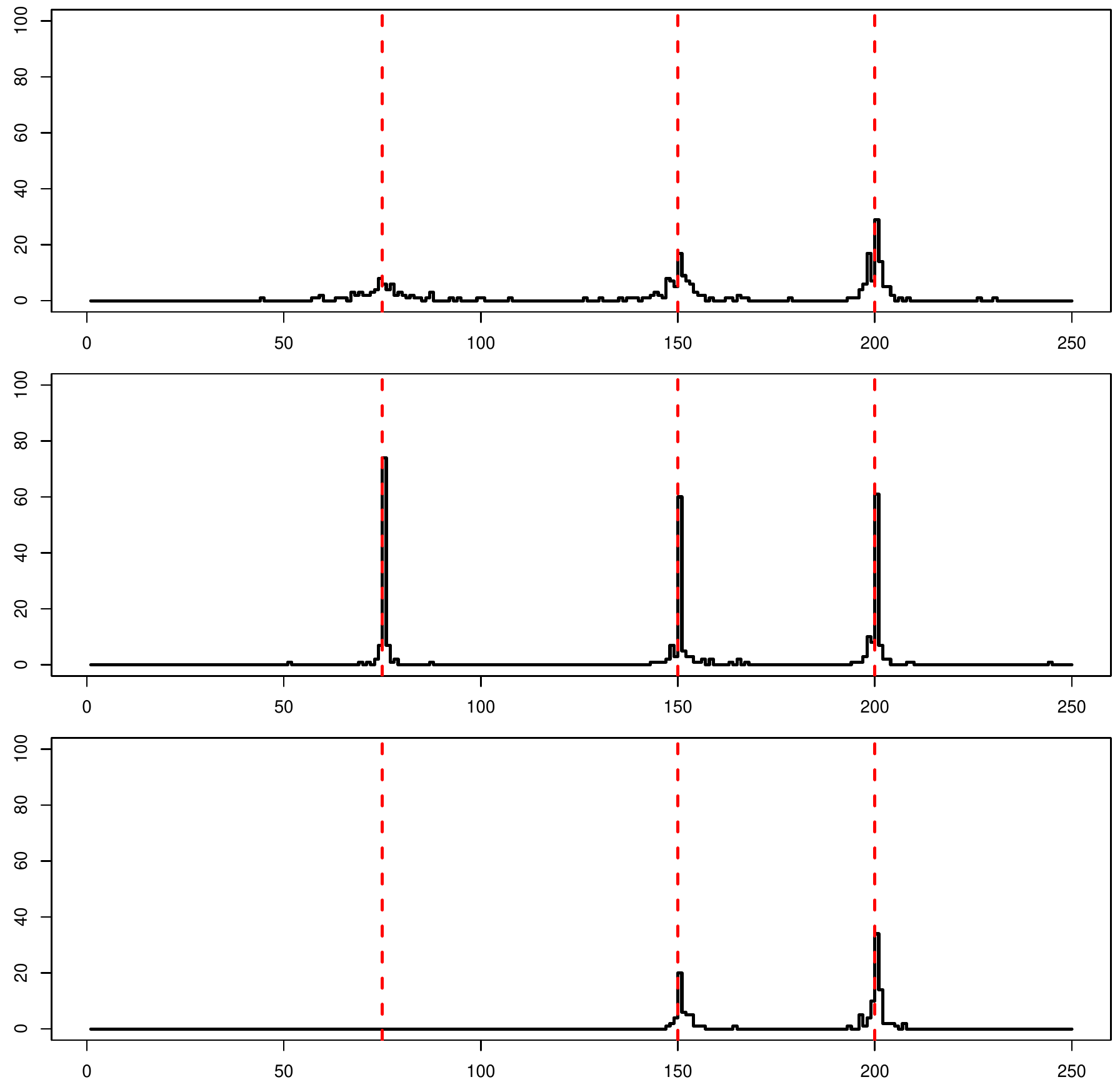}
\end{subfigure}
\caption{(N1) with $\varrho = 0.2$ (left) and $\varrho = 0.5$ (right): the locations of the change-points detected by the BS algorithm in combination with $\cT^0$ (top), $\cTzh$ (middle) and $\cTs$ (bottom); vertical lines indicate the locations of true $\eta_r, \, r=1, 2, 3$.}
\label{fig:multi:one}
\end{figure}

\begin{figure}[htbp]
\centering
\begin{subfigure}{.475\textwidth}
\centering
\includegraphics[width=1\linewidth]{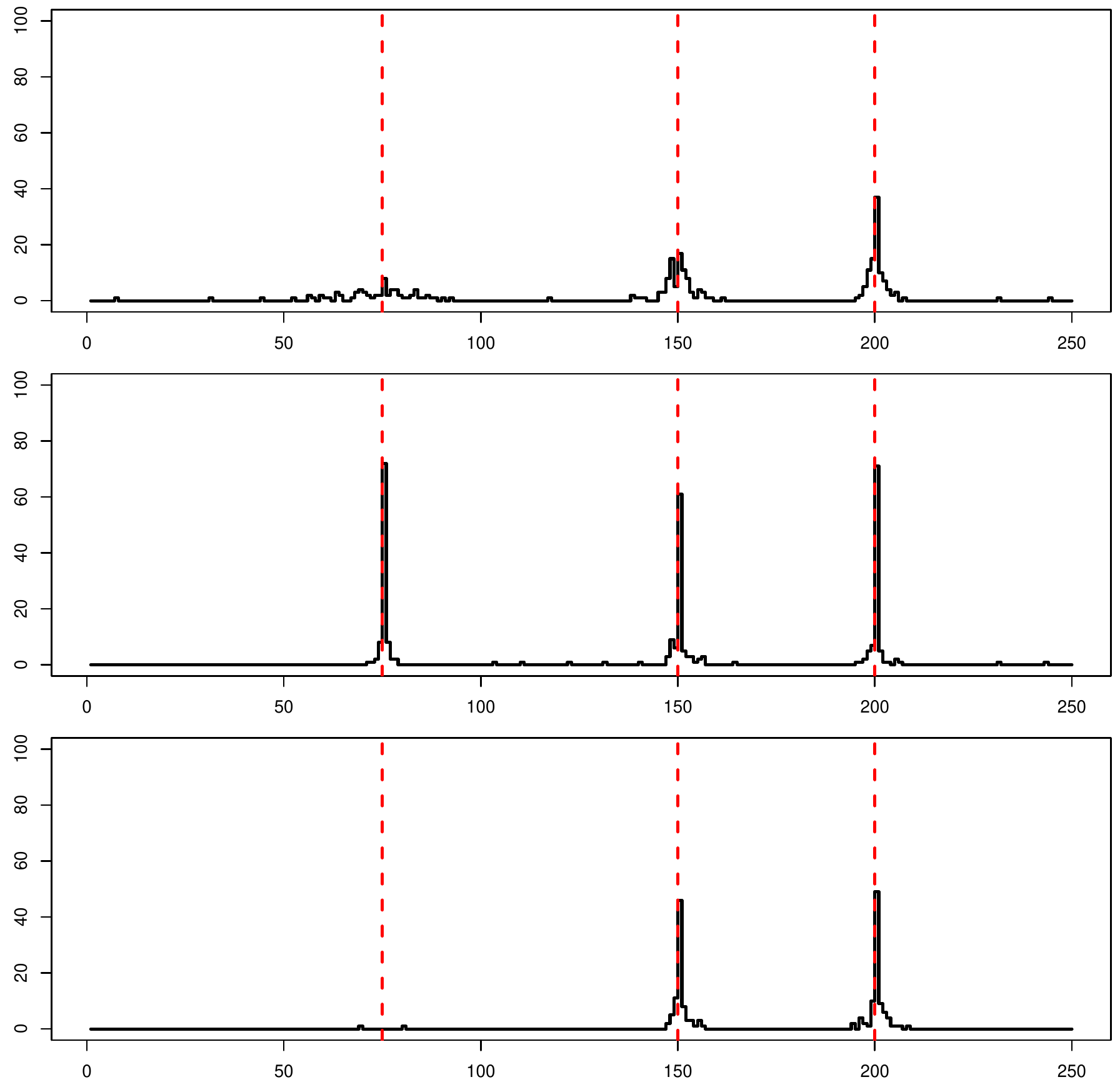}  
\end{subfigure}
\begin{subfigure}{.475\textwidth}
\centering
\includegraphics[width=1\linewidth]{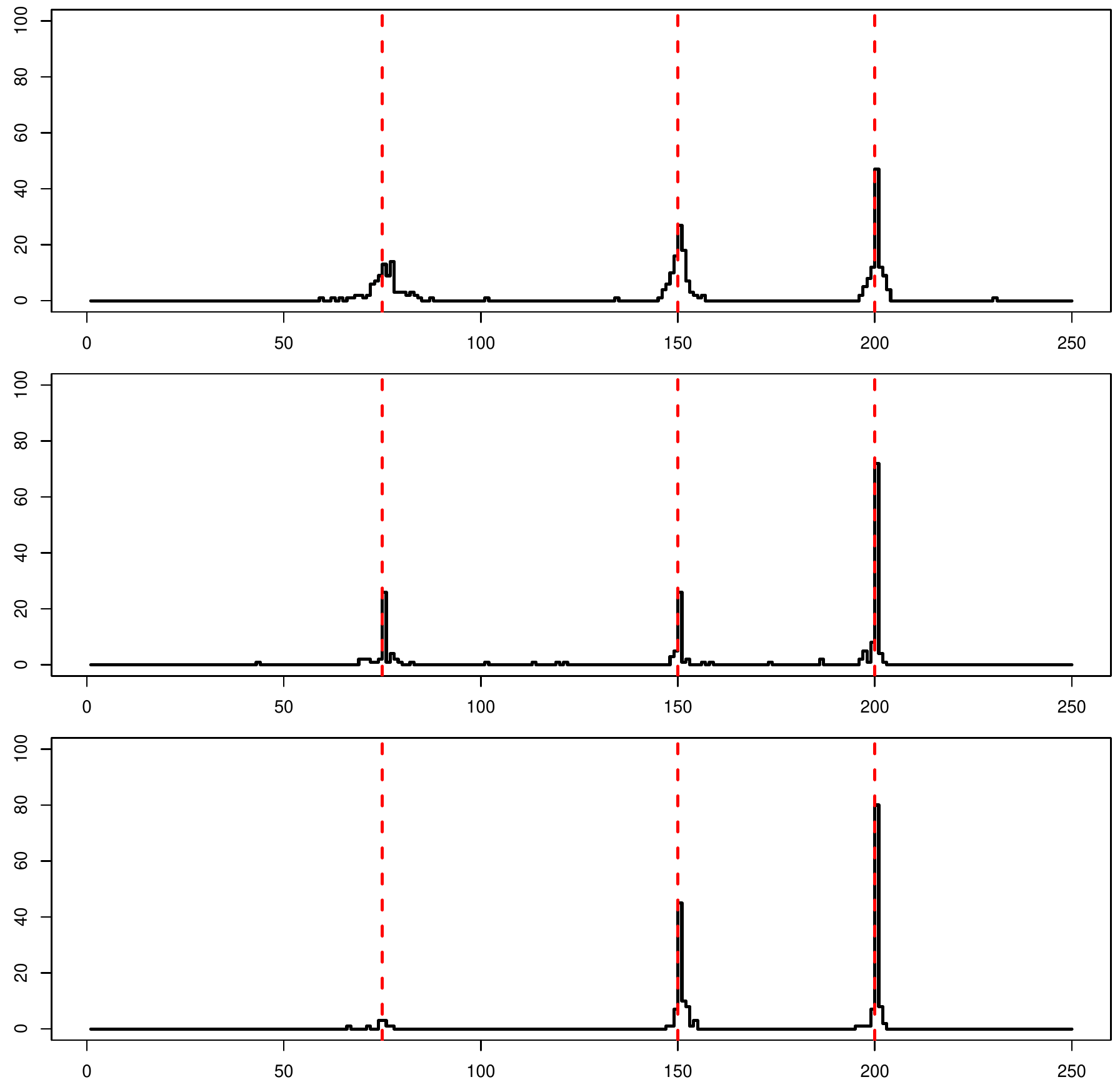}
\end{subfigure}
\caption{(N2) with $\varrho_h = 0.5$ (left) and $\varrho_h = 0.9$ (right).}
\label{fig:multi:two}
\end{figure}

\section{Application to financial time series data}
\label{sec:real}

We analysed the log returns of the closing prices of all S\&P $100$ component stocks 
between February $24$, $2015$ and February $23$, $2016$,
which are denoted by $y_{i, t}, \, i=1, \ldots, \wt{n}; \, t=1, \ldots, T$
with $\wt{n} = 88$ (only those components which remained in the index for a longer period were included) and $T=252$.
\cite{jirak2014} analysed a similar financial dataset for a single change in its mean and variance, respectively.
However, considering that 
(i) log returns are often modelled to have zero-mean and time-varying conditional variance using conditionally heteroscedastic models,
and (ii) it is difficult to rule out the possible existence of multiple change-points,
we chose to perform the change-point analysis in the (unconditional) second-order structure of $y_{i, t}$ using the DCBS algorithm.
For the purpose, wavelet-based periodogram and cross-periodogram sequences of $y_{i, t}$ were computed,
which were also adopted to comprise the input panel data to the SBS-MVTS algorithm in \cite{cho2015}.
Any change-point in the autocovariance and cross-covariance structure of $y_{i, t}$ is detectable
from examining the  wavelet (cross-)periodogram sequences;
for further details, see Section 3.1 of \cite{cho2015}.

We used Haar wavelets at the two finest scales to produce the periodogram sequences,
which are denoted by $x_{j, t}, \, j=1, \ldots, n = \wt{n}(\wt{n}+1); \, t=1, \ldots, T$.
The DCBS algorithm with $\cDzh_m$ detects a single change-point at $t = 220$,
which corresponds to January $6$, $2016$. 
It has been noted that the first week of trading in $2016$ marked the worst five-day start to a year ever,
according to S\&P Dow Jones Indices (Financial Times, \url{http://www.ft.com/fastft/2016/01/07/sp-500-logs-worst-annual-kick-off-on-record/}).
For example, the S\&P $500$ index dropped by $4.9\%$ during the period,
and the Dow Jones Industrial Average by $6.19\%$.
Figure \ref{fig:real} shows $y_{i, t}$ (left) and the pointwise maximum of the DC statistics at the first iteration of the DCBS algorithm (right),
where such behaviour of the financial market at the beginning of $2016$ is reflected as a large peak in the latter.

\begin{figure}[htbp]
\centering
\includegraphics[width=1\linewidth]{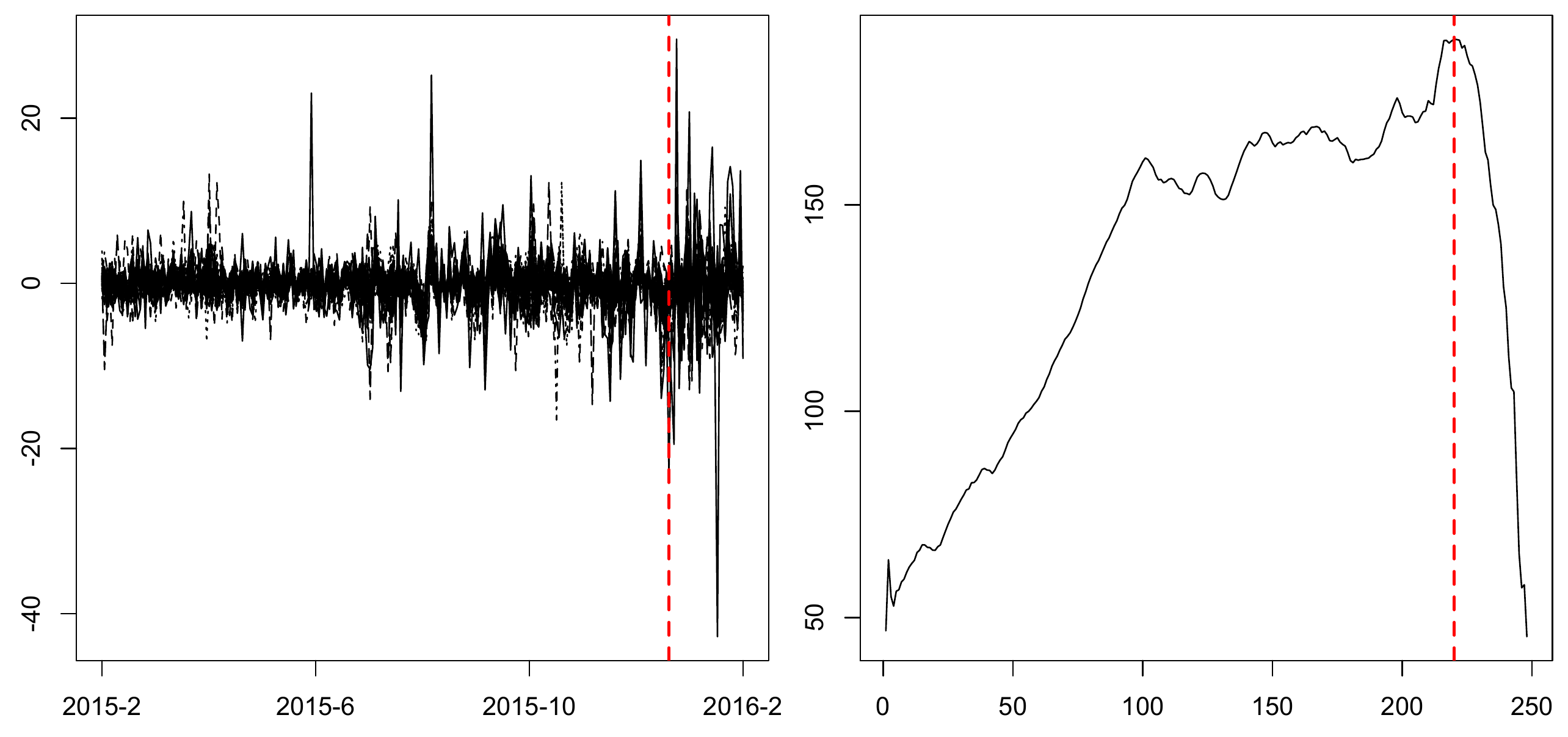}  
\caption{Log returns of S\&P $100$ component stock prices ($y_{i, t}$) between February $24$, $2015$ and February $23$, $2016$ (left);
pointwise maximum of $\cDzh_m(\{|\cX^{(j)}_{1, b, T}|\}_{j=1}^{n})$ over $b=1, \ldots, T-1$ (right);
the vertical broken line denotes the location of the estimated change-point.}
\label{fig:real}
\end{figure}

\section{Conclusions}
\label{sec:conc}

In this paper, we have proposed the DC statistic, 
a novel way of aggregating high-dimensional CUSUM series across the panel for change-point analysis,
and showed its consistency in single and multiple change-point detection both theoretically and empirically.
We conclude by listing possible future research projects. 

\begin{itemize}
\item The DC statistic can be applied to detect changes in high-dimensional time series 
besides those in the mean of panel data.
For example, the DCBS algorithm is easily extended to detect change-points in 
both autocovariance and cross-covariance structure of $n$-dimensional time series,
by taking panel data consisting of local periodogram and cross-periodogram sequences as an input, 
see the real data analysis in Section \ref{sec:real}.

Moreover, the DC operator may be regarded as a generic tool that can be adopted to aggregate
multiple series of statistics cross-sectionally, 
the result of which can be utilised for panel data analysis beyond change-point detection.

\item Empirically, it was shown that $\cTzh$ generally outperforms $\cT^\varphi$ for any $\varphi \in \{0, 1/2\}$,
while in the presence of strong cross-correlations, $\cT^0$ performs better than the rest.
This opens up several possibilities for future research, including
the investigation of the ``optimal'' way of exploiting the information contained in 
the infinite array of DC statistics over $\varphi \in [0, 1]$. 

\item Proposed here for test criterion selection, the GDFM-boot algorithm showed good empirical performance,
but it remains to investigate the validity of its application to change-point analysis
by establishing that the bootstrap scheme mimics the correct second-order structure 
for a large class of time series processes. 
The GDFM-boot will have applicability to a wide range of inference and estimation problems involving high-dimensional time series
beyond the context of change-point analysis. For example, a key task in factor analysis is to estimate the number of common factors 
that drive the pervasive cross-sectional correlations, and several information criterion-type estimators have been proposed.
However, there is lack of any attempt at statistical inference on the factor number, 
e.g., by constructing its confidence interval, and the GDFM-boot can be adopted for such tasks. 
\end{itemize}


\section{Proofs}
\label{sec:pf}

\subsection{Preliminary results}
\label{sec:pf:pre}

We first prove a set of lemmas that are stepping stones for the proofs of Theorems \ref{thm:zero}--\ref{thm:four}.
We assume that (A1)--(A3) and (A6) hold in all the lemmas where applicable,
and the notations $C_i, \, c_i, \ i=0, 1, \ldots$ are adopted
to denote fixed positive constants throughout the proofs.
Further, $\wt\vep_{j, t}$ denotes the scaled $\vep_{j, t}$ with respect to an estimator $\wh\sigma_j$ satisfying (A6),
and $\wt{x}_{j, t}$ and $\wt{f}_{j, t}$ are defined similarly.
Throughout, we operate within $\cE_\sigma$ defined in (A6).
\vspace{5pt}
\begin{lem}
\label{lem:one}
Defining the set
$\cI_1 = \{(s, e):\, 1 \le s < e \le T,  \  e-s+1 > d_T = [C\log^2\,T]\}$
and the event
$\cE_1 = \{\max_{1 \le j \le n}\max_{(s, e) \in \cI_1}(e-s+1)^{-1/2}\vert\sum_{t=s}^e \wt\vep_{j, t} \vert \le \log\,T\}$
with some fixed constant $C>0$, we have $\pr(\cE_1) \to 1$ as $T \to \infty$.
\end{lem}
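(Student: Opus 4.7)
The plan is to prove Lemma 1 by combining a Bernstein-type exponential inequality for stationary $\alpha$-mixing sequences with a union bound, after first reducing the scaled noise $\wt\vep_{j,t}$ to the original $\vep_{j,t}$ via condition (A6). The choice $d_T\asymp\log^2 T$ is precisely what is needed to keep the deviation in the sub-Gaussian regime of the Bernstein bound.

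First I would work on $\cE_\sigma$, where (A6) yields $\wh\sigma_j^{-1}\le\sqrt 2/\sigma_*$ uniformly in $j$. Since $\wh\sigma_j$ is constant in $t$,
\[
\frac{1}{\sqrt{e-s+1}}\Bigl|\sum_{t=s}^e \wt\vep_{j,t}\Bigr|
\le \frac{\sqrt 2}{\sigma_*}\cdot\frac{1}{\sqrt{e-s+1}}\Bigl|\sum_{t=s}^e \vep_{j,t}\Bigr|,
\]
so, since $\pr(\cE_\sigma)\to 1$ by (A6), it suffices to control the unscaled partial sums of $\vep_{j,t}$ uniformly over $j$ and $(s,e)\in\cI_1$.

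Under (A1.i)--(A1.iii), the sequence $\{\vep_{j,t}\}_t$ is geometrically $\alpha$-mixing, zero-mean, stationary, with Bernstein-type moment bounds and bounded long-run variance, so a standard exponential inequality for such sequences (for example Merlev\`ede, Peligrad and Rio (2011), or Bosq (1998), Theorem 1.4) gives constants $C_1,c_1>0$ independent of $j$ such that, for all $1\le s<e\le T$ and all $x>0$,
\[
\pr\Bigl(\Bigl|\sum_{t=s}^e \vep_{j,t}\Bigr|>x\Bigr)
\le C_1\exp\!\left(-\frac{c_1 x^2}{(e-s+1)(\sigma^*)^2+x(\log(e-s+1))^2}\right).
\]
Setting $x=(\sigma_*/\sqrt 2)\sqrt{e-s+1}\,\log T$ and using $e-s+1>C\log^2 T$ on $\cI_1$, the first term in the denominator dominates the second provided $C$ is taken sufficiently large (depending on $\sigma^*,\sigma_*$ and the mixing constants), and the right-hand side reduces to $C_1\exp(-c_2\log^2 T)$ for some $c_2>0$. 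A union bound over at most $T^2$ pairs $(s,e)\in\cI_1$ and $n\sim T^\omega$ coordinates then gives
\[
\pr(\cE_1^c)\le \pr(\cE_\sigma^c)+nT^2\cdot C_1\exp(-c_2\log^2 T)\longrightarrow 0,
\]
since $\log^2 T$ dominates $(\omega+2)\log T$ as $T\to\infty$.

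The main obstacle is the third step: selecting an exponential inequality whose denominator contains a Gaussian term $(e-s+1)(\sigma^*)^2$ alongside a Cram\'er/Bernstein correction involving powers of $\log(e-s+1)$, and checking that the lower bound $e-s+1>d_T$ with $d_T\asymp\log^2 T$ is exactly what forces the Gaussian term to dominate uniformly over $\cI_1$ at deviation scale $\sqrt{e-s+1}\,\log T$. In other words, $d_T$ is chosen as the minimal interval length at which the sub-Gaussian concentration at scale $\log T$ survives the union bound over all $nT^2$ triples $(j,s,e)$; any smaller $d_T$ would fail to absorb the Bernstein correction term.
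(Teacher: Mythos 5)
Your overall strategy---reduce $\wt\vep_{j,t}$ to $\vep_{j,t}$ on $\cE_\sigma$ via (A6), apply a Bernstein-type inequality for geometrically $\alpha$-mixing sequences, and close with a union bound over the at most $nT^2$ triples $(j,s,e)$---is the same as the paper's (the paper applies Bosq's Theorem 1.4 directly to $\wt\vep_{j,t}$, working inside $\cE_\sigma$). The genuine gap is the inequality you invoke. The bound you write, with denominator $(e-s+1)(\sigma^*)^2 + x(\log(e-s+1))^2$, is the Merlev\`ede--Peligrad--Rio form for \emph{bounded} mixing sequences; under (A1.i) the $\vep_{j,t}$ only satisfy Cram\'er's condition (sub-exponential tails), so that theorem does not apply, and neither of your two citations actually yields the displayed inequality. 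The tools that do apply under (A1)---Bosq's Theorem 1.4 (the paper's choice), or the MPR bound for semi-exponential tails---carry an extra term: in Bosq, a blocking/mixing remainder of order $d\,\mu^{[d/(q+1)]\cdot 4/5}$ depending on a free block length $q$; in MPR, a term of the form $n\exp(-x^{\gamma}/C)$ with $\gamma=1/2$. At the deviation scale $x \asymp \sqrt{e-s+1}\,\log T$ these extra terms decay only \emph{polynomially} in $T$, so the clean bound $C_1\exp(-c_2\log^2 T)$ is not attainable, and with it goes your claim that the union bound closes automatically. The actual crux of the paper's proof is exactly this point: one chooses $q \asymp (e-s+1)/\log T$, which makes \emph{both} exponential terms of order $T^{-C_1}$, and then one must impose conditions on the constants ($\mu$, $c_q$, $C_\vep$, $C$, $\sigma^{*2}/\sigma_*^2$) to force $C_1 > 7/2+\omega$ so that $nT^2\cdot T^{3/2}\log^{-1}T\cdot T^{-C_1}\to 0$. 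Your proposal hides this unavoidable constant-tuning inside a citation to a theorem that does not cover the present setting.

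A secondary inaccuracy: the claim that taking $C$ large makes the variance term dominate the correction term uniformly over $\cI_1$ is false at the boundary $d := e-s+1 \asymp \log^2 T$, where the ratio of correction to variance term is of order $(\log\log T)^2/\sqrt{C}$ and diverges for any fixed $C$. This would be harmless in your framework (the exponent only degrades to $c\log^2 T/(\log\log T)^2$, which still beats the union bound), but it shows the heuristic "sub-Gaussian regime" reading of $d_T$ is not quite right. What $d_T \asymp \log^2 T$ actually buys in the paper's argument is weaker: it makes the Cram\'er correction $5C_\vep\sqrt{d}\log T$ comparable to (not dominated by) the variance term $25d\,\E(\wt\vep_{j,t}^2)$ in Bosq's exponent, so that the exponent is of order $q\log^2 T/d \asymp \log T$---polynomial, not sub-Gaussian, decay---which is then sufficient only after the constants are tuned as above.
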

\begin{proof}
We first study the following probability
\begin{eqnarray}
\label{lem:one:event}
\pr\left(\frac{1}{\sqrt{e-s+1}}\left\vert\sum_{t=s}^e \wt\vep_{j, t} \right\vert > \log\,T\right).
\end{eqnarray}
Let $d = e-s+1$. Theorem 1.4 of \cite{bosq1998} showed that under the conditions imposed in (A1), 
the probability in (\ref{lem:one:event}) is bounded from the above by
\begin{align}
& \left\{\frac{2d}{q} + 2\left(1+\frac{\log^2\,T}{25d\E(\wt\vep_{j, t}^2)+5C_{\vep}\sqrt{d}\log\,T}\right)\right\}
\exp\left(-\frac{q\log^2\,T}{25d\E(\wt\vep_{j, t}^2)+5C_{\vep}\sqrt{d}\log\,T}\right)
\nonumber
\\
& + 11C_\alpha d\left[1+\frac{5\{\E(\wt\vep_{j, t}^2)\}^{2/5}\sqrt{d}}{\log\,T}\right](\mu^{[d/(q+1)]})^{4/5}
\label{lem:one:eq}
\end{align}
where $q \in \{1, \ldots, [d/2]\}$.
With the choice $q = [c_qd/\log\,T]$ for some $c_q > 0$, (\ref{lem:one:eq}) is bounded by
\begin{flalign*}
& 2\left(\frac{\log\,T}{c_q} + 2\right)\exp\left\{-\frac{c_q\log\,T}{25\E(\wt\vep_{j, t}^2)+5C_{\vep}C^{-1/2}}\right\} + 
11C_\alpha d\left[1+\frac{5\{\E(\wt\vep_{j, t}^2)\}^{2/5}\sqrt{d}}{\log\,T}\right] \times
\\
& \exp\left\{-\frac{4\log\,T}{5c_q\log(1/\mu)}\right\} \le C_0T^{3/2}\log^{-1}T \cdot T^{-C_1}
\end{flalign*}
for some $C_0, C_1 > 0$, where the latter depends on $\mu$, $c_q, C_{\vep}, C$ and $\sigma^{*2}/\sigma_*^2$.
More specifically, we can impose appropriate conditions on the above parameters in order that $C_1 > 7/2+\omega$.
Therefore, $\pr(\cE_1^c)$ is bounded from the above by
$nT^2 \cdot C_0T^{3/2}\log^{-1}T \cdot T^{-C_1} \to 0$ as $T \to \infty$. 
\end{proof}

\begin{lem}
\label{lem:two}
Define the set $\cI_2 = \{(s, b, e):\, 1 \le s < b < e \le T, \  (b-s+1) \wedge (e-b) > d_T\}$
with $d_T$ given in Lemma \ref{lem:one},
and the event 
$\cE_2 = \{\max_{1 \le j \le n}\max_{(s, b, e)\in\cI_2} |\cC_b(\{\wt\vep_{j, t}\}_{t=s}^e)| \le \sqrt{2}\log\,T\}$.
Then as $T \to \infty$, we have $\pr(\cE_2) \to 1$.
\end{lem}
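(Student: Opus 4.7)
The plan is to derive Lemma 2 directly from Lemma 1 by decomposing the CUSUM into its two constituent partial sums and bounding each using the uniform sub-Gaussian-type deviation bound established for $\cE_1$.

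First, I would fix $(s,b,e) \in \cI_2$ and $j \in \{1,\ldots,n\}$, and rewrite
\[
\cC_b(\{\wt\vep_{j,t}\}_{t=s}^e) = \sqrt{\frac{e-b}{(e-s+1)(b-s+1)}} S_1 - \sqrt{\frac{b-s+1}{(e-s+1)(e-b)}} S_2,
\]
where $S_1 = \sum_{t=s}^{b} \wt\vep_{j,t}$ and $S_2 = \sum_{t=b+1}^{e} \wt\vep_{j,t}$. Because $(b-s+1)\wedge (e-b) > d_T$, both pairs $(s,b)$ and $(b+1,e)$ lie in $\cI_1$. Hence on the event $\cE_1$ of Lemma \ref{lem:one} we have $|S_1| \le \sqrt{b-s+1}\,\log T$ and $|S_2| \le \sqrt{e-b}\,\log T$ simultaneously for every such $(s,b,e)$ and every $j$.

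Substituting these bounds into the triangle inequality for the CUSUM gives, on $\cE_1$,
\[
|\cC_b(\{\wt\vep_{j,t}\}_{t=s}^e)| \le \left(\sqrt{\frac{e-b}{e-s+1}} + \sqrt{\frac{b-s+1}{e-s+1}}\right)\log T.
\]
Since $(e-b) + (b-s+1) = e-s+1$, the elementary inequality $\sqrt{a}+\sqrt{1-a} \le \sqrt{2}$ for $a \in [0,1]$ (applied with $a = (e-b)/(e-s+1)$, and equivalent to Cauchy--Schwarz) yields $|\cC_b(\{\wt\vep_{j,t}\}_{t=s}^e)| \le \sqrt{2}\log T$. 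Taking maxima over $j$ and $(s,b,e)\in\cI_2$ shows $\cE_1 \subseteq \cE_2$, so $\pr(\cE_2) \ge \pr(\cE_1) \to 1$ by Lemma \ref{lem:one}.

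There is no real obstacle here: the work was already done in Lemma \ref{lem:one}, where the uniform control is obtained via Bosq's exponential inequality under (A1) together with a union bound over $nT^2$ candidate intervals and over the index $j$. The only point needing a brief check is that the sub-intervals $[s,b]$ and $[b{+}1,e]$ inherit the $d_T$-length condition required for Lemma \ref{lem:one}, which is immediate from the definition of $\cI_2$, and that the $\sqrt{2}$ factor is sharp for the CUSUM weights—both are straightforward.
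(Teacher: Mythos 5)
Your proposal is correct and follows essentially the same route as the paper's proof: decompose the CUSUM into its two weighted partial sums, bound each on the event $\cE_1$ of Lemma \ref{lem:one}, and conclude via $\sqrt{a}+\sqrt{1-a}\le\sqrt{2}$. Your write-up is in fact slightly cleaner, since you verify explicitly that $[s,b]$ and $[b+1,e]$ lie in $\cI_1$ and conclude from the inclusion $\cE_1\subseteq\cE_2$, where the paper instead states $\cI_2\subset\cI_1$ loosely and closes with a conditional-probability bound.
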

\begin{proof}
Note that $\cI_2 \subset \cI_1$, and
\begin{align*}
|\cC_b(\{\wt\vep_{j, t}\}_{t=s}^e)|
&\le \sqrt{\frac{e-b}{(e-s+1)(b-s+1)}}\left\vert\sum_{t=s}^b \wt\vep_{j, t} \right\vert +
\\
& \sqrt{\frac{b-s+1}{(e-s+1)(e-b)}}\left\vert\sum_{t=b+1}^e \wt\vep_{j, t} \right\vert = I + II.
\end{align*}
On the event $\cE_1$ defined in Lemma \ref{lem:one},
\begin{eqnarray*}
I = \sqrt{\frac{e-b}{e-s+1}} \cdot \frac{1}{\sqrt{b-s+1}}\left\vert \sum_{t=s}^b \wt\vep_{j, t} \right\vert 
\le \sqrt{\frac{e-b}{e-s+1}}\log\,T,
\end{eqnarray*}
and similarly $II \le \sqrt{(b-s+1)/(e-s+1)} \log\,T$, uniformly over all $j$ and $(s, b, e)\in\cI_2$. Hence
\begin{eqnarray*}
\max_{1 \le j \le n}\max_{(s, b, e)\in\cI_2} |\cC_b(\{\wt\vep_{j, t}\}_{t=s}^e)| 
&\le& \max_{(s, b, e)\in\cI_2} \left\{ \sqrt{\frac{e-b}{e-s+1}} + \sqrt{\frac{b-s+1}{e-s+1}} \right\}\log\,T
\\
&\le& \sqrt{2}\log\,T,
\end{eqnarray*}
with probability tending to one as $T \to \infty$, and therefore
$\pr(\cE_2^c) \le \pr(\cE_2^c|\cE_1)\pr(\cE_1) + \pr(\cE_1^c) \to 0$.
\end{proof}

Next, we introduce $N$ additive models $y_{r, t} = g_{r, t} + \xi_{r, t}$, $r=1, \ldots, N$,
which play a vital role in the following proofs.
For each $r$, let $\{k^r_1, \ldots, k^r_n\}$ denote a permutation of $\{1, \ldots, n\}$, 
and $\{i^r_1, \ldots, i^r_n\}$ a set of signs (taking values from \{-1, 1\} with repetitions),
for which the followings hold:
\begin{center}
$i^r_1 \cdot \delta_{k^r_1, r} \ge i^r_2 \cdot \delta_{k^r_2, r} \ge \ldots \ge i^r_n \cdot \delta_{k^r_n, r} \ge 0$
\end{center}
(since $\delta_{k^r_j, r} = f_{k^r_j, \eta_r+1} - f_{k^r_j, \eta_r} = 0$ for all $j \ge m_r+1$, the ordering and the set of signs are not unique).
Then $y_{r, t}$, $g_{r, t}$ and $\xi_{r, t}$ are defined as
\begin{align}
y_{r, t} &= \left\{\frac{m_r(2n-m_r)}{2n}\right\}^\varphi\left\{\frac{1}{m_r}\sum_{j=1}^{m_r}i^r_j \cdot \wt{x}_{k^r_j, t} -
\frac{1}{2n-m_r}\sum_{j=m_r+1}^{n}i^r_j \cdot \wt{x}_{k^r_j, t}\right\},
\label{def:y:r}
\\
g_{r, t} &= \left\{\frac{m_r(2n-m_r)}{2n}\right\}^\varphi\left\{\frac{1}{m_r}\sum_{j=1}^{m_r}i^r_j \cdot \wt{f}_{k^r_j, t} -
\frac{1}{2n-m_r}\sum_{j=m_r+1}^{n}i^r_j \cdot \wt{f}_{k^r_j, t}\right\},
\label{def:g:r}
\\
\xi_{r, t} &= \left\{\frac{m_r(2n-m_r)}{2n}\right\}^\varphi\left\{\frac{1}{m_r}\sum_{j=1}^{m_r}i^r_j \cdot \wt\vep_{k^r_j, t} -
\frac{1}{2n-m_r}\sum_{j=m_r+1}^{n}i^r_j \cdot \wt\vep_{k^r_j, t}\right\}.
\label{def:xi:r}
\end{align}
By its definition, $\{g_{r, t}\}_{t=1}^T$ is a piecewise constant signal with change-points at $t=\eta_1, \ldots, \eta_N$, 
and its jump at $t=\eta_r$ is of size satisfying the following:
\beq
\label{min:jump}
g_{r, \eta_r+1} - g_{r, \eta_r} = \left\{\frac{m_r(2n-m_r)}{2n}\right\}^\varphi\frac{1}{m_r}\sum_{j\in\Pi_r}\frac{|\delta_{j, r}|}{\wt\sigma_j} \ge 
\left\{\frac{m_rn}{2n}\right\}^\varphi\frac{2\wt{\delta}_r}{3\sigma^*} \ge c_1m_r^\varphi \wt{\delta}_r. 
\eeq

Then Lemma \ref{lem:one} implies that for all $r = 1, \ldots, N$,
\begin{eqnarray}
\label{xi:bound:one}
\max_{(s, e) \in \cI_1}
\frac{1}{\sqrt{e-s+1}} \left\vert \sum_{t=s}^e\xi_{r, t} \right\vert \le 2m_r^\varphi\log\,T
\le 2n^\varphi\log\,T
\end{eqnarray}
with probability converging to one, since within the event $\cE_1$ of Lemma \ref{lem:one},
\begin{align*}
\frac{1}{\sqrt{e-s+1}}\left\vert\sum_{t=s}^e\xi_{r, t}\right\vert
&\le \left\{\frac{m_r(2n-m_r)}{2n}\right\}^\varphi\left\{\frac{1}{m_r}\sum_{j=1}^{m_r}\left\vert \frac{1}{\sqrt{e-s+1}}\sum_{t=s}^e \vep_{k^r_j, t} \right\vert \right.
\\
& + \left.\frac{1}{2n-m_r}\sum_{j=m_r+1}^{n}\left\vert \frac{1}{\sqrt{e-s+1}}\sum_{t=s}^e \vep_{k^r_j, t} \right\vert\right\} 
\le 2m_r^\varphi\log\,T.
\end{align*}
Similarly, Lemma \ref{lem:two} implies that for all $r$,
\begin{eqnarray}
\label{xi:bound:two}
\max_{(s, b, e)\in\cI_2} |\cC_b(\{\xi_{r, t}\}_{t=s}^e)| \le 2\sqrt{2}m_r^\varphi\log\,T \le 2\sqrt{2}n^\varphi\log\,T
\end{eqnarray}
with probability tending to one. 

Now we consider a generic additive model
\begin{eqnarray}
\label{def:add}
y_t = g_t + \xi_t,
\end{eqnarray}
where $y_t$, $g_t$ and $\xi_t$ are obtained from the panel data $\{x_{j, t}\}_{t=1}^T, \ j=1, \ldots, n$
in the same manner as $y_{r, t}$, $g_{r, t}$ and $\xi_{r, t}$ in (\ref{def:y:r})--(\ref{def:xi:r}),
with respect to some $m\in\{1, \ldots, n\}$, a permutation of index $\{k_1, \ldots, k_n\}$ and a sign sequence $\{i_1, \ldots, i_n\}$.
Then $g_t$ is a piecewise constant signal with change-points at
$t=\eta_1, \ldots, \eta_N$ satisfying (B1).
Also, it is easily seen that the inequalities (\ref{xi:bound:one})--(\ref{xi:bound:two})
hold with the zero-mean noise series $\xi_t$ in place of $\xi_{r, t}$
with probability converging to one.

Recall that $s$ and $e$ denote the start and the end of an interval
which is examined at some stage of our search for the change-points.
Let $s$ and $e$ satisfy
\begin{eqnarray}
\label{lem:cond:zero}
\eta_{q_1} \le s < \eta_{q_1+1} < \ldots < \eta_{q_2} < e \le \eta_{q_2+1}
\end{eqnarray}
for $0 \le q_1 < q_2 \le N$.
In some of the following lemmas, we impose at least one of following conditions:
\begin{eqnarray}
s< \eta_q-c_2T^\beta < \eta_q+c_2T^\beta < e \mbox{ for some } q\in\{q_1+1, \ldots, q_2\},
\label{lem:cond:one} \\
\{(\eta_{q_1+1}-s)\wedge(s-\eta_{q_1})\} \vee \{(\eta_{q_2+1}-e)\wedge(e-\eta_{q_2})\} \le c_3\bar{\epsilon}_T,
\label{lem:cond:two}
\end{eqnarray}
with $\bar{\epsilon}_T$ to be defined later.
The condition (\ref{lem:cond:one}) implies that there is at least one change-point to be detected
which is sufficiently distanced away from the previously detected change-points $s$ and $e$,
and (\ref{lem:cond:two}) indicates that each of $s$ and $e$ is detected for one of the true change-points.

Since the CUSUM statistics are not affected by the shift in the overall level of $g_t$,
we assume that $\sum_{t=s}^eg_t = 0$ without loss of generality.
Then the CUSUM statistics computed on $\{g_t\}_{t=s}^e$ can be re-written as
$\cC_b(\{g_t\}_{t=s}^e) = \sqrt{\frac{e-s+1}{(b-s+1)(e-b)}}\sum_{t=s}^bg_t$, $b=s, \ldots, e-1$.

\begin{lem}
\label{lem:three}
For $s$ and $e$ satisfying (\ref{lem:cond:zero}),
there exists $q'\in\{q_1+1, \ldots, q_2\}$ which satisfies
$\eta_{q'} = \arg\max_{b\in[s, e)}|\cC_b(\{g_t\}_{t=s}^e)|$.
\end{lem}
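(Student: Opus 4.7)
My plan is to extend $|\cC_b(\{g_t\}_{t=s}^e)|^2$ to a continuous function $F$ of a real argument $\beta\in[s-1,e]$ and to show, via a log-second-derivative computation, that every smooth interior critical point of $F$ is a strict local minimum. Since $F$ vanishes at the boundary and has only local minima at smooth interior critical points, its global maximum must be attained at one of the finitely many non-smooth interior points of $F$; by construction these are precisely the change-points $\eta_{q_1+1},\ldots,\eta_{q_2}$.

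First, the already-stated WLOG assumption $\sum_{t=s}^e g_t=0$ lets me write
\begin{equation*}
|\cC_b(\{g_t\}_{t=s}^e)|^2 \;=\; \frac{L\,S_b^2}{(b-s+1)(e-b)}, \qquad L:=e-s+1,\quad S_b:=\sum_{t=s}^b g_t.
\end{equation*}
Because $g_t$ is piecewise constant on $[s,e]$ with jumps precisely at $\eta_{q_1+1},\ldots,\eta_{q_2}$, the sequence $\{S_b\}$ is piecewise linear in $b$ with knots at these change-points and $S_{s-1}=S_e=0$. I extend $\{S_b\}$ to a continuous piecewise-linear $S\colon[s-1,e]\to\R$ and set $F(\beta):=L\,S(\beta)^2/[(\beta-s+1)(e-\beta)]$. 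Near $\beta=s-1$, $S$ is linear through $(s-1,0)$, so $S(\beta)^2$ vanishes to second order while the denominator vanishes only to first order, giving $F(\beta)\to 0$ as $\beta\to(s-1)^+$; the same at $\beta=e$ yields $F(\beta)\to 0$ as $\beta\to e^-$. Hence $F$ extends continuously to $[s-1,e]$ with $F(s-1)=F(e)=0$.

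The crucial step is the critical-point analysis on an interval of linearity of $S$. Parametrise $\tau=\beta-\eta_r$, write $S(\beta)=a+c\tau$, and set $u=\beta-s+1$, $v=e-\beta$. Then
\begin{equation*}
(\log F)'(\tau) = \frac{2c}{a+c\tau}-\frac{1}{u}+\frac{1}{v}, \qquad (\log F)''(\tau) = -\frac{2c^2}{(a+c\tau)^2}+\frac{1}{u^2}+\frac{1}{v^2}.
\end{equation*}
At a smooth critical point $\tau^*$ with $a+c\tau^*\ne 0$, $(\log F)'(\tau^*)=0$ yields $2c/(a+c\tau^*)=(v-u)/(uv)$; substituting into $(\log F)''(\tau^*)$ reduces it to $[2u^2+2v^2-(v-u)^2]/[2u^2v^2]=(u+v)^2/[2u^2v^2]>0$, so $\tau^*$ is a strict local minimum of $F$. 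Critical points with $a+c\tau=0$ trivially give $F=0$ and are also local minima.

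Putting this together, $F$ is continuous on $[s-1,e]$, vanishes on the boundary, and has only local minima at its smooth interior critical points. The global maximum of $F$ (which is positive whenever $g\not\equiv 0$ on $[s,e]$; the constant case is immediate) must therefore occur at a non-smooth interior point of $F$, and these are exactly the knots of $S$, namely $\eta_{q_1+1},\ldots,\eta_{q_2}$. Since each $\eta_{q'}$ is an integer in $(s,e)$, the maximum over integer $b\in[s,e)$ is also attained at some $\eta_{q'}$ with $q'\in\{q_1+1,\ldots,q_2\}$, proving the claim. The main obstacle is the log-second-derivative calculation: the cancellation producing $(u+v)^2$ only emerges once the first-order condition is used to eliminate $2c/(a+c\tau^*)$, and a direct attack on $F''$ rather than $(\log F)''$ is substantially messier.
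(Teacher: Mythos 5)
Your proof is correct, but it takes a different route from the paper: the paper disposes of this lemma in one line by citing Lemmas 2.2--2.3 of \cite{venkatraman1992}, whereas you give a self-contained analytic argument that essentially reproves those lemmas. Concretely, you extend $S_b=\sum_{t=s}^b g_t$ to a continuous piecewise-linear function, observe that $F(\beta)=L\,S(\beta)^2/[(\beta-s+1)(e-\beta)]$ vanishes at the endpoints, and show via the $(\log F)''$ computation (with the first-order condition substituted in, yielding $(u+v)^2/(2u^2v^2)>0$) that every smooth interior critical point with $F>0$ is a strict local minimum, so the positive global maximum can only sit at a kink of $S$; the kinks lie among $\{\eta_{q_1+1},\ldots,\eta_{q_2}\}$, and since these are integers in $(s,e)$ the discrete argmax over $b\in[s,e)$ is attained there too. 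This is exactly the content of Venkatraman's lemmas (that $|\cC_b(\{g_t\}_{t=s}^e)|$ is monotone or decreasing-then-increasing between consecutive change-points), so what your version buys is self-containedness and an explicit, checkable computation --- it even delivers the stronger structural fact that the paper later invokes separately (via Lemma 2.7 of \cite{venkatraman1992}) in the proof of its Lemma \ref{lem:four}; what the paper's citation buys is brevity. Two trivial points of care: the kinks of $S$ are in general only a \emph{subset} of the change-points (the aggregated signal $g_t$ may have a zero jump at some $\eta_r$ due to cancellation), which does not harm your conclusion since every kink is a change-point; and the degenerate case $g\equiv 0$ on $[s,e]$ (under the zero-sum normalisation) makes every point an argmax, as you note.
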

\begin{proof} 
The proof follows directly from Lemmas 2.2--2.3 of \cite{venkatraman1992}.
\end{proof}

\begin{lem}
\label{lem:four}
Let (\ref{lem:cond:one}) and (\ref{lem:cond:two}) hold. 
For $\heta = \arg\max_{b\in[s, e)}|\cC_b(\{y_t\}_{t=s}^e)|$, 
there exists a true change-point $\eta_q \equiv \eta\in(s, e)$ satisfying $|\heta - \eta| < c_0\bar{\epsilon}_T$ 
with probability converging to one, provided that
\beqs
&& |\cC_{\eta}(\{g_t\}_{t=s}^e)||\cC_{\eta}(\{g_t\}_{t=s}^e)-\cC_{\heta}(\{g_t\}_{t=s}^e)|
> C_1n^\varphi\log\,T \times \\
&& \max\left\{\begin{array}{l}
n^\varphi\log\,T, 
\\
\textstyle{\sqrt{\eta-s+1}\left\vert\frac{1}{\heta-s+1}\sum_{t=s}^{\heta}g_t - \frac{1}{\eta-s+1}\sum_{t=s}^{\eta}g_t\right\vert},
\\
\textstyle{\sqrt{\bar{\epsilon}_T}\left\vert\frac{1}{\heta-s+1}\sum_{t=s}^{\heta}g_t - \frac{1}{e-\eta}\sum_{t=\eta+1}^e g_t \right\vert}
\end{array}\right\} \quad \mbox{for some } C_1>0.
\eeqs
\end{lem}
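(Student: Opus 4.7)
The plan is to exploit the optimality condition $\cC_{\heta}^2(\{y_t\}_{s}^e) \ge \cC_\eta^2(\{y_t\}_{s}^e)$ together with the linear decomposition $\cC_b(\{y_t\}) = \cC_b(\{g_t\}) + \cC_b(\{\xi_t\})$, working on the event that the noise CUSUM is uniformly small, namely $\max_{b \in [s, e) \cap \cI_2}|\cC_b(\{\xi_t\}_s^e)| \le 2\sqrt{2}\,n^\varphi \log T$, which holds with probability tending to one by (\ref{xi:bound:two}). Conditions (\ref{lem:cond:one})--(\ref{lem:cond:two}) ensure that both $\eta$ and $\heta$ sit in the interior regime so this uniform noise bound is available at both arguments. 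The natural candidate for $\eta$ is the true change-point $\eta_{q'} = \arg\max_{b \in [s,e)}|\cC_b(\{g_t\}_s^e)|$ furnished by Lemma \ref{lem:three}, for which $|\cC_\eta(\{g_t\})| \ge |\cC_{\heta}(\{g_t\})|$ and the hypothesis LHS is nonnegative.

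First I would square the optimality condition, expand, and rearrange to obtain
\[
\cC_\eta(\{g_t\})^2 - \cC_{\heta}(\{g_t\})^2 \le 2[\cC_{\heta}(\{g_t\})\cC_{\heta}(\{\xi_t\}) - \cC_\eta(\{g_t\})\cC_\eta(\{\xi_t\})] + \cC_{\heta}(\{\xi_t\})^2 - \cC_\eta(\{\xi_t\})^2.
\]
Factoring the LHS as $(\cC_\eta(\{g_t\}) - \cC_{\heta}(\{g_t\}))(\cC_\eta(\{g_t\}) + \cC_{\heta}(\{g_t\}))$, a short case analysis on the signs of the two signal CUSUMs (taking $\cC_\eta(\{g_t\}) > 0$ without loss of generality) shows that in every case the LHS bounds a constant multiple of $|\cC_\eta(\{g_t\})|\cdot|\cC_\eta(\{g_t\}) - \cC_{\heta}(\{g_t\})|$ from below. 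Substituting the noise bound $|\cC_b(\{\xi_t\})| \le 2\sqrt 2\,n^\varphi \log T$ into the RHS then yields an inequality of the form $|\cC_\eta(\{g_t\})|\cdot|\cC_\eta(\{g_t\}) - \cC_{\heta}(\{g_t\})| \le C_1 n^\varphi \log T \cdot (|\cC_\eta(\{g_t\})| + n^\varphi \log T)$, which already matches the first term $n^\varphi \log T$ inside the max on the RHS of the hypothesis.

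To convert this CUSUM bound into control on $|\heta - \eta|$, I would rewrite $\cC_b(\{g_t\}) = \sqrt{(b-s+1)(e-b)/(e-s+1)}\,(\mu_1(b) - \mu_2(b))$, where $\mu_1(b) = (b-s+1)^{-1}\sum_{t=s}^b g_t$ and $\mu_2(b) = (e-b)^{-1}\sum_{t=b+1}^e g_t$. The difference $\cC_\eta(\{g_t\}) - \cC_{\heta}(\{g_t\})$ then decomposes algebraically into (i) a term proportional to $\mu_1(\heta) - \mu_1(\eta)$ rescaled by $\sqrt{\eta - s + 1}$, which matches the second entry in the max, and (ii) a term proportional to $\mu_1(\heta) - \mu_2(\heta)$ rescaled by a factor of order $\sqrt{|\heta - \eta|}$, which in the targeted regime $|\heta - \eta| \le \bar\epsilon_T$ reduces to $\sqrt{\bar\epsilon_T}$ and matches the third entry. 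If $|\heta - \eta| \ge c_0 \bar\epsilon_T$ were to hold for every true change-point, this expansion, combined with the jump-size lower bound (\ref{min:jump}) and conditions (B1)--(B2), would contradict the hypothesis, establishing the claim for suitably chosen $c_0$.

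The main obstacle will be the sign analysis for $\cC_\eta(\{g_t\})$ and $\cC_{\heta}(\{g_t\})$, which can have opposite signs when $\heta$ lies on the opposite side of a change-point from $\eta$, and the bookkeeping that pairs each of the three terms inside the max with a distinct positional regime of $\heta$ (close to $\eta$ versus stuck between $s$ and $\eta$ versus stuck between $\eta$ and $e$). Maintaining uniform constants across these regimes, while keeping the resulting rate $\bar\epsilon_T$ sharp enough to feed into Theorems \ref{thm:one} and \ref{thm:two}, is the delicate technical heart of the proof.
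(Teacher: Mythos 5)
There is a genuine gap, and it sits at the centre of your scheme: the noise--signal cross terms produced by squaring the CUSUM optimality condition have the wrong (global) size. After inserting the uniform bound $|\cC_b(\{\xi_t\}_{t=s}^e)| \le 2\sqrt{2}\,n^\varphi\log T$, the right-hand side of your expanded inequality is of order $n^\varphi\log T\,\bigl(|\cC_{\eta}(\{g_t\}_{t=s}^e)| + n^\varphi\log T\bigr)$, because $\cC_{\eta}(\{g_t\})\cC_{\eta}(\{\xi_t\})$ and $\cC_{\heta}(\{g_t\})\cC_{\heta}(\{\xi_t\})$ each carry a factor of the \emph{full} signal CUSUM. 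The lemma's hypothesis, however, only supplies a lower bound of $C_1 n^\varphi\log T$ times a maximum of three \emph{local} quantities ($n^\varphi\log T$ and two differences of partial means of $g_t$), each of which is in general far smaller than $|\cC_{\eta}(\{g_t\})|$ --- under (A5)/(B2) detection is only possible because $|\cC_{\eta}(\{g_t\})| \gg n^\varphi\log T$. Hence your derived upper bound and the assumed lower bound are perfectly compatible, and no contradiction follows; the remark that your bound ``already matches the first term inside the max'' overlooks the extra $|\cC_{\eta}(\{g_t\})|$. Splitting the cross-term difference does not repair this: the piece $\cC_{\eta}(\{g_t\})\{\cC_{\heta}(\{\xi_t\})-\cC_{\eta}(\{\xi_t\})\}$ would need a modulus-of-continuity bound for the noise CUSUM process, which Lemmas \ref{lem:one}--\ref{lem:two} do not provide. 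Separately, your sign analysis is false as stated: with multiple change-points in $(s,e)$, $\cC_{\heta}(\{g_t\})$ can be close to $-\cC_{\eta}(\{g_t\})$, in which case $(\cC_{\eta}-\cC_{\heta})(\cC_{\eta}+\cC_{\heta})$ is near zero while $|\cC_{\eta}||\cC_{\eta}-\cC_{\heta}|$ is of order $\cC_{\eta}^2$ (the paper's own proof also implicitly assumes the two signal CUSUMs share a sign, but there it is a one-line caveat, not a structural obstacle).

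The paper avoids both problems by recasting the argmax as a least-squares fit, adapting the corresponding argument in \cite{piotr2014}. With $\wh g_t$ the best one-break fit to $y_t$ (break at $\heta$), $g^*_t$ the best one-break approximation to $g_t$ (break at a true change-point $\eta$, by Lemma \ref{lem:three}), and $\wt g_t$ the best one-break approximation to $g_t$ constrained to break at $\heta$, it expands $\sum_{t=s}^e(y_t-g^*_t)^2 - \sum_{t=s}^e(y_t-\wh g_t)^2 = I + II$, where $II<0$ satisfies $|II| \ge |\cC_{\eta}(\{g_t\})|\,|\cC_{\eta}(\{g_t\})-\cC_{\heta}(\{g_t\})|$ via the projection identities $|\inner{\bg}{\bpsi^*}| = |\cC_{\eta}(\{g_t\})|$ and $|\inner{\bg}{\wt{\bpsi}}| = |\cC_{\heta}(\{g_t\})|$, while the cross term $I = 2\sum_t\xi_t(\wh g_t - \wt g_t) + 2\sum_t\xi_t(\wt g_t - g^*_t)$ involves only piecewise-constant differences equal to local mean differences of $g_t$ or to averages of $\xi_t$; bounding its five pieces yields precisely $n^\varphi\log T$ times the three entries of the max. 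The hypothesis then forces $I+II<0$, contradicting the least-squares optimality of $\wh g_t$, after the standard reduction to $|\heta-\eta| = c_0\bar{\epsilon}_T$ using the monotonicity of the signal CUSUM between change-points (Lemma 2.7 of \cite{venkatraman1992}). This conversion of every noise interaction into \emph{increments} is exactly what direct squaring loses; what your argument does prove is a different statement, whose hypothesis would have to be a lower bound of order $n^\varphi\log T$ on the drop $|\cC_{\eta}(\{g_t\})|-|\cC_{b'}(\{g_t\})|$ for $|b'-\eta| \ge c_0\bar{\epsilon}_T$, not the inequality stated in the lemma.
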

\begin{proof}
The following proof is an adaptation of the proof of Theorem \ref{thm:two}.1 in \cite{piotr2014} to a non-i.i.d. case.

On the segment $[s, e]$, detecting a change-point is equivalent to fitting the best step function $\wh{g}_t$
(a piecewise constant function with one change-point)
which minimises $\sum_{t=s}^e(y_t-h_t)^2$ among all step functions $h_t$ defined on $[s, e]$.
Let $g_t^*$ be the best step function approximation to $g_t$ on $[s, e]$,
which may not be unique.
From Lemma \ref{lem:three}, $g_t^*$ needs to have its change-point $\eta$ 
coincide with one of the true change-points $\eta_q, \, q\in\{q_1+1, \ldots, q_2\}$.

Let us assume that $|\heta-\eta| = c_0\bar{\epsilon}_T$.
Due to the fact that $|\cC_b(\{g_t\}_{t=s}^e)|$ is monotonic, or decreasing and then increasing in $b$
between two adjacent change-points of $g_t$ (Lemma 2.7 of \cite{venkatraman1992}),
it is enough to consider the case when $\heta$ satisfies $|\heta - \eta| = c_0\bar{\epsilon}_T$.
Then, if it is shown that
\begin{eqnarray}
\label{lem:four:eq:one}
\sum_{t=s}^e(y_t-g_t^*)^2 - \sum_{t=s}^e(y_t-\wh{g}_t)^2 < 0,
\end{eqnarray}
the claim would be proved to be contradiction.
Expanding the LHS of (\ref{lem:four:eq:one}), we obtain
\begin{align*}
& \sum_{t=s}^e (\xi_t+g_t-g_t^*)^2 - \sum_{t=s}^e (\xi_t+g_t-\wh{g}_t)^2
= 2\sum_{t=s}^e \xi_t(\wh{g}_t-g_t^*) 
\\
& + \sum_{t=s}^e \{(g_t-g_t^*)^2 - (g_t-\wh{g}_t)^2\} 
= I + II.
\end{align*}
Clearly, $II<0$ from the definition of $g_t^*$.

Let $\boldsymbol{\Psi}$ be the set of vectors of length $(e-s+1)$ whose elements are initially positive and constant,
then after a break, are negative and constant;
moreover, the elements sum to zero and when squared, sum to one.
Since we assume that $\sum_{t=s}^e g_t = 0$,
we can find a vector $\bpsi^*\in\boldsymbol{\Psi}$ satisfying $\bg^* = \inner{\bg}{\bpsi^*}\bpsi^*$
where $\bg=(g_s, \ldots, g_e)^\top$ and $\bg^*=(g_s^*, \ldots, g_e^*)^\top$.
Then we have
\begin{eqnarray*}
\sum_{t=s}^e(g_t-g_t^*)^2
= \sum_{t=s}^e g_t^2 - \inner{\bg}{\bpsi^*}^2.
\end{eqnarray*}

Let a step function $\wt{g}_t$ be chosen to minimise $\sum_{t=s}^e(g_t - h_t)^2$
among all the step functions $h_t$ defined on $[s, e]$, under the constraint that $h_t$ has its change-point at $t=\heta$.
For such $\wt{g}_t$, we have $\sum_{t=s}^e(g_t-\wt{g}_t)^2 \le \sum_{t=s}^e(g_t-\wh{g}_t)^2$.
Again, there exists a vector $\wt{\bpsi}\in\boldsymbol{\Psi}$ satisfying $\wt{\bg} = \inner{\bg}{\wt{\bpsi}}\wt{\bpsi}$
with $\wt{\bg} = (\wt{g}_s, \ldots, \wt{g}_e)^\top$.
Then
\begin{eqnarray}
|II| &\ge& \sum_{t=s}^e\{(g_t-\wt{g}_t)^2 - (g_t-g_t^*)^2\} = \inner{\bg}{\bpsi^*}^2 - \inner{\bg}{\wt{\bpsi}}^2 
\nonumber
\\
&=& (\inner{\bg}{\bpsi^*}+\inner{\bg}{\wt{\bpsi}})(\inner{\bg}{\bpsi^*}-\inner{\bg}{\wt{\bpsi}})
\ge |\inner{\bg}{\bpsi^*}||\inner{\bg}{\bpsi^*} - \inner{\bg}{\wt{\bpsi}}|,
\label{lem:four:eq:two}
\end{eqnarray}
since $|\cC_\eta(\{g_t\}_{t=s}^e)| = |\inner{\bg}{\bpsi^*}| \ge |\inner{\bg}{\wt{\bpsi}}| = |\cC_{\heta}(\{g_t\}_{t=s}^e)|$.

Turning to $I$, it is decomposed as
\begin{eqnarray*}
2\sum_{t=s}^e \xi_t(\wh{g}_t-g_t^*) = 2\sum_{t=s}^e \xi_t(\wh{g}_t-\wt{g}_t) + 2\sum_{t=s}^e \xi_t(\wt{g}_t-g_t^*),
\end{eqnarray*}
and each of the two terms are split into sub-sums computed over the intervals where $\wh{g}_t-\wt{g}_t$ and $\wt{g}_t-g_t^*$ are constant, respectively.
Letting $\heta > \eta$ without loss of generality, we have
\begin{eqnarray*}
\sum_{t=s}^e \xi_t(\wt{g}_t-g^*_t) = \left(\sum_{t=s}^\eta + \sum_{t=\eta+1}^{\heta} + \sum_{t=\heta+1}^e\right)\xi_t(\wt{g}_t-g^*_t) = III + IV + V.
\end{eqnarray*}
Then
$|III| \le 2n^\varphi\log\,T\sqrt{\eta-s+1}
\vert (\heta-s+1)^{-1}\sum_{t=s}^{\heta}g_t - (\eta-s+1)^{-1}\sum_{t=s}^{\eta}g_t\vert$
with probability tending to one, following (\ref{xi:bound:one}).
$|V|$ is of the same order as $|III|$, and similar arguments lead to
$|IV| \le 2n^\varphi\log\,T \sqrt{\heta-\eta+1}
\vert(\heta-s+1)^{-1}\sum_{t=s}^{\heta}g_t - (e-\eta)^{-1}\sum_{t=\eta+1}^e g_t\vert$.

As for $\sum_{t=s}^e\xi_t(\wh{g}_t-\wt{g}_t)$, we have
\begin{eqnarray*}
\sum_{t=s}^e\xi_t(\wh{g}_t-\wt{g}_t) =
\left(\sum_{t=s}^{\heta}+\sum_{t=\heta+1}^{e}\right)\xi_t(\wh{g}_t-\wt{g}_t)
= VI+VII.
\end{eqnarray*}
$|VI|$ and $|VII|$ are of the same order, and with probability converging to one,
\begin{eqnarray*}
|VI| = |\sum_{t=s}^{\heta}\xi_t| \cdot \frac{1}{\heta-s+1}\left\vert\sum_{t=s}^{\heta}(y_t-g_t)\right\vert
= \frac{1}{\heta-s+1}\left(\sum_{t=s}^{\heta}\xi_t\right)^2 \le 4n^{2\varphi}\log^2\,T.
\end{eqnarray*}
Putting together (\ref{lem:four:eq:two}) and the upper bound on $|III|$--$|VII|$,
we have the dominance of $II$ over $I$ under the conditions given in the lemma.
\end{proof}

\subsection{Proofs of Theorems \ref{thm:zero}--\ref{thm:one}}
\label{sec:pf:thm:one}

Throughout the section, we assume (A1)--(A3) and (A6), and (A4)--(A5) where applicable.
In the problem of detecting (at most) a single change-point, 
(\ref{lem:cond:one})--(\ref{lem:cond:two}) are met with $s=1$ and $e=T$ under (A4) with 
$\bar{\epsilon}_T = (n^{\varphi}/m_1)^2\wt{\delta}_1^2\log^2\,T$. 
For $(\heta_1, \wh{m}_1) = \arg\max_{b\in[1, T]\setminus\cI_{1, T}, \, 1 \le m \le n} \cD^\varphi_m(\{|\cX^{(j)}_{1, b, T}|\}_{j=1}^n)$,
let $\{k^0_1, \ldots, k^0_n\}$ denote a permutation of $\{1, 2, \ldots, n\}$
satisfying
$|\cX^{k^0_1}_{1, \heta_1, T}| \ge |\cX^{k^0_2}_{1, \heta_1, T}| \ge \ldots \ge |\cX^{k^0_n}_{1, \heta_1, T}|$,
and $i^0_j\in\{-1, 1\}$ satisfy
$|\cX^{k^0_j}_{1, \heta_1, T}| = i^0_j \cdot \cX^{k^0_j}_{1, \heta_1, T}$
for all $j=1, \ldots, n$.
With $\wh{m}_1$, $\{k^0_j\}$ and $\{i^0_j\}$ replacing $m_r$, $\{k^r_j\}$ and $\{i^r_j\}$, respectively,
we have an additive model $y_{0, t} = g_{0, t} + \xi_{0, t}$
with its components defined in the same manner as
$y_{r, t}$, $g_{r, t}$ and $\xi_{r, t}$ in (\ref{def:y:r})--(\ref{def:xi:r}).
Then $g_{0, t}$ is piecewise constant with (at most) one change-point at $t=\eta_1$,
and $\{\xi_{0, t}\}$ satisfies (\ref{xi:bound:one})--(\ref{xi:bound:two}) in place of $\{\xi_{r, t}\}$.
Note that the DC statistic at $m=\wh{m}_1$ and $b=\heta_1$ 
can equivalently be represented using $\{y_{0, t}\}$ as
\begin{eqnarray*}
\cD^\varphi_{\wh{m}_1}(\{|\cX^{(j)}_{1, \heta_1, T}|\}_{j=1}^n) =
\cD^\varphi_{\wh{m}_1}(\{i^0_j \cdot \cX^{k^0_j}_{1, \heta_1, T}\}_{j=1}^n) = \cC_{\heta_1}(\{y_{0, t}\}_{t=1}^T).
\end{eqnarray*}

In the case when there exists no change-point, $g_{0, t}$ is constant and therefore
$\cC_b(\{g_{0, t}\}_{t=1}^T) = 0$ for all $b$.
As $(1, \heta_1, T)\in\cI_2$, we have $\cD^\varphi_{\wh{m}_1}(\{|\cX^{(j)}_{1, \heta_1, T}|\}_{j=1}^n) = \cC_{\heta_1}(\{\xi_{0, t}\}_{t=1}^T) < C'n^\varphi\log\,T$
for some $C'>2\sqrt{2}$, which proves Theorem \ref{thm:zero}.

To prove Theorem \ref{thm:one}, we need additional lemmas stated with the generic additive model in (\ref{def:add}).
\vspace{5pt}
\begin{lem}
\label{lem:one:one}
Assume that there exists a single change-point $\eta_1$ in $g_t$ which satisfies (A4)
and $|g_{\eta_1+1}-g_{\eta_1}| \ge \delta$.
Then for some $C_2>0$,
\begin{eqnarray}
\label{lem:one:one:eq}
|\cC_{\eta_1}(\{g_t\}_{t=1}^T)| = \max_{b\in[1, T)}|\cC_b(\{g_t\}_{t=1}^T)| = 
\sqrt{\frac{\eta_1(T-\eta_1)}{T}}\delta \ge C_2\delta T^{\beta}.
\end{eqnarray}
\end{lem}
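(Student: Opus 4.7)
The plan is a direct calculation exploiting the two-level structure of $g_t$. First, I would write $g_t = g^{(1)}\bbI(t \le \eta_1) + g^{(2)}\bbI(t > \eta_1)$, so that $g^{(2)} - g^{(1)} = g_{\eta_1+1} - g_{\eta_1}$ and $|g^{(2)} - g^{(1)}| \ge \delta$. Substituting into the definition of $\cC_b$ in (\ref{eq:cusum:two}) (with $s=1$, $e=T$, and the normalising constant set to $1$) and using the piecewise constant structure: for $b \le \eta_1$, the mean over $[1, b]$ is $g^{(1)}$, while the mean over $[b+1, T]$ is $[(\eta_1-b)g^{(1)} + (T-\eta_1)g^{(2)}]/(T-b)$. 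A short algebraic simplification yields $|\cC_b(\{g_t\}_{t=1}^T)| = \delta\,(T-\eta_1)\sqrt{b/(T(T-b))}$ for $b \in [1, \eta_1]$, and, by the symmetric calculation on the other side, $|\cC_b(\{g_t\}_{t=1}^T)| = \delta\,\eta_1\sqrt{(T-b)/(Tb)}$ for $b \in [\eta_1, T-1]$.

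Next, since $b \mapsto b/(T-b)$ is strictly increasing on $[1, T-1]$, the first expression is strictly increasing in $b$ on $[1, \eta_1]$ and the second is strictly decreasing on $[\eta_1, T-1]$. Therefore $\max_{b \in [1, T)} |\cC_b|$ is attained at $b = \eta_1$, where the two expressions agree and give $|\cC_{\eta_1}(\{g_t\}_{t=1}^T)| = \delta \sqrt{\eta_1(T-\eta_1)/T}$, which is precisely the claimed equality.

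Finally, for the lower bound I would invoke (A4): $\eta_1 \wedge (T-\eta_1) \ge cT^\beta$, together with the trivial bound $\eta_1 \vee (T-\eta_1) \ge T/2$ (since $\eta_1 + (T-\eta_1) = T$). Multiplying these and dividing by $T$ yields $\eta_1(T-\eta_1)/T \ge (c/2)T^\beta$, so that $\sqrt{\eta_1(T-\eta_1)/T} \ge \sqrt{c/2}\,T^{\beta/2}$, giving the stated bound with $C_2 = \sqrt{c/2}$ (reading the exponent on the right-hand side of (\ref{lem:one:one:eq}) as $\beta/2$, consistent with the square-root prefactor). The argument is elementary and contains no substantive obstacle; the only care required is in verifying the closed-form expressions for $|\cC_b|$ and their monotonicity on either side of $\eta_1$, which underpins the identification of the maximiser.
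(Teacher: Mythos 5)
Your proof is correct, and it is more self-contained than the paper's own. The paper dispatches the first equality in (\ref{lem:one:one:eq}) by citing Lemma \ref{lem:three}, which in turn rests on Lemmas 2.2--2.3 of \cite{venkatraman1992} (the general fact that the CUSUM series of a piecewise constant signal is maximised at a true change-point), and then evaluates $|\cC_{\eta_1}(\{g_t\}_{t=1}^T)|$ from the definition; the final inequality is not addressed in the paper's proof at all. You instead compute the CUSUM in closed form on both sides of $\eta_1$, obtaining $|g^{(2)}-g^{(1)}|(T-\eta_1)\sqrt{b/(T(T-b))}$ for $b\le\eta_1$ and $|g^{(2)}-g^{(1)}|\,\eta_1\sqrt{(T-b)/(Tb)}$ for $b\ge\eta_1$, and read off the monotonicity on either side of $\eta_1$. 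This is an elementary substitute for the appeal to Venkatraman's lemma, and it works precisely because there is a single change-point; in the multiple change-point setting (Lemma \ref{lem:two:one}) the paper genuinely needs the general lemma, so your shortcut buys simplicity here at the cost of not generalising. You also correctly resolve the exponent: as written, $\sqrt{\eta_1(T-\eta_1)/T}\,\delta \ge C_2\delta T^{\beta}$ cannot hold with a fixed $C_2$ (the left-hand side is at most $\delta\sqrt{T}/2$, and is of exact order $\delta T^{\beta/2}$ when $\eta_1 \sim T^\beta$), so $T^{\beta/2}$ is the intended rate; this is also the rate used downstream in (\ref{ineq:delta}) against the test criterion $\pi^\varphi_{n, T} < C''m_1^\varphi\wt{\delta}_1 T^{\beta/2}$, and your derivation of it from (A4) together with $\eta_1\vee(T-\eta_1)\ge T/2$ is exactly what the paper leaves implicit. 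One cosmetic remark: since the hypothesis is $|g_{\eta_1+1}-g_{\eta_1}|\ge\delta$ rather than equality, the middle relation in (\ref{lem:one:one:eq}) should strictly be ``$\ge$''; you inherit this slip from the statement itself, and it does not affect your argument.
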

\begin{proof}
The first equality (\ref{lem:one:one:eq}) is a direct result of Lemma \ref{lem:three}.
The second equality follows from the definition of $|\cC_{\eta_1}(\{g_t\}_{t=1}^T)|$.
\end{proof}

\begin{lem}
\label{lem:one:two}
Assume that the conditions imposed in Lemma \ref{lem:one:one} are met. 
Then for some $\ept$, we have
\begin{eqnarray*}
|\cC_{\eta_1}(\{g_t\}_{t=1}^T) - \cC_{b'}(\{g_t\}_{t=1}^T)| \ge C_3\ept|\cC_{\eta_1}(\{g_t\}_{t=1}^T)|\frac{T}{\eta_1(T-\eta_1)}
\end{eqnarray*}
with any $b'$ satisfying $|b'-\eta_1| \ge c_0\ept$.
\end{lem}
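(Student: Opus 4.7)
The plan is to reduce the lemma to an explicit computation by exploiting the fact that under a single change-point, the CUSUM profile admits a closed form. Write $g_t = \mu_1$ for $t \le \eta_1$ and $g_t = \mu_1 + \delta'$ for $t > \eta_1$, where $\delta' = g_{\eta_1+1} - g_{\eta_1}$ satisfies $|\delta'| \ge \delta$. A direct evaluation of partial sums yields
\[
\cC_b(\{g_t\}_{t=1}^T) \;=\; -\,\delta' \cdot \frac{(T-\eta_1)\sqrt{b}}{\sqrt{T(T-b)}} \quad (b \le \eta_1), \qquad \cC_b(\{g_t\}_{t=1}^T) \;=\; -\,\delta' \cdot \frac{\eta_1\sqrt{T-b}}{\sqrt{Tb}} \quad (b \ge \eta_1).
\]
In particular $\cC_b$ has constant sign in $b$, so the quantity we need to bound reduces to $|\cC_{\eta_1}| - |\cC_{b'}|$, with $|\cC_{\eta_1}| = |\delta'|\sqrt{\eta_1(T-\eta_1)/T}$ (consistent with Lemma \ref{lem:one:one}).

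Next, I would work with the squared difference, which telescopes cleanly. For $b' \le \eta_1$ and $h := \eta_1 - b'$,
\[
|\cC_{\eta_1}|^2 - |\cC_{b'}|^2 \;=\; \frac{\delta'^2 (T-\eta_1)^2}{T}\left\{\frac{\eta_1}{T-\eta_1} - \frac{b'}{T-b'}\right\} \;=\; \delta'^2 \,\frac{(T-\eta_1)\,h}{T-b'},
\]
and symmetrically $|\cC_{\eta_1}|^2 - |\cC_{b'}|^2 = \delta'^2 \eta_1 h/b'$ when $b' \ge \eta_1$ with $h = b' - \eta_1$. Factor as a difference of squares and bound $|\cC_{\eta_1}| + |\cC_{b'}| \le 2|\cC_{\eta_1}|$. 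Using $\delta'^2 = |\cC_{\eta_1}|^2 \cdot T/(\eta_1(T-\eta_1))$, the case $b' \le \eta_1$ gives
\[
|\cC_{\eta_1}| - |\cC_{b'}| \;\ge\; \frac{|\cC_{\eta_1}|\,T\,h}{2\,\eta_1\,(T-b')},
\]
and the case $b' \ge \eta_1$ gives the mirror bound with $(T-\eta_1)\,b'$ in the denominator.

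Finally I would choose $\ept$ small enough that $c_0 \ept \le \eta_1 \wedge (T-\eta_1)$; under (A4) and (A5) this is automatic for the $\ept$ appearing in Theorem \ref{thm:one}, since $\ept/T^\beta \to 0$. Under this smallness, in the first case $T - b' = T - \eta_1 + h \le 2(T-\eta_1)$, and in the second $b' = \eta_1 + h \le 2\eta_1$, so in either case
\[
|\cC_{\eta_1}| - |\cC_{b'}| \;\ge\; \frac{|\cC_{\eta_1}|\,T\,h}{4\,\eta_1(T-\eta_1)} \;\ge\; \frac{c_0}{4}\,\ept\,|\cC_{\eta_1}|\,\frac{T}{\eta_1(T-\eta_1)},
\]
which is the claim with $C_3 = c_0/4$.

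There is no real obstacle: the lemma is essentially an elementary calculus inequality on a monotonic CUSUM profile. The only point requiring care is keeping the two cases $b' \le \eta_1$ and $b' > \eta_1$ symmetric and verifying that the smallness assumption $c_0 \ept \le \eta_1 \wedge (T-\eta_1)$ (needed to absorb the denominator $T-b'$ or $b'$ into $2(T-\eta_1)$ or $2\eta_1$) is compatible with the $\ept$ used in the theorems where this lemma is invoked; this is precisely what (A4)--(A5) guarantee.
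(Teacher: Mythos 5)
Your strategy is essentially the paper's, modulo one technical device: both proofs reduce the lemma to explicit computation on the closed-form CUSUM profile of a one-change-point signal. Where the paper evaluates the profile at the displaced point $b' = \eta_1 + c_0\ept$ and lower-bounds the gap via a Taylor expansion of $(\sqrt{1+c_0\ept/\eta_1}-\sqrt{1-c_0\ept/(T-\eta_1)})/\sqrt{1+c_0\ept/\eta_1}$ (yielding $C_3 = c_0/(2\sqrt{2})$), you square the gap so that it telescopes, then factor the difference of squares and use $|\cC_{\eta_1}|+|\cC_{b'}| \le 2|\cC_{\eta_1}|$. Your identities $|\cC_{\eta_1}|^2-|\cC_{b'}|^2 = \delta'^2(T-\eta_1)h/(T-b')$ for $b' \le \eta_1$ and $=\delta'^2\eta_1 h/b'$ for $b' \ge \eta_1$ are correct, and this purely algebraic route is slightly cleaner than the Taylor step, which carries the same implicit smallness requirement on $c_0\ept/\eta_1$ that you make explicit.

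There is, however, one slip in your final absorption step: from $c_0\ept \le \eta_1 \wedge (T-\eta_1)$ you conclude $T-b' = (T-\eta_1)+h \le 2(T-\eta_1)$, but this needs $h \le T-\eta_1$, and nothing in the lemma bounds $h = |b'-\eta_1|$ from above. Concretely, for $\eta_1 = 0.9T$ and $b'=1$ (so $h \approx 0.9T$), your intermediate inequality $|\cC_{\eta_1}|-|\cC_{b'}| \ge |\cC_{\eta_1}|Th/\{4\eta_1(T-\eta_1)\}$ fails: its right-hand side is about $2.5\,|\cC_{\eta_1}|$, while the left-hand side can never exceed $|\cC_{\eta_1}|$. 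The repair is already contained in your own computations: the map $h \mapsto h/\{(T-\eta_1)+h\}$ is increasing in $h$ (equivalently, $|\cC_b|$ is increasing on $[1, \eta_1]$ and decreasing on $[\eta_1, T)$, as your explicit formulas show), so the infimum of the gap over $|b'-\eta_1| \ge c_0\ept$ is attained at $h = c_0\ept$, where $T-b' \le 2(T-\eta_1)$ does follow from the smallness assumption. This boundary-case reduction is precisely what the paper does implicitly in the first inequality of its proof, and invokes via Lemma 2.7 of \cite{venkatraman1992} in the proof of Lemma \ref{lem:four}. With that one-line fix your argument is complete, with $C_3 = c_0/4$.
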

\begin{proof}
Without loss of generality, let $g_{\eta_1} = g^*_1 > g^*_2 = g_{\eta_1+1}$ and that $b' > \eta_1$. 
Then
\begin{align*}
& \textstyle{\cC_{\eta_1}(\{g_t\}_{t=1}^T) - \cC_{b'}(\{g_t\}_{t=1}^T) \ge  \sqrt{\frac{\eta_1(T-\eta_1)}{T}}\delta
- \sqrt{\frac{(\eta_1+c_0\ept)(T-\eta_1-c_0\ept)}{T}}\left(\frac{g^*_1\eta_1+g^*_2c_0\ept}{\eta_1+c_0\ept}-g^*_2\right)}
\\
&= \textstyle{\sqrt{\frac{\eta_1(T-\eta_1)}{T}}\delta\left(1-\sqrt{\frac{\eta_1(T-\eta_1-c_0\ept)}{(\eta_1+c_0\ept)(T-\eta_1)}}\right)
= \cC_{\eta_1}(\{g_t\}_{t=1}^T) \cdot \frac{\sqrt{1+\frac{c_0\ept}{\eta_1}}-\sqrt{1-\frac{c_0\ept}{T-\eta_1}}}{\sqrt{1+\frac{c_0\ept}{\eta_1}}}}
\\
&\ge \textstyle{\cC_{\eta_1}(\{g_t\}_{t=1}^T) \cdot \frac{c_0\ept}{2\sqrt{2}}\frac{T}{\eta_1(T-\eta_1)},}
\end{align*}
where the inequality follows from the Taylor expansion.
\end{proof}

Lemma \ref{lem:three} implies that
$|\cC_{\eta_1}(\{g_{0, t}\}_{t=1}^T)| \ge |\cC_{\heta_1}(\{g_{0, t}\}_{t=1}^T)|$
while $|\cC_{\eta_1}(\{y_{0, t}\}_{t=1}^T)| \le 
|\cC_{\heta_1}(\{y_{0, t}\}_{t=1}^T)|$ by assumption.
From (\ref{xi:bound:two}), we have
$|\cC_b(\{g_{r, t}\}_{t=1}^T)| \ge |\cC_b(\{y_{r, t}\}_{t=1}^T)| - c_4n^\varphi\log\,T$ 
at any $b\in\cI_{1, T}$ and $r=0, 1$ for some $c_4 > 0$.
Note that for all $\varphi\in[0, 1]$ and $1 \le m \le n$, on a given interval $[s, e]$, 
the ordering and the set of signs 
that are applied to $\{\cX^j_{1, b, T}\}_{j=1}^n$ in order to produce $\{|\cX^{(j)}_{1, b, T}|\}_{j=1}^n$
(identical to $\{k^0_j\}$ and $\{i^0_j\}$ when $b=\heta_1$),
leads to the maximum value of $\cD^\varphi_m$ at any $b$,
among all possible index permutations and the sets of signs.
Based on the above observations, the following holds with probability tending to one:
\beq
&& |\cC_{\heta_1}(\{y_{0, t}\}_{t=1}^T)| \ge
\max_{1 \le m \le n} \cD^\varphi_m(\{|\cX^{(j)}_{1, \eta_1, T}|\}_{j=1}^n)
\ge
|\cC_{\eta_1}(\{y_{1, t}\}_{t=1}^T)|, \mbox{ and}
\nonumber \\
&& 
|\cC_{\wh{\eta}_1}(\{g_{0, t}\}_{t=1}^T)| \ge
|\cC_{\eta_1}(\{g_{1, t}\}_{t=1}^T)| - 2c_4n^\varphi \log\,T
\ge
\frac{c_1m_1^\varphi\wt{\delta}_1}{2}\sqrt{\frac{\eta_1(T-\eta_1)}{T}}, \label{ineq:delta}
\eeq
where the last inequality of (\ref{ineq:delta}) follows from (\ref{min:jump}) and Lemma \ref{lem:one:one}. 
Recalling the rates given above Theorem \ref{thm:zero} for $\pi^\varphi_{n, T}$, 
we have $\cD^\varphi_{\wh{m}_1}(\{|\cX^{(j)}_{1, \heta_1, T}|\}_{j=1}^n) > \pi^\varphi_{n, T}$.

To prove the consistency in the location of the estimated change-point, let $|\heta_1-\eta_1|=c_0\ept$
and assume that $\heta_1 > \eta_1$ without loss of generality.
We can view the problem of deriving the upper bound on the bias $|\heta_1-\eta_1|$ for 
$\heta_1 = \arg\max_{b\in[1, T]\setminus\cI_{1, T}}\max_{1 \le m \le n} \cD^\varphi_m(\{|\cX^{(j)}_{1, b, T}|\}_{j=1}^n)$, 
as that for 
$$\heta_1 = \arg\max_{b\in[1, T]\setminus\cI_{1, T}} \cC_b(\{y_{0, t}\}_{t=1}^T).$$ 
Adopting the notations in Lemma \ref{lem:one:two} with $g_{0, t}$ replacing $g_t$,
\beqs
\left\vert\frac{1}{\heta_1}\sum_{t=1}^{\heta_1}g_{0, t} - \frac{1}{\eta_1}\sum_{t=s}^{\eta_1}g_{0, t}\right\vert
&=& \left\vert\frac{g^*_1\eta_1+g^*_2c_0\ept}{\heta_1} - g^*_1\right\vert \le 
|g^*_1-g^*_2|\frac{c_0\ept}{\eta_1},
\\
\left\vert\frac{1}{\heta_1}\sum_{t=1}^{\heta_1}g_{0, t} - \frac{1}{T-\eta_1}\sum_{t=\eta_1+1}^e g_{0, t}\right\vert
&=& \left\vert\frac{g^*_1\eta_1+g^*_2c_0\ept}{\heta_1} - g^*_2\right\vert \le |g^*_1-g^*_2|,
\eeqs
and (\ref{ineq:delta}) indicates that $|g^*_1-g^*_2| \ge c_5m_1^\varphi\wt{\delta}_1$ for some $c_5 > 0$.
Combining the above bounds with Lemmas \ref{lem:one:one}--\ref{lem:one:two},
the conditions in Lemma \ref{lem:four} are met for $\eta=\eta_1$, $\heta=\heta_1$, $s=1$, $e=T$ and 
$\ept = (n^{\varphi}/m_1)^2\wt{\delta}_1^2\log^2\,T$,
and thus Theorem \ref{thm:one} is proved.  \hfill $\square$

\subsection{Proof of Theorem \ref{thm:two}}
\label{sec:pf:thm:two}

Throughout the section, we assume (A1)--(A3), (A6) and (B1)--(B2).
Let $s, e$ satisfy (\ref{lem:cond:one})--(\ref{lem:cond:two}) with $\bar{\epsilon}_T = n^{2\varphi}\uDel_\varphi^{-2} T^{5(1-\beta)}\log^2\,T$. 
For 
$$ (\heta, \wh{m}) = \arg\max_{b\in[s, e]\setminus\cI_{s, e}, \, 1 \le m \le n} \cD^\varphi_m(\{|\cX^{(j)}_{s, b, e}|\}_{j=1}^n), $$
let $\{k^0_1, \ldots, k^0_n\}$ denote a permutation of $\{1, 2, \ldots, n\}$ satisfying
$|\cX^{k^0_1}_{s, \heta, e}| \ge |\cX^{k^0_2}_{s, \heta, e}| \ge \ldots \ge |\cX^{k^0_n}_{s, \heta, e}|$,
and $i^0_j\in\{-1, 1\}$ satisfy
$|\cX^{k^0_j}_{s, \heta, e}| = i^0_j \cdot \cX^{k^0_j}_{s, \heta, e}$
for all $j=1, \ldots, n$.
As before, with $\wh{m}$, $\{k^0_j\}$ and $\{i^0_j\}$ replacing $m_r$, $\{k^r_j\}$ and $\{i^r_j\}$, respectively,
we define an additive model $y_{0, t} = g_{0, t} + \xi_{0, t}$
with its components obtained in the same manner as those in (\ref{def:y:r})--(\ref{def:xi:r}).
Then $g_{0, t}$ is a piecewise constant signal with change-points at $t=\eta_r, \ r=1, \ldots, N$,
and $\{\xi_{0, t}\}$ satisfies (\ref{xi:bound:one})--(\ref{xi:bound:two}) in place of $\{\xi_{r, t}\}$.
Also, the DC statistic at $m = \wh{m}$ and $b = \heta$ 
can equivalently be represented by
$\cD^\varphi_{\wh{m}}(\{|\cX^{(j)}_{s, \heta, e}|\}_{j=1}^n) =
\cD^\varphi_{\wh{m}}(\{i^0_j \cdot \cX^{k^0_j}_{s, \heta, e}\}_{j=1}^n) = \cC_{\heta}(\{y_{0, t}\}_{t=s}^e)$.

Below we introduce additional lemmas stated with the generic additive model in (\ref{def:add}).
\begin{lem}
\label{lem:two:one}
Let $s$ and $e$ satisfy (\ref{lem:cond:one})
and assume that there exists a change-point $\eta_q\in(s, e)$ at which
$(\eta_q-s+1) \wedge (e-\eta_q) > c_2T^\beta$
and $|g_{\eta_q+1}-g_{\eta_q}| \ge \delta$.
Then there exists $q'\in\{q_1+1, \ldots, q_2\}$ and $C_4>0$ such that
$|\cC_{\eta_{q'}}(\{g_t\}_{t=s}^e)| = \max_{b\in[s, e)}|\cC_b(\{g_t\}_{t=s}^e)| \ge C_4\delta T^{\beta-1/2}$.
\end{lem}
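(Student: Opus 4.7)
My plan is to separate the claim into the equality identifying where the maximum sits, and the quantitative lower bound on this maximum.

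For the equality, Lemma \ref{lem:three} applied to the piecewise constant signal $g_t$ restricted to $[s,e]$, whose change-points within this interval are exactly $\eta_{q_1+1},\ldots,\eta_{q_2}$, immediately yields that $\max_{b\in[s,e)}|\cC_b(\{g_t\}_{t=s}^e)|$ is attained at some $\eta_{q'}$ with $q'\in\{q_1+1,\ldots,q_2\}$.

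For the lower bound, I would first evaluate the CUSUM at the distinguished change-point $\eta_q$ in closed form:
\begin{eqnarray*}
|\cC_{\eta_q}(\{g_t\}_{t=s}^e)| &=& \sqrt{\frac{(\eta_q-s+1)(e-\eta_q)}{e-s+1}}\cdot\left|\frac{1}{\eta_q-s+1}\sum_{t=s}^{\eta_q}g_t - \frac{1}{e-\eta_q}\sum_{t=\eta_q+1}^{e}g_t\right|.
\end{eqnarray*}
By the spacing hypothesis $(\eta_q-s+1)\wedge(e-\eta_q)>c_2T^\beta$ and $e-s+1\le T$, the weight factor is bounded below by $c_2T^{\beta-1/2}$. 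If the difference of the two partial averages is at least a fixed fraction of $\delta$, this immediately gives $|\cC_{\eta_q}|\ge C_4\delta T^{\beta-1/2}$, and maximality finishes the job.

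The main obstacle is the remaining case, in which the partial averages are much closer than $\delta$ despite the jump $|\delta_q|:=|g_{\eta_q+1}-g_{\eta_q}|\ge\delta$; this can only happen when other jumps of $g_t$ inside $[s,e]$ compensate for the jump at $\eta_q$. To handle this, I would decompose
\begin{eqnarray*}
\cC_b(\{g_t\}_{t=s}^e) &=& \sum_{r=q_1+1}^{q_2}\delta_r\, h_r(b),
\end{eqnarray*}
where $h_r(b)=\cC_b(\{\mathbf{1}(t>\eta_r)\}_{t=s}^e)$ is an explicit tent-like function peaking in absolute value at $b=\eta_r$ with $|h_r(\eta_r)|=\sqrt{(\eta_r-s+1)(e-\eta_r)/(e-s+1)}$. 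Arguing by contradiction, suppose $|\cC_{\eta_{q'}}(\{g_t\}_{t=s}^e)|<C_4\delta T^{\beta-1/2}$ for every $q'\in\{q_1+1,\ldots,q_2\}$. Evaluating the decomposition at $b=\eta_{q'}$ for all $q'$ yields a linear system in $(\delta_r)_r$ whose coefficient matrix has entries $h_r(\eta_{q'})$; the spacing condition (B1) bounds the off-diagonal overlaps $|h_r(\eta_{q'})|$ relative to the diagonal peaks $|h_r(\eta_r)|$, so the matrix is effectively diagonally dominant up to constants depending on $c_2$. The system should then force $|\delta_q|$ itself to be much smaller than $\delta$, contradicting the hypothesis, provided $C_4$ is chosen sufficiently small. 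The delicate step, and the crux of the proof, is quantifying the tent overlap under (B1) sharply enough to produce diagonal dominance, after which the contradiction is a routine linear-algebra computation.
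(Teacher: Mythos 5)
Your treatment of the equality is the same as the paper's (both invoke Lemma \ref{lem:three}), and your opening observation --- that evaluating the CUSUM at $\eta_q$ suffices unless other jumps compensate --- is sound. The genuine gap is in your handling of the compensation case: the matrix $\bigl(h_r(\eta_{q'})\bigr)_{q', r}$ is \emph{not} diagonally dominant under (B1), and no choice of $C_4$ repairs this. The tent $h_r$ is a global function whose width is of order $e-s+1$: for $b\ge\eta_r$ one computes $|h_r(b)|=(\eta_r-s+1)\sqrt{(e-b)/\{(e-s+1)(b-s+1)\}}$, which decays away from its peak only on the scale $e-s+1\sim T$. Under (B1) two change-points may be as close as $cT^\beta$ with $\beta<1$, in which case the off-diagonal entry for a neighbouring pair satisfies $|h_r(\eta_{q'})|=(1-O(T^{\beta-1}))\,|h_r(\eta_r)|$, i.e.\ off-diagonal and diagonal entries are asymptotically equal. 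Worse, the configuration that defeats diagonal dominance is precisely the extremal one for the lemma: take jumps $+\delta$ at $\eta_r$ and $-\delta$ at $\eta_{r+1}=\eta_r+cT^\beta$ near the middle of $[s, e]$. Then \emph{every} $|\cC_{\eta_{q'}}(\{g_t\}_{t=s}^e)|$ is of order $\delta T^{\beta-1/2}$ (the two tents cancel to first order), so the bound you are trying to prove is exactly attained, the $2\times 2$ system is nearly singular (determinant of order $T^\beta$ against entries of order $\sqrt{T}$), and any argument forcing some $|\cC_{\eta_{q'}}|$ to be comparable to the peak height $\delta\,|h_{q'}(\eta_{q'})|\sim\delta T^{1/2}$ is simply false. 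What your route would actually require is a lower bound on the minimum singular value of a matrix whose dimension $q_2-q_1$ may grow like $T^{1-\beta}$; this is neither supplied nor obtainable from overlap estimates of the kind you describe.

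The paper's proof sidesteps the issue entirely by never evaluating the CUSUM at $\eta_q$. Using the normalisation $\sum_{t=s}^e g_t=0$ (adopted just before Lemma \ref{lem:three}), one has $\cC_b(\{g_t\}_{t=s}^e)=\sqrt{\tfrac{e-s+1}{(b-s+1)(e-b)}}\sum_{t=s}^b g_t$. Since $|g_{\eta_q+1}-g_{\eta_q}|\ge\delta$ gives $|g_{\eta_q}|\vee|g_{\eta_q+1}|\ge\delta/2$, and since by (B1) the signal $g_t$ is constant on the windows of length $c_2T^\beta$ immediately to the left and to the right of $\eta_q$, one of the window sums $\vert\sum_{t=\eta_q-c_2T^\beta+1}^{\eta_q}g_t\vert$, $\vert\sum_{t=\eta_q+1}^{\eta_q+c_2T^\beta}g_t\vert$ is at least $c_2\delta T^\beta/2$. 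Writing a window sum as the difference of two partial sums started at $s$ yields $\max_{b\in[s, e)}\vert\sum_{t=s}^b g_t\vert\ge c_2\delta T^\beta/4$, and multiplying by the uniform bound $\sqrt{(e-s+1)/\{(b-s+1)(e-b)\}}\ge 2/\sqrt{e-s+1}\ge 2/\sqrt{T}$ gives the claim with $C_4=c_2/2$. Cancellation from other jumps never enters, because locality is enforced through window sums of length $c_2T^\beta$ (on which no other jump can act, by (B1)) rather than through the global partial averages on either side of $\eta_q$.
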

\begin{proof}
The equality part is a direct result of Lemma \ref{lem:three}.
Since $|g_{\eta_q+1}| \vee |g_{\eta_q}| \ge \delta/2$,
we have 
$\vert\sum_{t=\eta_q-c_2T^\beta+1}^{\eta_q} g_t \vert \vee \vert\sum_{t=\eta_q+1}^{\eta_q+c_2T^\beta} g_t \vert
\ge c_2\delta T^\beta/2$.
Hence, \\
$\max_{b\in[s, e)}|\sum_{t=s}^b g_t| \ge c_2\delta T^\beta/4$,
from which it is derived that
\begin{eqnarray*}
\max_{b\in[s, e)}|\cC_b(\{g_t\}_{t=s}^e)| \ge
\min_{b\in[s, e)}\sqrt{\frac{e-s+1}{(b-s+1)(e-b)}} \cdot
\max_{b\in[s, e)}|\sum_{t=s}^b g_t|
\ge C_4\delta T^{\beta-1/2}.
\end{eqnarray*}
\end{proof}

\begin{lem}
\label{lem:two:two}
Assume that $s$ and $e$ meet the conditions (\ref{lem:cond:one})--(\ref{lem:cond:two})
and let $\eta\in(s, e)$ denote a change-point that satisfies
\begin{eqnarray}
\label{lem:two:two:cond}
|\cC_\eta(\{g_t\}_{t=s}^e)| > \max_{b\in[s, e)} |\cC_b(\{g_t\}_{t=s}^e)| - C_5n^\varphi\log\,T
\end{eqnarray}
for some positive constant $C_5$.
Then for some $\ept$ and $C_6 > 0$, we have
\begin{eqnarray}
\label{lem:two:two:eq}
|\cC_\eta(\{g_t\}_{t=s}^e) - \cC_{b'}(\{g_t\}_{t=s}^e)| > C_6T^{\beta-2}\ept |\cC_\eta(\{g_t\}_{t=s}^e)|
\end{eqnarray}
with any $b'$ satisfying $|b'-\eta| \ge \ept$.
\end{lem}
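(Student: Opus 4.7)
The proof follows the blueprint of Lemma~\ref{lem:one:two} but now on a generic interval $[s,e]$ of length $L=e-s+1$ that may contain several change-points of $g_t$. The first step is to localise: I take $b'=\eta+k$ with $|k|\le cT^{\beta}$ small enough that no other change-point of $g$ lies strictly between $\eta$ and $b'$, which is guaranteed by (B1) together with (\ref{lem:cond:two}) since they force both neighbouring change-points to sit at distance $\gtrsim T^{\beta}$ from $\eta$. Writing $a_1=\eta-s+1$ and $a_2=e-\eta$, on such a range the partial sum $S_{b'}=\sum_{t=s}^{b'}g_t$ equals $S_\eta+k\,g^{\pm}$, where $g^{\pm}$ denotes the constant value of $g_t$ on the one-sided interval adjoining $\eta$.

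Next I Taylor-expand $\cC_{b'}(\{g_t\}_{t=s}^e)=\sqrt{L/((a_1+k)(a_2-k))}\,S_{b'}$ in $k$, exactly as in Lemma~\ref{lem:one:two}. Using $\cC_\eta=\sqrt{L/(a_1a_2)}\,S_\eta$ and $(a_1+k)(a_2-k)=a_1a_2+k(a_2-a_1)-k^2$, the leading-order piece of $\cC_{b'}-\cC_\eta$ is linear in $k$, with a coefficient that mixes $g^{\pm}$ and $S_\eta/(a_1a_2)$. The slope kink $|g^{+}-g^{-}|=|\delta_q|$ at the change-point $\eta$ makes the one-sided derivatives of $\cC_b$ genuinely distinct, and the near-maximality assumption (\ref{lem:two:two:cond}), together with Lemma~\ref{lem:three}, rules out both one-sided derivatives being simultaneously close to zero; otherwise $|\cC_\eta|$ could not sit within $C_5n^{\varphi}\log\,T$ of the global maximum. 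On at least one side of $\eta$ this yields $|\cC_{b'}-\cC_\eta|\gtrsim |k|\,L/(a_1a_2)\cdot|\cC_\eta|$. Inserting the worst-case geometry $L\le T$ with $a_1\wedge a_2\gtrsim T^{\beta}$ and $a_1\vee a_2\sim L$ gives $L/(a_1a_2)\gtrsim T^{\beta-2}$, and substituting $|k|=\ept$ produces the claimed rate.

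To promote the bound to arbitrary $b'$ with $|b'-\eta|\ge\ept$ within the segment adjoining $\eta$, I invoke Lemma~2.7 of \cite{venkatraman1992} (used already in the proof of Lemma~\ref{lem:four}): on each sub-interval between two consecutive change-points of $g$, $|\cC_b(\{g_t\}_{t=s}^e)|$ is either monotonic or U-shaped, so once the inequality is established at $|b'-\eta|=\ept$ it extends by monotonicity of $|\cC_\eta-\cC_{b'}|$ within the adjoining segment. The principal obstacle will be the sign bookkeeping in the linear expansion: the cross-term $-k(a_2-a_1)S_\eta/(2a_1a_2)$ may reinforce or partially cancel $k\,g^{\pm}$ depending on which side of $\eta$ we perturb and the relative sizes of $a_1,a_2$, so a two-sided case analysis tied to (\ref{lem:two:two:cond}) is, I expect, where the real work lies. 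The geometric estimates $L\le T$ and $a_1\wedge a_2\gtrsim T^\beta$ then have to be inserted in the right order to land on $T^{\beta-2}$ rather than the sharper $T^{-\beta}$ that one might naively expect.
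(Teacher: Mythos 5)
There is a genuine gap, and it sits exactly where you yourself predicted ``the real work lies'' --- that work is not done, and it is the whole content of the lemma. Note first that the paper supplies no argument of its own: its proof is a one-line deferral to Lemma 2.6 of \cite{venkatraman1992}, so a self-contained proof must reproduce that lemma's case analysis. Your scaffolding (localisation via (B1), expansion of $\cC_{\eta+k}$ along a constant segment, the geometric source of $T^{\beta-2}$) matches that blueprint, but the pivotal claim --- that (\ref{lem:two:two:cond}) together with Lemma \ref{lem:three} ``rules out both one-sided derivatives being simultaneously close to zero'' --- is asserted, not proved, and it cannot follow from the reasoning you indicate. The hypothesis (\ref{lem:two:two:cond}) is loose by $C_5 n^\varphi \log\,T$, whereas the change to be certified is only $C_6 T^{\beta-2}\ept|\cC_\eta|$; with the $\ept$ used in Lemma \ref{lem:two:three} and $|\cC_\eta| \asymp \uDel_\varphi T^{\beta-1/2}$ from (\ref{ineq:delta:bs}), condition (B2) gives $T^{\beta-2}\ept|\cC_\eta|/(n^\varphi\log\,T) = o(T^{1-5\beta/4}) \to 0$ for $\beta>6/7$. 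An inequality with slack $n^\varphi\log\,T$ therefore cannot by itself exclude flatness at the far finer resolution the conclusion demands. Concretely, a configuration in which the right derivative of $\cC_b$ at $\eta$ vanishes is compatible with everything you use; yet there (take $\beta=1$ and $a_1 \asymp a_2 \asymp T$ in your notation) the change of $\cC_b$ over $[\eta,\eta+\ept]$ is of order $|\cC_\eta|\ept^2/T^2$, smaller than the required $|\cC_\eta|\ept/T$ by the factor $\ept/T \to 0$, so (\ref{lem:two:two:eq}) would fail. What actually excludes this case is the rigid parametric form $\cC_{\eta+k} = (A+Bk)\{(a_1+k)(a_2-k)\}^{-1/2}$ on a constant segment, whose modulus admits no interior local maximum (Venkatraman's Lemma 2.7): a vanishing one-sided derivative forces $\cC_b$ to increase monotonically all the way to the neighbouring change-point $\eta'$, and the spacing $\eta'-\eta \ge cT^\beta$ amplifies this into an excess $|\cC_{\eta'}|-|\cC_\eta| \gtrsim |\cC_\eta|T^{2\beta-2}$, which under (B2) and (\ref{ineq:delta:bs}) exceeds the slack $C_5n^\varphi\log\,T$ and contradicts (\ref{lem:two:two:cond}). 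Running this amplification for all intermediate derivative sizes, and on both sides of $\eta$, is precisely Venkatraman's case analysis; note also that you only obtain the bound ``on at least one side'', while (\ref{lem:two:two:eq}) must hold for every $b'$ with $|b'-\eta|\ge\ept$, on either side.

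Your final promotion step is also incorrect: $|\cC_\eta - \cC_{b'}|$ is not monotone in $|b'-\eta|$ on the adjoining segment. Lemma 2.7 of \cite{venkatraman1992} allows $|\cC_b|$ to be U-shaped there, in which case the difference rises and then falls back. For instance, let $g_t$ equal $1$, $0$, $-1$ on the three thirds of $[s,e]$: the partial sums are constant on the middle third, $|\cC_b|$ is U-shaped there, and the two change-points have exactly equal CUSUM values; taking $\eta$ to be the left change-point (a global maximiser, so (\ref{lem:two:two:cond}) holds) and $b'$ the right one gives $|\cC_\eta - \cC_{b'}| = 0$. This defeats the extension argument, and it also shows that the lemma as literally stated is only valid with $b'$ confined to a neighbourhood of $\eta$ --- which is how it is applied (the reductions in Lemmas \ref{lem:four} and \ref{lem:two:three} work with $|b'-\eta| = c_0\ept$) and how Venkatraman's Lemma 2.6 is formulated. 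As written, your proposal neither completes the local case nor extends it validly.
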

\begin{proof}
The result is a modification of Lemma 2.6 in \cite{venkatraman1992} and the arguments therein are directly applicable
to show (\ref{lem:two:two:eq}). 
\end{proof}

For any interval $[s, e]$, we define an index set $\cR_{s, e} \subset \{1, \ldots, N\}$
as $\cR_{s, e} = \{1 \le r \le N: \, \eta_r\in[s, e]\setminus\cI_{s, e}\}$.
Adopting the same arguments as in Section \ref{sec:pf:thm:one}, 
\begin{align}
& |\cC_{\heta}(\{y_{0, t}\}_{t=s}^e)| \ge
\max_{q\in\cR_{s, e}}\max_{1 \le m \le n} \cD^\varphi_m(\{|\cX^{(j)}_{s, \eta_q, e}|\}_{j=1}^n)
\ge
\max_{q\in\cR_{s, e}} \max_{1 \le r \le N}|\cC_{\eta_q}(\{y_{r, t}\}_{t=s}^e)|, 
\nonumber
\\
& |\cC_{\heta}(\{g_{0, t}\}_{t=s}^e)| \ge
\max_{q\in\cR_{s, e}}\max_{1 \le r \le N} |\cC_{\eta_q}(\{g_{r, t}\}_{t=s}^e)| - 2c_4n^\varphi \log\,T
\ge
\frac{C_4}{2} c_1\uDel_\varphi T^{\beta-1/2},
\label{ineq:delta:bs}
\end{align}
where the last inequality of (\ref{ineq:delta:bs}) follows from (\ref{min:jump}), (\ref{lem:cond:one}) and Lemma \ref{lem:two:one}.
\vspace{5pt}
\begin{lem}
\label{lem:two:three}
Let (\ref{lem:cond:one}) and (\ref{lem:cond:two}) hold. 
For $\heta = \arg\max_{b\in[s, e]\setminus\cI_{s, e}}|\cC_b(\{y_{0, t}\}_{t=s}^e)|$, 
there exists a true change-point $\eta_q \equiv \eta\in(s, e)$ satisfying $|\heta - \eta| < c_0\ept$ 
with probability converging to one, where
$\ept = n^{2\varphi}\uDel_\varphi^{-2}T^{5(1-\beta)}\log^2\,T$.
\end{lem}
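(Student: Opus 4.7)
The plan is to apply Lemma \ref{lem:four} to the additive model $y_{0,t} = g_{0,t} + \xi_{0,t}$ on the interval $[s,e]$, with the bias target $\bar{\epsilon}_T = \ept$. Lemma \ref{lem:three}, applied to the piecewise constant signal $g_{0,t}$, guarantees that the maximum of $|\cC_b(\{g_{0,t}\}_{t=s}^e)|$ is attained at some true change-point $\eta \equiv \eta_q \in (s,e)$. Combining (\ref{ineq:delta:bs}) with (\ref{min:jump}) and (\ref{lem:cond:one}), this $\eta$ satisfies $|\cC_\eta(\{g_{0,t}\}_{t=s}^e)| \ge (c_1 C_4/2)\uDel_\varphi T^{\beta-1/2}$, and together with (\ref{xi:bound:two}) (used to bridge the CUSUMs of $y_{0,t}$ and $g_{0,t}$, and to connect $\max|\cC_b(\{y_{0,t}\}_{t=s}^e)|$ attained at $\heta$ with $\max|\cC_b(\{g_{0,t}\}_{t=s}^e)|$ attained at $\eta$) it also verifies the hypothesis (\ref{lem:two:two:cond}) of Lemma \ref{lem:two:two}.

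Assuming for contradiction that $|\heta - \eta| = c_0 \ept$, Lemma \ref{lem:two:two} then yields
\[
|\cC_\eta(\{g_{0,t}\}_{t=s}^e) - \cC_{\heta}(\{g_{0,t}\}_{t=s}^e)| \ge C_6\, c_0 \ept\, T^{\beta-2}\, |\cC_\eta(\{g_{0,t}\}_{t=s}^e)|,
\]
so that the product on the LHS of the key inequality of Lemma \ref{lem:four} is at least of order $\uDel_\varphi^2 T^{3\beta-3}\ept$. It remains to show that each of the three RHS quantities in Lemma \ref{lem:four} is dominated by this lower bound under (B1)--(B2) and the stated value of $\ept$, from which Lemma \ref{lem:four} would deliver the contradiction and hence the claim.

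For those three RHS terms, I would first note that (B1)--(B2) together with $\beta > 6/7$ imply $\ept/T^\beta \to 0$, which ensures that no other change-point lies strictly between $\eta$ and $\heta$, so $g_{0,t}$ is constant on that short interval. This lets me express the partial-average differences on $[s,\heta]$ versus $[s,\eta]$ and $(\eta,e]$ in terms of $|\heta-\eta|\le c_0\ept$ via the CUSUM identity $\bar{g}_{s,\eta} - \bar{g}_{\eta+1,e} = \cC_\eta(\{g_{0,t}\}_{t=s}^e)\sqrt{(e-s+1)/[(\eta-s+1)(e-\eta)]}$, combined with the uniform bound $|g_{0,t}| \le C n^\varphi$ provided by (A3) and (A6). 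The main obstacle is the careful bookkeeping of $T$-exponents: the factor $T^{5(1-\beta)}$ in $\ept$ (rather than the milder $T^{3(1-\beta)}$ that would suffice to dominate the first, noise-only RHS term of Lemma \ref{lem:four}) is precisely what the signal-related second and third RHS terms demand, while the stringent assumption $\beta > 6/7$ in (B1) is what permits this larger $\ept$ to still satisfy $\ept/T^\beta \to 0$, closing the loop.
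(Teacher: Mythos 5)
Your overall architecture is exactly the paper's: reduce to the key inequality of Lemma \ref{lem:four} for the model $y_{0,t}=g_{0,t}+\xi_{0,t}$, take $\eta$ to be the maximiser of $|\cC_b(\{g_{0,t}\}_{t=s}^e)|$ (a true change-point by Lemma \ref{lem:three}, so that (\ref{lem:two:two:cond}) holds trivially), lower-bound $|\cC_\eta|$ via (\ref{ineq:delta:bs}), lower-bound $|\cC_\eta-\cC_{\heta}|$ via Lemma \ref{lem:two:two}, and then check that $\uDel_\varphi^2 T^{3\beta-3}\ept$ dominates the three right-hand terms of Lemma \ref{lem:four}. Where you differ is in how the two signal terms are bounded: you invoke (B1) and $\ept=o(T^\beta)$ to argue that $g_{0,t}$ is constant on $(\eta,\heta]$ and then express the partial-average differences through $|\heta-\eta|$ and the CUSUM identity, whereas the paper never uses local constancy and instead manipulates the CUSUMs algebraically with a Taylor expansion. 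Your route does produce the correct rates, and your diagnosis of the exponent is essentially right: the noise-only term would be dominated with $T^{3(1-\beta)}$, and it is the third term that forces $T^{5(1-\beta)}$ (in the second term $\ept$ appears linearly on both sides and cancels, so it imposes no constraint on $\ept$, only on $\uDel_\varphi$ versus $n^\varphi$).

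The genuine gap: every bound in your plan needs the maximising change-point $\eta$ to satisfy $(\eta-s+1)\wedge(e-\eta) > c_2T^\beta$, and you never establish this. Conditions (\ref{lem:cond:one})--(\ref{lem:cond:two}) do not give it for free: (\ref{lem:cond:two}) permits true change-points lying within $c_3\bar{\epsilon}_T$ of $s$ or of $e$ (those ``already detected'' as the segment endpoints), and a priori the maximiser of $|\cC_b(\{g_{0,t}\})|$ could be one of these rather than the well-separated change-point guaranteed by (\ref{lem:cond:one}). If it were, your factor $\sqrt{(e-s+1)/[(\eta-s+1)(e-\eta)]}$ would be of order $\bar{\epsilon}_T^{-1/2}$ rather than $T^{-\beta/2}$, the bound on the third right-hand term of Lemma \ref{lem:four} would degrade to $n^\varphi\log T\,|\cC_\eta|$, and one can check that under (B2) with $\beta>4/5$ the required domination then \emph{cannot} hold --- so this case must be excluded, not absorbed. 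The paper devotes an explicit step to this: under (\ref{lem:cond:one})--(\ref{lem:cond:two}) and (B1) one has the dichotomy $(\eta-s+1)\wedge(e-\eta) < c_3\ept$ or $> c_2T^\beta$, and in the first case the sup-norm bound $|g_{0,t}|\le 2\bar{f}n^\varphi$ from (A3) gives $|\cC_\eta(\{g_{0,t}\}_{t=s}^e)| \le 4\bar{f}n^\varphi(c_3\ept)^{1/2} < C_4\uDel_\varphi T^{\beta-1/2}$, contradicting (\ref{ineq:delta:bs}); hence the second case holds. You already have both ingredients (the lower bound on $|\cC_\eta|$ and the sup-norm bound on $g_{0,t}$), but this exclusion step is absent from your plan, and without it the exponent bookkeeping you describe does not close.
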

\begin{proof}
We adopt the notations from the proof of Lemma \ref{lem:four} with $g_{0, t}$ in place of $g_t$.

Recall that from Lemma \ref{lem:three}, $g_t^*$ needs to have its change-point $\eta$ 
coincide with one of the true change-points $\eta_{q_1+1}, \ldots, \eta_{q_2}$.
Trivially, such $\eta$ satisfies (\ref{lem:two:two:cond}) since
$|\cC_\eta(\{g_{0, t}\}_{t=s}^e)| = \max_{b\in[s, e)} |\cC_b(\{g_{0, t}\}_{t=s}^e)|$.
Under (\ref{lem:cond:one})--(\ref{lem:cond:two}),
we have either $(\eta-s+1) \wedge (e-\eta) < c_3\ept$ or
$(\eta-s+1) \wedge (e-\eta) > c_2T^\beta$.
If the former is the case, since $|g_{0, t}| \le 2\bar{f}n^{\varphi}$ uniformly in $t$ under (A3),
$|\cC_\eta(\{g_{0, t}\}_{t=s}^e)| \le 4\bar{f}n^{\varphi}(c_3\ept)^{1/2}  < C_4\uDel_\varphi T^{\beta-1/2}$, 
which leads to contradict (\ref{ineq:delta:bs})
and thus $(\eta-s+1) \wedge (e-\eta) > c_2T^\beta$.

Now we turn our attention to bound the terms $|III|$ and $|IV|$ in the presence of multiple change-points. Firstly,
$\textstyle{\vert(\heta-s+1)^{-1}\sum_{t=s}^{\heta}g_{0, t} - (\eta-s+1)^{-1}\sum_{t=s}^{\eta}g_{0, t}\vert}$
\begin{align*}
&= \textstyle{\left\vert \frac{1}{\heta-s+1}\sqrt{\frac{(\heta-s+1)(e-\heta)}{e-s+1}}\cC_{\heta}(\{g_{0, t}\}_{t=s}^e)
- \frac{1}{\eta-s+1}\sqrt{\frac{(\eta-s+1)(e-\eta)}{e-s+1}}\cC_\eta(\{g_{0, t}\}_{t=s}^e)\right\vert}
\\
&= \textstyle{\frac{1}{\sqrt{e-s+1}}\left\vert \sqrt{\frac{e-\heta}{\heta-s+1}}\{\cC_{\heta}(\{g_{0, t}\}_{t=s}^e)-\cC_{\eta}(\{g_{0, t}\}_{t=s}^e)\}\right.}
\\
& \qquad \qquad \textstyle{\left.- \left(\sqrt{\frac{e-\eta}{\eta-s+1}}-\sqrt{\frac{e-\heta}{\heta-s+1}}\right)\cC_\eta(\{g_{0, t}\}_{t=s}^e)\right\vert}
\\
&\le \textstyle{\sqrt{\frac{e-\eta}{(e-s+1)(\eta-s+1)}}
\left\{ \left\vert 1 - \frac{\sqrt{1-\frac{\heta-\eta}{e-\eta}}}{\sqrt{1+\frac{\heta-\eta}{\eta-s+1}}}\right\vert|\cC_{\eta}(\{g_{0, t}\}_{t=s}^e)|+ \right.}
\\
& \qquad \qquad \qquad \qquad \textstyle{\vert\cC_{\heta}(\{g_{0, t}\}_{t=s}^e)-\cC_{\eta}(\{g_{0, t}\}_{t=s}^e)\bigg\vert \bigg\}}
\\
&\le
\textstyle{\sqrt{\frac{e-\eta}{(e-s+1)(\eta-s+1)}}\left\{C_7T^{-\beta}\ept|\cC_{\eta}(\{g_{0, t}\}_{t=s}^e)|+
\left\vert\cC_{\wh\eta}(\{g_{0, t}\}_{t=s}^e)-\cC_{\eta}(\{g_{0, t}\}_{t=s}^e)\right\vert \right\}}
\end{align*}
for some fixed $C_7>0$, where the last inequality follows from the Taylor expansion. 
Since $T^{\beta-2} \le T^{-\beta}$, Lemma \ref{lem:two:two} leads to
$|III| \le 2n^\varphi\ept T^{-\beta}\log\,T|\cC_{\eta}(\{g_{0, t}\}_{t=s}^e)|$ (recall the notation from Lemma \ref{lem:four})
with probability tending to one. Similarly, 
$\textstyle{\vert(\heta-s+1)^{-1}\sum_{t=s}^{\heta}g_{0, t} - (e-\eta)^{-1}\sum_{t=\eta+1}^e g_{0, t}\vert}$
\begin{eqnarray*}
&=& \textstyle{\left\vert \frac{1}{\heta-s+1}\sqrt{\frac{(\heta-s+1)(e-\heta)}{e-s+1}}\cC_{\heta}(\{g_{0, t}\}_{t=s}^e)
+ \frac{1}{e-\eta}\sqrt{\frac{(\eta-s+1)(e-\eta)}{e-s+1}}\cC_\eta(\{g_{0, t}\}_{t=s}^e)\right\vert}
\\
&=& \textstyle{\frac{1}{\sqrt{e-s+1}}\left\vert \left(\sqrt{\frac{e-\heta}{\heta-s+1}}+\sqrt{\frac{\eta-s+1}{e-\eta}}\right)\cC_\eta(\{g_{0, t}\}_{t=s}^e)\right.}
\\
&& 
\qquad \qquad  \textstyle{ + \left.\sqrt{\frac{e-\heta}{\heta-s+1}}\{\cC_{\heta}(\{g_{0, t}\}_{t=s}^e)-\cC_{\eta}(\{g_{0, t}\}_{t=s}^e)\}\right\vert}
\\
&\le&
\textstyle{2\sqrt{\frac{e-s+1}{(\eta-s+1)(e-\eta)}}|\cC_\eta(\{g_{0, t}\}_{t=s}^e)|},
\end{eqnarray*}
and thus $|IV| \le 4n^\varphi\sqrt{c_0\ept}T^{-\beta/2}\log\,T|\cC_\eta(\{g_t\}_{t=s}^e)|$.
Plugging in the above bounds to the condition given in Lemma \ref{lem:four}, we obtain
\begin{align*}
& |\cC_\eta(\{g_{0, t}\}_{t=s}^e)||\cC_\eta(\{g_{0, t}\}_{t=s}^e)-\cC_{\heta}(\{g_{0, t}\}_{t=s}^e)| >
\\
& C_8n^\varphi\log\,T
\left\{ (\ept T^{-\beta}|\cC_{\eta}(\{g_{0, t}\}_{t=s}^e)|)
\vee (\sqrt{\ept}T^{-\beta/2} |\cC_\eta(\{g_{0, t}\}_{t=s}^e)|)
\vee (n^{\varphi}\log\,T) \right\}
\end{align*}
for some $C_8>0$. 
Due to (\ref{ineq:delta:bs}),
the above is met with $\ept = n^{2\varphi}\uDel_\varphi^{-2}T^{5(1-\beta)}\log^2\,T$.
\end{proof}

With the above lemmas, we are now ready to prove the theorem.
At the beginning of the DCBS algorithm, we have $s=1$ and $e=T$ for which both (\ref{lem:cond:one}) and (\ref{lem:cond:two}) hold,
and thus $\eta_{r_1}$ is estimated by $\heta_{r_1}$ within the distance of $\ept$ from a true change-point (Lemma \ref{lem:two:three}).
Then both of the two segments defined to the left and to the right of $\heta_{r_1}$ satisfy (\ref{lem:cond:one})--(\ref{lem:cond:two})
and so do all the subsequently defined segments, 
and therefore the same arguments apply to show the consistency of the estimates $\heta_{r_2}, \heta_{r_3}, \ldots, \heta_{r_N}$.

Once all the $N$ change-points are detected,
any segment $[s, e]$ determined by two adjacent $\heta_1, \ldots, \heta_N$ (including $\eta_0=1$ and $\eta_{N+1}=T$) satisfy either
\begin{itemize}
\item[(i)] $\exists\, 1 \le r \le N$ such that $r=q_1+1=q_2$ in (\ref{lem:cond:zero}) and
$(\eta_r-s+1) \wedge (e-\eta_r) \le c_6\ept$, or
\item[(ii)] $\exists\, 1 \le r \le N-1$ such that $r=q_1+1$ and $r+1=q_2$, and
$(\eta_r-s+1) \vee (e-\eta_{r+1}) \le c_6\ept$
\end{itemize}
for some positive constant $c_6$.
Under (i), let 
$$(\heta, \wh{m}) = \arg\max_{b\in[s, e]\setminus\cI_{s, e} \, 1 \le m \le n}\cD^\varphi_m(\{|\cX^{(j)}_{s, b, e}|\}_{j=1}^n),$$
and we adopt the notations $y_{0, t}$, $g_{0, y}$ and $\xi_{0, t}$ introduced at the beginning of this section with respect to $\heta$ and $\wh{m}$. Then,
\begin{eqnarray*}
|\cC_{\heta}(\{y_{0, t}\}_{t=s}^e)| &\le& |\cC_{\heta}(\{g_{0, t}\}_{t=s}^e)| + c_4n^\varphi\log\,T \le 
|\cC_{\eta_r}(\{g_{0, t}\}_{t=s}^e)| + c_4n^\varphi\log\,T
\\
&\le& 4\bar{f}n^\varphi(c_6\ept)^{1/2} + c_4n^\varphi\log\,T < \pi^\varphi_{n, T},
\end{eqnarray*}
from the fact that $|g_{0, t}| \le 2\bar{f}n^\varphi$.
A similarly conclusion can be drawn in the case of (ii) as well, and thus the proof is completed. 
\hfill $\square$

\subsection{Proof of Theorem \ref{thm:four}}
\label{sec:pf:thm:four}

We first prove (a) of Theorem \ref{thm:four}, which states that $\cDzh_m$ achieves consistency 
in identifying and locating the change-point under the single change-point scenario.
As in Section \ref{sec:pf:thm:one}, let 
$(\heta_1, \wh{m}_1) = \arg\max_{b\in[1, T]\setminus\cI_{1, T}, 1 \le m \le n} \cDzh_m(\{|\cX^{(j)}_{1, b, T}|\}_{j=1}^n)$,
and adopt the notations $\{k^0_1, \ldots, k^0_n\}$ and $i^0_j\in\{-1, 1\}$ therein.
We define an additive model $y_{0, t} = g_{0, t} + \xi_{0, t}$, where
\begin{eqnarray*}
y_{0, t} = \left\{\gamma_n+\sqrt{\frac{\wh{m}_1(2n-\wh{m}_1)}{2n}}\right\}
\left\{\frac{1}{\wh{m}_1}\sum_{j=1}^{\wh{m}_1}i^0_j \cdot \wt{x}_{k^0_j, t} -
\frac{1}{2n-\wh{m}_1}\sum_{j=\wh{m}_1+1}^{n}i^0_j \cdot \wt{x}_{k^0_j, t}\right\},
\end{eqnarray*}
and $g_{0, t}$ and $\xi_{0, t}$ are constructed analogously with $\wt{f}_{k^0_j, t}$ and $\wt\vep_{k^0_j, t}$ in place of $\wt{x}_{k^0_j, t}$, respectively.

Provided that $n^{-1/2}\gamma_n \to 0$ with $n \to \infty$, 
we have (\ref{xi:bound:one})--(\ref{xi:bound:two}) satisfied with
$\{\xi_{0, t}\}$ replacing $\{\xi_{r, t}\}$ with 
the bound of the order $n^{1/2}\log\,T$.
Also as in (\ref{ineq:delta}), it is shown that
$|\cC_{\heta_1}(\{g_{0, t}\}_{t=1}^T)| \ge
C_9(\gamma_n \vee m_1^{1/2})\wt{\delta}_1\sqrt{\frac{\eta_1(T-\eta_1)}{T}}$
for some constant $C_9>0$, and thus
$\cDzh_{\wh{m}_1}(\{|\cX^{(j)}_{1, \heta_1, T}|\}_{j=1}^n) > \pizh_{n, T}$ with probability tending to one,
which indicates the consistency of the test $\cTzh_{1, T} > \pizh_{n, T}$.
Further, the same arguments adopted in bounding $|\heta_1-\eta_1|$ for Theorem \ref{thm:one}
are applicable to show that $\ept = n\{(\gamma_n \vee m_1^{1/2})\wt{\delta}_1\}^{-2}\log^2\,T$.
Similarly, (\ref{ineq:delta:bs}) can be modified for $\cDzh_m$
and thus the proof of (b) follows verbatim. \hfill $\square$

\begin{supplement}[id=suppA]
  \stitle{Supplement to ``Change-point detection in panel data via double CUSUM statistic''}
  \slink[doi]{}
  \sdatatype{.pdf}
  \sdescription{We provide the detailed description of the Local Bootstrap and proof of some auxillary results.
  In addition, the tables and plots summarising the outcome
  of the simulation studies conducted in Section 5 are presented.}
\end{supplement}

\bibliographystyle{imsart-nameyear}
\bibliography{fbib}

\end{document}